\newcommand{\be}{\begin{equation}}
\newcommand{\ee}{\end{equation}}
\newcommand{\tr}{\mathop{\rm tr}\nolimits}
\def\ST{{\mathsf T}}
\def\SP{{\mathsf P}}
\def\ot{\otimes}
\def\la{\lambda}
\def\CT{{\mathcal T}}
\def\CE{{\mathcal E}}
\newcommand{\scp}[2]{\langle #1, #2 \rangle}
\theoremstyle{plain}
\newtheorem{defn}{\bfseries Definition} 
\newtheorem{thm}{\bfseries Theorem}
\newtheorem{propn}{\bfseries Proposition}
\newtheorem{lem}{\bfseries Lemma}
\newtheorem{cor}{\bfseries Corollary}
\theoremstyle{remark}
\newtheorem{rem}{\sc Remark} 
\newtheorem{ex}{\sc Example}
\begin{document}

\begin{center}
{\bf\large\sc 
Tensor space representations of Temperley--Lieb algebra \\[1mm]
 via orthogonal projections of rank $r \geq 1$ 
} 

\vspace*{4mm}
{ Andrei Bytsko ${}^{1,2}$  }

\vspace*{3mm}

{\small
\noindent
${}^{1}$ Department of Mathematics, University of Geneva, 
C.P. 64, 1211 Gen\`eve 4, Switzerland \\ 
${}^{2}$ Steklov Institute of Mathematics,
Russian Academy of Sciences,
Fontanka 27, 191023, St.~Petersburg, Russia}
 
\end{center}
\vspace*{2mm}

\begin{abstract}

Unitary representations of the Temperley--Lieb algebra $TL_N(Q)$
on the  tensor space $({\mathbb C^n})^{\otimes N}$ are considered.
Two criteria are given for determining when an orthogonal 
projection matrix $P$ of a rank $r$ gives rise to 
such a representation. The first of them is the equality of 
traces of certain matrices and the second is the unitary condition for 
a certain partitioned matrix.
Some estimates are obtained on the lower bound of $Q$ 
for a given dimension $n$ and rank~$r$. 
It is also shown that if $4r>n^2$, then $Q$ can take 
only a discrete set of values determined
by the value of~$n^2/r$. In particular, the only allowed value
of $Q$ for $n=r=2$ is $Q=\sqrt{2}$.
 Finally, properties of the Clebsch--Gordan coefficients of the 
quantum Hopf algebra $U_q(su_2)$ are used in order to
find all $r=1$ and $r=2$ unitary tensor space representations of $TL_N(Q)$ 
such that $Q$ depends continuously on $q$ and $P$ is
the projection in the tensor square of a simple $U_q(su_2)$
module on the subspace spanned by one or two joint
eigenvectors of the Casimir operator $C$ 
and the generator $K$ of the Cartan subalgebra.  
\end{abstract}

\section{Introduction}

The Temperley--Lieb (TL) algebras play an important role 
in the theory of subfactors, knot theory, and studies of 
discrete models in low dimensional physics.
The Temperley--Lieb  algebra of the type $A_{N-1}$
was introduced in~\cite{TeLi}. 
Recall its definition.

\begin{defn}
Given $Q \in {\mathbb C}$ and an integer $N \geq 2$,
the Temperley--Lieb algebra $TL_N(Q)$ 
is the unital algebra over $\mathbb C$ 
with generators $\ST_1,\,{\ldots}\,,\ST_{N-1}$
and the following defining relations:
\begin{eqnarray}
\label{TL1q}
{}&& \ST_k \, \ST_k = Q \, \ST_k \,, 
 \qquad\qquad\ \  \text{\rm for all}\ \ k  \,, \\
 \label{TL1}
{}&& \ST_k \, \ST_m = 
\ST_m \, \ST_k  \qquad\qquad\  
 \text{\rm for}\ \ |k-m| \geq 2 \,,  \\
\label{TL2}
{}&&  \ST_k \, \ST_{m}  \ST_k = \ST_k \,,  
 \qquad\qquad\,  \text{\rm for} \ \ |k-m|=1  \,.
\end{eqnarray} 
\end{defn}

The Temperley--Lieb algebra has a natural 
linear anti--involution:
\begin{eqnarray}
\label{TL0}
{}&&
 \ST_k^* = \ST_k  
 \qquad\qquad\qquad\quad\!  \text{\rm for all}\ \ k  \,.
\end{eqnarray}

In the present article, we will consider 
a particular class of representations of $TL_N(Q)$
on the tensor product space 
$\bigl({{\mathbb C}^n}\bigr)^{\otimes N}$.
We will denote by $M_n$ the ring of 
$n\,{\times}\,n$ complex matrices,
by $I_n$ the $n\,{\times}\,n$ 
identity matrix, and by $\otimes$ the 
Kronecker product. Given $X \in M_n$, $X^*$ 
will stand for its conjugate transpose.
 
\begin{defn}\label{DefTau}
Given $Q >0 $ and an integer $n \geq 2$,
a homomorphism
$\tau : \text{TL}_N(Q) \to  M_{n^N}$
is a unitary tensor space representation of 
$\text{TL}_N(Q)$ if 
\begin{equation}\label{TP}
   \tau \bigl(\ST_k\bigr) = 
 I_n^{\otimes (k-1)} \otimes T \otimes I_n^{\otimes (N-k-1)}  
 \,, \qquad\quad k=1,\ldots, N{-}1 \,,
\end{equation}
and matrix $T \in M_{n^2}$
satisfies the following relations:
\begin{align*} 
{}& (\mathrm{T1})     &&
  T^* = T ,&&  \\
{}&  (\mathrm{T2})  && 
	 T \, T = Q \, T   , && \\  
{}& (\mathrm{T3})     &&   
 T_{12} \, T_{23} \, T_{12} \, = T_{12}  \,,&&  \\
\label{tl2}
{}& (\mathrm{T4})   &&
 T_{23} \, T_{12} \, T_{23} \, = T_{23}  \,, &&    
\end{align*}
where 
$T_{12} \equiv T \,{\otimes}\, I_n$ and 
$T_{23} \equiv I_n \,{\otimes}\, T$.
\end{defn} 
 
Given  $T \,{\in}\, M_{n^2}$ 
satisfying (T2)--(T4) with $Q\,{>}\,0$,
set $R=q\,I_{n^2}-T$, where $q$ is a root of the 
equation $q + q^{-1} =Q$. Note that $R$ is
invertible, $R^{-1}=q^{-1} I_{n^2}-T$.
Consider the following map 
\hbox{${\mathcal R} : {\mathbb C} \to M_{n^2}$}:
\begin{equation}\label{RT2}
  {\mathcal R}(u) = \begin{cases}
  u\, R -R^{-1} \qquad\qquad \text{if}\ \,
  Q \neq 2 \,,  \\
  u\, R + I_{n^2}  , 
  \qquad\qquad \text{if}\ \, Q=2 \,.
  \end{cases}
\end{equation}

One of the main motivations to study
tensor space representations of the TL algebra
is  the following well--known fact: the map (\ref{RT2})
provides a non--trivial example of an {\em R--matrix}, i.e.
a solution to the {\em Yang--Baxter equation}:
\begin{equation}\label{YB}
  {\mathcal R}_{12}(u)\, {\mathcal R}_{23}(u \bullet v) \,
   {\mathcal R}_{12}(v) = {\mathcal R}_{23}(v) \, 
   {\mathcal R}_{12}(u \bullet v) \, {\mathcal R}_{23}(u)\,,
\end{equation} 
where $\bullet$ stands for summation if $Q=2$ and
for multiplication otherwise.

In its turn, an R--matrix is the cornerstone for 
building quantum  integrable models known as 
{\em spin chains}, see, e.g.~\cite{Fad1}. 
{}From this perspective, the most interesting tensor space
representations of $TL_N(Q)$ are those with varying $Q$,
i.e. such representations where $T$ depends on some
parameters and $Q$  varies within a certain range 
when the parameters change. Indeed, such a representation
allows us to construct a parametric family of R--matrices and,
therefore, of integrable models.

\vspace{1mm}{\small
\begin{ex}\label{ex1}
The most known example of such a type is given by 
\begin{equation}\label{xxz1}
 T(q;\zeta) =   
 \begin{pmatrix}
  0 & 0 & 0 & 0 \\
  0 & q & \zeta & 0 \\
  0 & \zeta^{-1} & q^{-1} & 0 \\
  0 & 0 & 0 & 0
 \end{pmatrix} 
 ,  \qquad\qquad  Q=q+q^{-1} .
\end{equation}
For $q>0$ and $|\zeta|=1$, this $T(q;\zeta)$ defines 
a unitary tensor space representation of $TL_N(Q)$. 
For $q \neq 1$, the spin chain corresponding
to $T(q; \pm 1)$ is called the XXZ model and, for
$q=1$, it is the Heisenberg spin chain.
\end{ex}
}

\vspace{1mm}
\begin{rem} \label{QEXT2}
Let a family of unitary tensor space representation of $TL_N(Q)$
be defined by $T(q)$ which satisfies (T1)--(T4)
when $q$ varies continuously within a subset 
$S \subset {\mathbb C}$. If entries of $T(q)$ are rational 
functions in~$q$ with poles contained in
a subset $\tilde{S}$, then equations (T2)--(T4) imply that certain 
functions rational in $q$ vanish on~$S$. 
But then these functions must vanish identically.
Therefore, for
$q \in {\mathbb C} \setminus (S \cup \tilde{S})$, $T(q)$
will not be Hermitian but it will satisfy (T2)--(T4) 
(where $Q$ may not be real) and thus will define a family of
non--unitary tensor space representation of $TL_N(Q)$. 
\end{rem}

\vspace*{1mm}{\small 
\begin{ex}\label{QEXT}
 For $q \in {\mathbb C} \setminus {\mathbb R}$ and
$\zeta \in {\mathbb C} \setminus \{0\}$,
$T(q;\zeta)$ given by (\ref{xxz1}) is not Hermitian but 
it satisfies relations (T2)--(T4).
\end{ex}
}

\vspace{1mm}
\begin{rem} 
Unitary tensor space representations of $TL_N(Q)$
that for some values of parameters extend to
non--unitary ones can be used to construct non--Hermitian
operators with real spectrum. 
For instance, if $T(q;\zeta)$ is given by (\ref{xxz1}), then
$H= T_{12}(q;\zeta) - T_{23}(q;\zeta)$ is Hermitian 
only for real $q$ but its spectrum remains real
also for $q=e^{i \gamma}$, $\gamma \in \mathbb R$
providing that $4 \cos^2 \gamma \geq 1$. 
See \cite{By1} for further examples of such a type.
\end{rem}
\vspace{1mm}

The three important characteristics of a 
tensor space representation of $TL_N(Q)$ are the value of $Q$, 
the dimension $n$ which determines the size of $T$, and
the rank of $T$, $r \equiv \text{rank} (T)$.
In what follows, somewhat abusing the terminology,
we will refer to $r$ as simply the rank of 
a representation.

In the rank one case, properties of spin chains based on 
the TL R--matrices (\ref{RT2}), in particular, 
spectra of the TL Hamiltonians $H=T_{12}+T_{23}+\ldots$ have
been studied by a large number of authors, see e.g.
\cite{AuKl,BaBa,GiPy,Kul,MaSa}. These studies
used mainly the representation determined by 
$T(q,-1)$ or its higher spin analogue (cf. Section~\ref{UQ}). 
This representation enjoys the great popularity because on the one hand
it is a  representation with varying $Q$ and
thus it can be used to study parametric families of Hamiltonians and
on the other hand it is related to the quantum Hopf algebra $U_q(su_2)$
(cf.~Section~4).

In the higher rank case, $r \geq 2$, some tensor space 
representations of $TL_N(Q)$ were constructed in \cite{Kul3,WX}
for $r=n \geq 2$ but they correspond only 
to a specific value of $Q$, namely $Q=\sqrt{n}$.
And, to the best of author's knowledge, spin chains 
based on higher rank tensor space representations
of $TL_N(Q)$ have not yet been studied.

The goal of the present article is to consider 
certain problems related to construction of 
unitary tensor space representations of $TL_N(Q)$
of an arbitrary rank~$r$. 

The paper is organized as follows. In Section~2,
we first comment on a certain redundancy in 
equations (T1)--(T4). Then we give two criteria
for determining when an orthogonal 
projection matrix $P$ gives rise to a 
unitary tensor space representation of $TL_N(Q)$
(matrix $T$ in (T1)--(T4) is always a scalar
multiple of a projection matrix). The first
of them is the equality of traces of certain matrices 
and the second is the unitary condition for 
a certain partitioned matrix.
In Section~3, we give some estimates on the 
lower bound of $Q$ if the dimension $n$ and
the rank $r$ are given. In particular,
we show that $Q \geq n/r$ and that this
yields the sharp lower bound if $r=1$.
Using the Jones--Wenzl projector, we show 
that if $r>n^2/4$, then $Q$ can take
only a discrete set of values determined
by the value of~$n^2/r$. It follows, in 
particular, that the only allowed value
of $Q$ for $n=r=2$ is $Q=\sqrt{2}$.
At the end of the section, the estimates on
$Q$ are sharpened for some special cases. 
In Section~4, we use the results of Section~2
as well as some properties of the 
Clebsch--Gordan coefficients of the quantum Hopf algebra
$U_q(su_2)$ (for a generic positive $q$ and for $q=1$) 
in order to find all varying $Q$ unitary 
tensor space representations of $TL_N(Q)$ 
of rank one and rank two where $P$ is the orthogonal projection 
in the tensor product of two spin $S$ representations 
of~$U_q(su_2)$ on a subspace spanned by, respectively, 
one or two joint eigenvectors of $C$ and $K$ (the Casimir 
operator and the generator of the Cartan subalgebra).
 The proofs of all statements are given in 
the Appendix.

All varying $Q$ rank two representations found in Section~4
correspond to $S=1$, that is $n=3$. In the subsequent
article \cite{By2}, we will construct families
of varying $Q$ rank two unitary tensor space representations 
of $TL_N(Q)$ for $n=3k$ and $n=4k$, $k \in {\mathbb N}$
and will also give a complete classification of
representations of rank one.

\section{Criteria for an orthogonal projection}

\subsection{Remarks on equations (T1)--(T4)}

We commence with the following simple remark. 
If $T, T' \in M_{n^2}$ are solutions to (T1)--(T4) 
corresponding to the same value of $Q$ and
$\text{rank}(T)=\text{rank}(T')$,
then, by the spectral theorem, 
these matrices are unitarily similar, $T'=GTG^*$,
where $G$ is unitary.
But the converse is not true: if $T \in M_{n^2}$ is 
a solution to (T1)--(T4) and $G\in U(n^2)$, then 
$T'=GTG^*$ is not in general a solution to (T3)--(T4). 
However, if $G=g\,{\otimes}\,g$, $g \in U(n)$, 
then the unitary similarity 
transformation
\begin{equation}\label{TGG}
  T' = \bigl(g \otimes g \bigr) \, T \,
   \bigl(g^* \otimes g^* \bigr) \,,
\end{equation}
does send a solution $T$ to equations (T1)--(T4)
to another solution, $T'$, to these equations.
Clearly, $T$ and $T'$ related as in (\ref{TGG}) have 
equal ranks and correspond to the same value of~$Q$.
It is thus natural to study solutions to
(T1)--(T4) up to the unitary equivalence~(\ref{TGG}). 

Next, let us remark that relations (T1)--(T4) 
in the definition of a tensor space 
representation are somewhat redundant.
 
Here and below $\tr$ denotes the 
standard matrix trace. 

\begin{propn}\label{TTQT}
a) If $T \in {M}_{n^2}$
satisfies relations (T3) and (T4),
then $T^2=Q\, T$, where $Q \in {\mathbb C}\setminus\{0\}$
if\, $\tr T \neq 0$ and $Q=0$ if\, $\tr T = 0$.
 \\[1mm]
b) If $T \in {M}_{n^2}$ 
satisfies relation (T1) and 
any two of the three relations (T2)--(T4),
then $T$ satisfies all the relations (T1)--(T4).
\end{propn}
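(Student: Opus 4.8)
The plan is to work throughout on the tensor square $(\mathbb C^n)^{\otimes 2}$ and to exploit the fact that (T3) and (T4) are interchanged by the permutation reversing the order of the three tensor factors, so that every consequence of (T3) has a mirror consequence coming from (T4).

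For part (a), the dichotomy concerning $Q$ is the easy half: once it is known that $T^2=Q\,T$ for some scalar $Q$, taking the trace gives $\tr(T^2)=Q\,\tr T$, while the relation $T^2=Q\,T$ forces $T$ to be diagonalizable with spectrum in $\{0,Q\}$, so that $\tr T=Q\cdot\mathrm{rank}(T)$; this is nonzero exactly when $Q\neq0$ and $T\neq0$, whereas $T^2=Q\,T$ with $\tr T=0$ forces $Q=0$. Thus the entire content is to show $T^2\propto T$. First I would take the partial trace of (T3) over the third tensor factor: since $T_{12}$ acts as the identity there, the identity collapses on $M_{n^2}$ to $T\,(I_n\otimes\tau)\,T=n\,T$, where $\tau:=\tr_2 T$ is the partial trace of $T$ over its second factor, and the mirror of (T4) gives $T\,(\tau'\otimes I_n)\,T=n\,T$ with $\tau':=\tr_1 T$. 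Separately, right-multiplying (T3) by $T_{23}$ and invoking (T4) shows that $T_{12}T_{23}$ (and likewise $T_{23}T_{12}$) is idempotent, with image equal to $\mathrm{im}(T_{12})$ and hence of rank $n\cdot\mathrm{rank}(T)$.

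The hard part of (a) is to upgrade these relations to exact proportionality: the matrices $\tau,\tau'$ are genuinely non-scalar in general (already for the solution (\ref{xxz1}) one has $\tau=\mathrm{diag}(q,q^{-1})$), so the traced identities are strictly weaker than $T^2=Q\,T$. I expect this step to require using (T3) and (T4) beyond their traced forms -- for instance, combining the idempotency of $T_{12}T_{23}$ with a rank and eigenvalue count that forces all nonzero eigenvalues of $T$ to coincide -- and this is the main obstacle.

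For part (b), I would first observe that the case in which the two assumed relations are (T3) and (T4) is exactly part (a) (and does not even use (T1)), and that the two remaining cases are interchanged by the factor-reversing flip, so it suffices to prove that (T1), (T2), (T3) imply (T4). Using (T1) and (T2) with $Q>0$, write $T=Q\,P$ with $P=P^*$ and $P^2=P$ an orthogonal projection, so (T3) becomes $P_{12}P_{23}P_{12}=Q^{-2}P_{12}$ and the target (T4) becomes $P_{23}P_{12}P_{23}=Q^{-2}P_{23}$. Set $M:=P_{12}P_{23}$, so that $M^*=P_{23}P_{12}$. Using $P_{23}^2=P_{23}$ and $P_{12}^2=P_{12}$ one gets $MM^*=P_{12}P_{23}P_{12}=Q^{-2}P_{12}$ and $M^*M=P_{23}P_{12}P_{23}$. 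Since $MM^*$ and $M^*M$ share the same nonzero spectrum, $M^*M$ is a positive matrix with spectrum in $\{0,Q^{-2}\}$ and hence equals $Q^{-2}\Pi$ for the orthogonal projection $\Pi$ onto its range. I would then identify $\Pi$ with $P_{23}$ by a rank count: $\mathrm{range}(M^*M)=\mathrm{range}(M^*)\subseteq\mathrm{range}(P_{23})$, while $\mathrm{rank}(M^*M)=\mathrm{rank}(MM^*)=\mathrm{rank}(P_{12})=n\cdot\mathrm{rank}(T)=\mathrm{rank}(P_{23})$; equality of these ranks forces $\mathrm{range}(\Pi)=\mathrm{range}(P_{23})$, hence $\Pi=P_{23}$ and $M^*M=Q^{-2}P_{23}$, which is (T4). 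Here the argument is complete modulo the routine rank bookkeeping, so the genuine difficulty of the Proposition is concentrated in the proportionality step of part (a).
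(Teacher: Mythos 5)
Your part (a) is incomplete, and the gap is exactly where you flag it: you derive the traced identities $T(I_n\otimes\tau)T=nT$, $T(\tau'\otimes I_n)T=nT$ and the idempotency of $T_{12}T_{23}$, but none of these yields $T^2\propto T$, and that proportionality is the entire content of the statement. The idea you are missing is to leave the three-fold tensor product. The paper works in $M_{n^4}$ with $T_{12}$, $T_{23}$, $T_{34}$, and uses the commutativity of $T_{12}$ with $T_{34}$ together with (T3) and (T4) to establish
\[
T^2\otimes T^{m} \;=\; T\otimes T^{m+1}\qquad\text{for every integer } m\geq 1 .
\]
Taking the partial trace over the last two factors gives $T^2\,\tr(T^m)=T\,\tr(T^{m+1})$. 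If $\tr T\neq 0$, the case $m=1$ yields $T^2=QT$ with $Q=\tr(T^2)/\tr T$, and $Q\neq 0$ because otherwise $T^2=0$ while $T$ has a nonzero eigenvalue. If $\tr T=0$, the recursion forces $\tr(T^m)=0$ for all $m$, so $T$ is nilpotent, and the displayed tensor identity (whose left side $T^2\otimes T^{k-1}$ is nonzero whenever $T^2\ne0$ and $T^{k-1}\neq 0$) rules out nilpotency of order $k>2$, giving $T^2=0$. Your bookkeeping relating the sign of $Q$ to $\tr T$ is fine once the proportionality is granted, but as submitted the proof of (a) does not exist.

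Part (b) is correct and takes a genuinely different route from the paper. Your reduction to proving that (T1), (T2), (T3) imply (T4) via the factor-reversing flip is legitimate, and your argument --- comparing $MM^*=Q^{-2}P_{12}$ with $M^*M=P_{23}P_{12}P_{23}$ for $M=P_{12}P_{23}$, using equality of nonzero spectra to write $M^*M$ as $Q^{-2}$ times the orthogonal projection onto $\mathrm{range}(M^*)\subseteq\mathrm{range}(P_{23})$, and closing with the rank count $\mathrm{rank}(M^*M)=\mathrm{rank}(P_{12})=n\,\mathrm{rank}(T)=\mathrm{rank}(P_{23})$ --- is sound. The paper instead sets $H=T_{23}T_{12}T_{23}-T_{23}$ and shows $\tr_{123}(H^2)=0$ by cyclic-trace manipulations, so $H=0$ by Hermiticity. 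Your version is more structural (it identifies the projection geometrically), the paper's is a shorter mechanical computation; both rely on the same Hermiticity input. Do dispose explicitly of the degenerate case $Q=0$, where (T1) and (T2) already force $T=0$.
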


{\small
\begin{ex}
For $T(q;\zeta)$ given by (\ref{xxz1}), 
we have  $\tr T(q;\zeta) = (q+q^{-1})$.
$T(q;\zeta)$~is a scalar multiple of a rank one 
projection if $q^2 \neq -1$ and  
it is a nilpotent of order two if $q^2 = -1$. 
(The spin chain corresponding to the nilpotent case 
is known as the XX0 or XX model.)
\end{ex}
} 

\subsection{Trace conditions}

Every $T \in M_{n^2}$ that satisfies relations 
$\mathrm{(T1)}$ and 
$\mathrm{(T2)}$ is a scalar multiple
of an orthogonal projection, i.e.
$T=Q P$, where $P \in M_{n^2}$, $P=P^2=P^*$.
Without a loss of generality, we always assume that
$Q>0$ (because a negative $Q$ can be made positive
by the trivial transformation $T \to -T$).

If the rank of~$P$ is $r$, then $\tr\, T = Q\,r$. Furthermore, 
\begin{equation}\label{tr123}
\tr_{123}\,(T_{12}) = \tr_{123}\,(T_{23}) = Q\,r\,n \,,
\end{equation}
where $\tr_{123}$ stands for the matrix trace in
$M_{n^3}$.

The problem of constructing unitary
tensor space representations of $TL_N(Q)$,
that is finding solutions $T$ to
equations (T1)--(T4), amounts to finding 
suitable orthogonal projections in 
$({\mathbb C}^n)^{\otimes 2}$.
Remarkably, such projections can be characterised
by just a single scalar condition.
 
\begin{thm}\label{trTTb}
Let $P \in {M}_{n^2}$
be an orthogonal projection of rank~$r \geq 1$
and suppose that $P_{12} P_{23} \ne 0$.
Then a solution to (T3)--(T4) of the form
$T=QP$, where $Q>0$, exists if and only if
the following equality holds:
\begin{equation}\label{trP} 
   \bigl( \tr_{123}\,(P_{12} P_{23}) \bigr)^2 =  
   n\,r \, \tr_{123}\, (P_{12} P_{23} )^2  \,.
\end{equation}
If equality (\ref{trP}) holds, then 
relations (T1)--(T4) are satisfied for $T=QP$, where
\begin{equation}\label{QP} 
  Q^2= \frac{n\,r}{\tr_{123}\,(P_{12} P_{23}) }  \,.
\end{equation}
\end{thm}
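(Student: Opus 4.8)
The plan is to reduce everything to a single compression argument for the operator $X := P_{12}P_{23}P_{12}$. Since $T = QP$ with $P$ an orthogonal projection, relation (T1) ($T^*=T$) and relation (T2) ($T^2 = Q^2P = QT$) hold automatically for every $Q$. Hence, by part b) of Proposition~\ref{TTQT}, it is enough to produce a $Q>0$ for which (T3) holds; relation (T4) then follows for free. Writing $T_{12}=QP_{12}$ and $T_{23}=QP_{23}$, relation (T3) becomes $Q^2 P_{12}P_{23}P_{12}=P_{12}$, i.e. $X=Q^{-2}P_{12}$. Thus the existence of the desired $Q$ is equivalent to $X$ being a positive scalar multiple of $P_{12}$, and the whole statement reduces to characterising this property by the trace identity (\ref{trP}).

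First I would record the structural properties of $X$. From $P_{12}^2=P_{12}$, $P_{23}^2=P_{23}$ and Hermiticity one gets $X=(P_{23}P_{12})^*(P_{23}P_{12})$, so $X$ is positive semidefinite; moreover $X\neq 0$ because $(P_{23}P_{12})^*=P_{12}P_{23}\neq 0$ by hypothesis. The operator $X$ annihilates $\ker P_{12}$ and maps $V:=\mathrm{range}\,P_{12}$ into itself, with $\dim V=\tr_{123}\,P_{12}=rn$. Using cyclicity of the trace together with $P_{12}^2=P_{12}$, I would check the two identities $\tr_{123}\,X=\tr_{123}(P_{12}P_{23})$ and $\tr_{123}(X^2)=\tr_{123}\bigl((P_{12}P_{23})^2\bigr)$, which recast (\ref{trP}) as $(\tr_{123}\,X)^2=nr\,\tr_{123}(X^2)$.

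For the forward implication, if $X=Q^{-2}P_{12}$ then $\tr_{123}\,X=Q^{-2}rn$ and $\tr_{123}(X^2)=Q^{-4}rn$, and substituting gives (\ref{trP}) at once. For the converse, let $\lambda_1,\dots,\lambda_{rn}$ be the eigenvalues (with multiplicity) of the restriction $X|_V$, which is the compression of the projection $P_{23}$ to $V$; since $X$ vanishes on $V^\perp$ we have $\tr_{123}\,X=\sum_i\lambda_i$ and $\tr_{123}(X^2)=\sum_i\lambda_i^2$. The Cauchy--Schwarz inequality $\bigl(\sum_{i=1}^{rn}\lambda_i\bigr)^2\le rn\sum_{i=1}^{rn}\lambda_i^2$ holds with equality if and only if all the $\lambda_i$ coincide; but (\ref{trP}) is precisely this equality, so $X|_V=c\,I_V$ for a single $c$, that is $X=c\,P_{12}$. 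Here $c=\tr_{123}(P_{12}P_{23})/(rn)>0$ since $X\neq 0$, so setting $Q=c^{-1/2}$ yields (T3) with $Q^2=nr/\tr_{123}(P_{12}P_{23})$, which is (\ref{QP}); Proposition~\ref{TTQT}b) then upgrades this to all of (T1)--(T4).

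The delicate point is the eigenvalue count in the Cauchy--Schwarz step: the factor $nr$ in (\ref{trP}) must be matched against the number $rn=\dim V$ of eigenvalues of $X|_V$, so it is essential to sum over all of them, including any zero eigenvalues lying inside $V$, rather than over the nonzero spectrum of $X$ only. It is exactly this matching that turns the equality case of Cauchy--Schwarz into the conclusion $X=c\,P_{12}$ and hence pins down $Q$; everything else is bookkeeping with traces of projections.
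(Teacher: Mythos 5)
Your proof is correct and is essentially the paper's own argument in different clothing: the paper forms $H(\alpha)=P_{12}P_{23}P_{12}-\alpha P_{12}$ and observes that the quadratic $f(\alpha)=\tr_{123}(H(\alpha)^2)=\alpha^2 nr-2\alpha\,\tr_{123}(P_{12}P_{23})+\tr_{123}(P_{12}P_{23})^2$ is nonnegative with a zero iff its discriminant vanishes, which is exactly your Cauchy--Schwarz equality case for the $rn$ eigenvalues of $X$ on $\mathrm{range}\,P_{12}$ (the paper itself notes this equivalence in the proof of Theorem~\ref{PROPTLW}). The reduction to (T3) via Proposition~\ref{TTQT}b), the identification $\alpha_0=\tr_{123}(P_{12}P_{23})/(nr)$, and the use of $P_{12}P_{23}\neq 0$ to get positivity all match the paper.
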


As a consequence, matrix equations (T3)--(T4) 
in the definition of a unitary tensor space 
representation of $TL_N(Q)$ can be
replaced by scalar equations as follows.

\begin{propn}\label{trTT}
Suppose that $T \in {M}_{n^2}$ has rank $r$ 
and satisfies relations (T1) and (T2) with $Q>0$.
Then the following statements are equivalent:\\[0.5mm]
a)  $T$ satisfies  relations (T3)--(T4).\\[0.5mm]
b) $\tr_{123}\,(T_{12} T_{23}) \neq 0$ and
the following equality holds:
\begin{equation}\label{trtttt} 
   \bigl( \tr_{123}\,(T_{12} T_{23}) \bigr)^2 =  
   n\,r \, \tr_{123}\, (T_{12} T_{23} )^2  \,.
\end{equation}
c) The following equalities hold:
\begin{equation}\label{trt} 
   \tr_{123}\,(T_{12} T_{23}) = n\,r \,, \qquad
   \tr_{123}\, (T_{12} T_{23} )^2 = n\,r \,.
\end{equation}
\end{propn}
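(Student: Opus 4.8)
The plan is to fix the reduction $T = QP$ and then prove the three statements equivalent by the cyclic chain (a) $\Rightarrow$ (c) $\Rightarrow$ (b) $\Rightarrow$ (a), with the last implication carrying all the weight via Theorem~\ref{trTTb}. Since $T$ satisfies (T1) and (T2) with $Q>0$, it is a scalar multiple of an orthogonal projection, $T=QP$ with $P=P^2=P^*$ of rank $r$; hence $T_{12}T_{23}=Q^2 P_{12}P_{23}$, and writing $A=\tr_{123}(P_{12}P_{23})$ and $B=\tr_{123}(P_{12}P_{23})^2$ I get $\tr_{123}(T_{12}T_{23})=Q^2 A$ and $\tr_{123}(T_{12}T_{23})^2=Q^4 B$. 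I would record at the outset the identities $\tr_{123}(P_{12}P_{23}P_{12})=A$ and $\tr_{123}((P_{12}P_{23}P_{12})^2)=B$, obtained from cyclicity of the trace together with $P_{12}^2=P_{12}$; these let me pass freely between the two-factor and the sandwiched forms. I would also note $A=\|P_{23}P_{12}\|_{\mathrm F}^2\ge 0$, so that $A\neq 0$ is equivalent to $P_{12}P_{23}\neq 0$.

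For (a) $\Rightarrow$ (c): assuming (T3) and substituting $T=QP$ gives $Q^2 P_{12}P_{23}P_{12}=P_{12}$. Taking $\tr_{123}$ and using $\tr_{123}(P_{12})=\tr(P)\,\tr(I_n)=nr$ yields $Q^2 A=nr$, i.e. $\tr_{123}(T_{12}T_{23})=nr$. Right-multiplying the same operator identity by $P_{23}$ and taking the trace gives $Q^2 B=A$, whence $Q^4 B=Q^2 A=nr$, i.e. $\tr_{123}(T_{12}T_{23})^2=nr$; this is (c). (The symmetric computation from (T4) reproduces the same scalar relations, so no new information is lost by using only (T3) here.) The implication (c) $\Rightarrow$ (b) is then immediate: $\tr_{123}(T_{12}T_{23})=nr\neq 0$, and $(nr)^2=nr\cdot nr=nr\,\tr_{123}(T_{12}T_{23})^2$, so \rf{trtttt} holds.

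The substantive step is (b) $\Rightarrow$ (a). From (b), $\tr_{123}(T_{12}T_{23})=Q^2 A\neq 0$ forces $A\neq 0$, hence $P_{12}P_{23}\neq 0$; and dividing \rf{trtttt} by $Q^4\neq 0$ turns it into $A^2=nrB$, which is exactly the trace equality \rf{trP} of Theorem~\ref{trTTb} for the projection $P$. Theorem~\ref{trTTb} then delivers that $T=QP$ satisfies (T3)--(T4), giving (a). I expect the main obstacle to be precisely this invocation: one must be sure the \emph{scalar} identity really forces the \emph{operator} identities and not merely their traces, and that the $Q$ singled out by the trace conditions matches the scalar in $T=QP$. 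This is the content Theorem~\ref{trTTb} supplies: morally, \rf{trP} is the equality case of the Cauchy--Schwarz estimate $A^2\le(\mathrm{rank}\,P_{12})\,B=nr\,B$ applied to the positive operator $P_{12}P_{23}P_{12}$, and equality forces all its $nr$ nonzero eigenvalues to coincide, i.e. $P_{12}P_{23}P_{12}=\frac{A}{nr}P_{12}$, which becomes (T3) once $Q^2 A=nr$; a symmetric argument gives (T4). I would therefore keep the bookkeeping of $Q$ explicit throughout to confirm this normalization is the one produced by the cyclic chain.
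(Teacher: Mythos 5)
Your overall strategy coincides with the paper's: reduce to $T=QP$ via (T1)--(T2), run the cycle a) $\Rightarrow$ c) $\Rightarrow$ b) $\Rightarrow$ a), and let Theorem~\ref{trTTb} carry the direction b) $\Rightarrow$ a). Your computations for a) $\Rightarrow$ c) (trace of the operator identity, then right--multiplication by $P_{23}$ and another trace) and for c) $\Rightarrow$ b) are correct and are, up to working with $P$ instead of $T$, the same manipulations the paper performs in (\ref{ac1})--(\ref{ac2}).

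The worry you voice at the end --- whether the $Q$ produced by Theorem~\ref{trTTb} matches the scalar already fixed by (T2) --- is the one substantive issue, and you do not close it. Condition (b) is homogeneous of degree four in $T$, hence invariant under $T \mapsto \lambda T$, whereas (a) and (c) are not. Concretely: let $P$ be the rank--one singlet projection for $n=2$ (so $P_{12}P_{23}P_{12}=\tfrac14 P_{12}$) and take $T=5P$. Then $T$ satisfies (T1)--(T2) with $Q=5$ and (b) holds, since all factors of $Q$ cancel from (\ref{trtttt}); but $T_{12}T_{23}T_{12}=\tfrac{25}{4}\,T_{12}\neq T_{12}$, so (a) fails, and $\tr_{123}(T_{12}T_{23})=\tfrac{25}{2}\neq nr$, so (c) fails. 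Theorem~\ref{trTTb} only guarantees that \emph{some} $Q'$ with $Q'^2=nr/\tr_{123}(P_{12}P_{23})$ makes $Q'P$ a solution; nothing in (b) forces the given $Q$ to equal that $Q'$. To be fair, the paper's own proof of b) $\Rightarrow$ a) makes exactly the same leap (``the hypotheses listed in b) imply \ldots the hypotheses of Theorem~\ref{trTTb}. Hence b) implies a).''), so your argument is faithful to the source; but the equivalence of (b) with (a) and (c) is only true once one either normalizes $Q$ as in (\ref{QP}) or reads (b) as a statement about $P$ rather than about the given $T$. Your instinct to ``keep the bookkeeping of $Q$ explicit'' was exactly right --- carrying it out shows that the normalization must be added as a hypothesis, not derived.
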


Theorem~\ref{trTTb} can be used, in particular,
in order to search for solutions to (T1)--(T4)
numerically. For this purpose, we have to
choose some orthonormal basis 
$\{ y_a \}_{a=1}^{n^2}$ of 
$\bigl({\mathbb C}^n\bigr)^{\otimes 2}$
and then test condition (\ref{trP}) for all
projections of the form 
$P=\sum_{a=1}^{n^2} \varepsilon_a P_a$,
where $\varepsilon_a$ is 0 or 1 and 
$P_a$ is the projection on the one--dimensional
subspace spanned by~$y_a$.
Moreover, if we are interested in
representations with varying $Q$,
it suffices to check only the cases where
$\sum_{a=1}^{n^2} \varepsilon_a \leq n^2/4$
(see Section~3).

\subsection{Unitarity condition}

Another way to characterize an orthogonal
projection $P$ which gives rise to a unitary
tensor space representation of $TL_N(Q)$ is
to find a condition on the subspace on which
$P$ projects.

Let $\scp{\,}{}$ denote the standard inner product 
on~${\mathbb C}^n$ and let ${\CE}=\{e_a\}_{a=1}^n$ be 
a  basis of ${\mathbb C}^n$ orthonormal w.r.t. $\scp{\,}{}$.
Then a vector $v \in {\mathbb C}^n \otimes {\mathbb C}^n$ 
is determined by 
the matrix $V$ of its coefficients,
$v  = \sum_{a,b=1}^n V_{ab}\, e_a \otimes e_b$.
Under a unitary change of the basis, 
$e_a = \sum_{b=1}^n {g_{ab} e'_b}$, 
$g \in U(n)$, the matrix of coefficients  
transforms as follows:
\begin{equation}\label{gVg}
 V' = g^t \, V \, g .
\end{equation}
Here and below we use the following notations for
matrix operations: $\bar{X}$, $X^t$, and $X^*$
stand, respectively, for the complex conjugate, 
the transpose, and the conjugate transpose 
of a matrix~$X$. 

Given an $r$--dimensional vector subspace $\CT \subset 
 {{\mathbb C}^n} \,{\otimes}\, {{\mathbb C}^n}$, 
we will write  ${\CT} \sim  \{V_1,\ldots,V_r \}$ 
if $V_1,\ldots,V_r$ are the matrices of coefficients
of an orthonormal set of vectors $v_1,\,{\ldots}\,,v_r$
which is a spanning set of~$\CT$.
The orthonormality condition implies that
\begin{equation}\label{vv}
 \scp{v_s}{v_m} = \tr \bigl( V_s^*  V_m \bigr) 
  =\delta_{sm} \,.
\end{equation}
The  orthogonal projection onto  
$\CT$ is given by
$P_\CT = \sum_{s=1}^r v_s \scp{v_s}{\cdot\,}$.
In the basis $\CE$,  
the operator $e_a \scp{e_b}{\cdot\,}$ 
is represented by the matrix
$E_{ab} \in {M}_{n}$ such that
$\bigl(E_{ab}\bigr)_{ij} = \delta_{ai} \delta_{bj}$.
Therefore, the projection $P_\CT$ is represented 
by the following matrix: 
\begin{equation}\label{PiTau}
  P_\CT = 
 \sum_{s=1}^r \sum_{a,b,c,d=1}^n (V_s)_{ab} \, 
 (\bar{V}_s)_{cd} \  E_{ac} \otimes E_{bd} \,,
\end{equation}
where $\otimes$ stands for the Kronecker product.

Each $T \in M_{n^2}$ which satisfies (T1)--(T2) 
and has rank $r$ is determined by a set of matrices 
$\{V_1,\ldots,V_r \}$ such that $T=Q\,P_{\CT}$,
where ${\CT} \sim  \{V_1,\ldots,V_r \}$.  

\vspace*{1mm}{\small
\begin{ex}
For $T(q;\zeta)$ given by (\ref{xxz1}),
 we have $T(q;\zeta)=(q+q^{-1})P_\CT$ 
with $\CT \sim \{V\}$,
where 
\begin{equation}\label{vTq}
 V = \frac{1}{\sqrt{q^2+1}} 
 \biggl( \begin{matrix}
  0 & \zeta\, q \\
  1  & 0 
 \end{matrix} \biggr) \,, \qquad q>0 \,, \quad |\zeta|=1\,.
\end{equation} 
\end{ex}
}
\vspace*{1mm}

Given an
$r$--dimensional subspace $\CT \subset 
 {{\mathbb C}^n} \,{\otimes}\, {{\mathbb C}^n} $,
${\CT} \sim  \{V_1,\ldots,V_r \}$,  
let us introduce the following partitioned matrix
consisting of $r^2$ blocks  
of the size $n \,{\times}\, n$:
\begin{align}\label{WV} 
  W_{\CT}  =   \left( 
 \begin{matrix}
   V_1 \bar{V}_1 & V_2 \bar{V}_1 & \ldots \\
     V_1 \bar{V}_2 & V_2 \bar{V}_2 & \ldots \\
     \vdots & \vdots & \ddots 
     \end{matrix} \right)
  = \sum_{s,m=1}^r  E_{sm} \otimes V_m \bar{V}_s \,.
\end{align}

By Theorem~\ref{trTTb}, finding a solution to
(T1)--(T4)
amounts to finding a subspace $\CT$ such that
the corresponding orthogonal projection $P_{\CT}$ 
satisfies relation~(\ref{trP}).
Let us reformulate relation (\ref{trP}) 
as a condition on the matrix~$W_{\CT}$.

\begin{thm}\label{PROPTLW}
Let $P_{\CT}$ be the orthogonal projection onto
an $r$--dimensional subspace 
$\CT\subset {{\mathbb C}^n} \,{\otimes}\, {{\mathbb C}^n} $,
${\CT} \sim  \{V_1,\ldots,V_r \}$ and let $W_\CT$ 
be the corresponding matrix defined in~(\ref{WV}).  
Then $T=Q P_\CT$, $Q>0$ is a solution to
(T1)--(T4) if and only if $Q W_\CT$ is a unitary matrix,
\begin{equation}\label{TLP}
  Q\, W_\CT \in U(nr) \,.
\end{equation}  
\end{thm}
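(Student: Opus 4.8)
The plan is to build a bridge between the $n^3\times n^3$ operator $P_{12}P_{23}$ that appears in Theorem~\ref{trTTb} and the $nr\times nr$ matrix $W_\CT$ of~\rf{WV}, and then to read off both implications from that theorem together with Proposition~\ref{trTT}. First I would introduce the linear map $J\colon {\mathbb C}^r\otimes{\mathbb C}^n\to ({\mathbb C}^n)^{\otimes 3}$ defined on the standard basis $\{\eta_s\}_{s=1}^r$ of ${\mathbb C}^r$ by $J(\eta_s\otimes e_c)=v_s\otimes e_c$, where $v_1,\dots,v_r$ are the spanning vectors of $\CT$ with ${\CT}\sim\{V_1,\dots,V_r\}$. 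Since $\{v_s\}$ is orthonormal by~\rf{vv} and $\{e_c\}$ is orthonormal, $J$ is an isometry, $J^*J=I_{nr}$, while $JJ^*=P_{12}$ is exactly the orthogonal projection onto $\CT\otimes{\mathbb C}^n=\mathrm{ran}\,P_{12}$ (here $P_{12}=P_\CT\otimes I_n$ and $P_{23}=I_n\otimes P_\CT$).

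The key step — and the one I expect to be the main obstacle — is the identity
\[
 J^*\,P_{23}\,J = W_\CT^*\,W_\CT .
\]
This I would prove by directly computing the matrix element $\scp{v_{s'}\otimes e_{c'}}{P_{23}\,(v_s\otimes e_c)}$: expanding $P_{23}=I_n\otimes P_\CT$ via~\rf{PiTau} and contracting the three tensor slots produces $\sum_{a,m}(V_s\bar V_m)_{ac}(\bar V_{s'}V_m)_{ac'}$, which is precisely the $\bigl((s',c'),(s,c)\bigr)$ entry of $W_\CT^*W_\CT$ read off from~\rf{WV}. The content here is pure index bookkeeping across the two factorizations $1|23$ and $12|3$ of $({\mathbb C}^n)^{\otimes3}$, with the orthonormality~\rf{vv} making the sums collapse correctly. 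Combined with $JJ^*=P_{12}$ this gives $P_{12}P_{23}P_{12}=J\,(W_\CT^*W_\CT)\,J^*$. Since $J^*J=I_{nr}$, cyclicity of the trace then yields $\tr_{123}(P_{12}P_{23})=\tr(W_\CT^*W_\CT)$ and $\tr_{123}(P_{12}P_{23})^2=\tr(W_\CT^*W_\CT)^2$, where I also use $\tr_{123}(P_{12}P_{23})=\tr(P_{12}P_{23}P_{12})$ and $\tr_{123}(P_{12}P_{23})^2=\tr(P_{12}P_{23}P_{12})^2$, both immediate from $P_{12}^2=P_{12}$ and cyclicity.

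With these two scalar identities in hand, the two implications are short. For the ``if'' direction, assume $QW_\CT\in U(nr)$, i.e. $Q^2W_\CT^*W_\CT=I_{nr}$; then $\tr_{123}(P_{12}P_{23})=nr/Q^2\neq0$ (so $P_{12}P_{23}\neq0$) and $\tr_{123}(P_{12}P_{23})^2=nr/Q^4$, whence $\bigl(\tr_{123}(P_{12}P_{23})\bigr)^2=(nr)^2/Q^4=nr\,\tr_{123}(P_{12}P_{23})^2$, which is~\rf{trP}. Theorem~\ref{trTTb} then gives a solution $T=QP_\CT$ to (T1)--(T4), and the value~\rf{QP} equals $nr/(nr/Q^2)=Q^2$, matching the given $Q>0$. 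For the ``only if'' direction, let $T=QP_\CT$ solve (T1)--(T4); as $T$ satisfies (T1)--(T2) with rank $r$, Proposition~\ref{trTT} applies and~\rf{trt}, after substituting $T_{12}T_{23}=Q^2P_{12}P_{23}$ and using the trace identities, reads $\tr M=\tr M^2=nr$ with $M:=Q^2W_\CT^*W_\CT$ a positive semidefinite $nr\times nr$ matrix. By Cauchy--Schwarz, $(\tr M)^2\le nr\,\tr M^2$ with equality iff all $nr$ eigenvalues of $M$ coincide; here equality holds, so every eigenvalue equals $nr/nr=1$, forcing $M=I_{nr}$, i.e. $QW_\CT$ is unitary. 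This closes the proof.
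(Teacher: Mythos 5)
Your proof is correct, and its overall skeleton coincides with the paper's: both arguments reduce the theorem to the trace criterion of Theorem~\ref{trTTb} by identifying $\tr_{123}\bigl((P_\CT)_{12}(P_\CT)_{23}\bigr)$ and $\tr_{123}\bigl((P_\CT)_{12}(P_\CT)_{23}\bigr)^2$ with the traces of the first two powers of a positive semi--definite Gram--type matrix built from $W_\CT$, and both close with the same equality--case argument (all eigenvalues of a Hermitian PSD matrix coincide, hence it is a multiple of the identity). The difference lies in how the trace identities are obtained. The paper computes both traces directly from \rf{PiTau} using the matrix units $E_{ab}$, arriving at the explicit sums \rf{trPP}--\rf{trPP2}, and then recognizes them as $\tr A_\CT$ and $\tr(A_\CT^2)$ for $A_\CT=W_\CT W_\CT^*$. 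You instead construct the isometry $J$ with $JJ^*=(P_\CT)_{12}$ and prove the operator identity $J^*(P_\CT)_{23}J=W_\CT^*W_\CT$, from which both trace identities follow at once by cyclicity; I checked your index computation and it is right. This buys a conceptually cleaner derivation (one operator identity instead of two separate trace computations, and it explains structurally why $W_\CT$ appears: it is the compression of $(P_\CT)_{23}$ to the range of $(P_\CT)_{12}$), at the cost of introducing the auxiliary map $J$. Your use of $W_\CT^*W_\CT$ where the paper uses $W_\CT W_\CT^*$ is immaterial, since traces of powers agree and unitarity of $QW_\CT$ is equivalent to $Q^2W_\CT^*W_\CT=I_{nr}$; likewise, routing the ``only if'' direction through Proposition~\ref{trTT}(c) rather than directly through \rf{trP} is a harmless variation with no circularity, as that proposition precedes the theorem and its proof of a)$\Rightarrow$c) uses only (T2), (T3) and \rf{tr123}.
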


Since unitarity of $Q\, W_\CT$ implies
unitarity of $Q\, \bar{W}_\CT$, 
$Q\, W^t_\CT$, and $Q\, W^*_\CT$, 
we deduce the following.

\begin{cor}\label{BTH} 
If $T=Q P_\CT$, $\CT \sim \{V_1,\ldots,V_r\}$
is a solution to (T1)--(T4), then so are
$T'=Q P_{\CT'}$, $T''=Q P_{\CT''}$, and $T'''=Q P_{\CT'''}$,
where $\CT' \sim \{\bar{V}_1,\ldots,\bar{V}_r\}$,
$\CT'' \sim \{V^t_1,\ldots,V^t_r\}$, and
$\CT''' \sim \{V^*_1,\ldots,V^*_r\}$.
\end{cor}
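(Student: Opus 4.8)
The plan is to reduce everything to Theorem~\ref{PROPTLW}, which characterises solutions by the single unitarity condition $Q\,W_\CT \in U(nr)$. The remark preceding the statement already records the key algebraic fact: if $Q\,W_\CT$ is unitary, then so are $Q\,\bar{W}_\CT$, $Q\,W^t_\CT$, and $Q\,W^*_\CT$, since complex conjugation, transposition, and conjugate transposition each preserve unitarity and $Q$ is a positive real scalar. It therefore suffices to identify each of the three matrices $W_{\CT'}$, $W_{\CT''}$, $W_{\CT'''}$ associated via (\ref{WV}) with the transformed spanning sets as one of $\bar{W}_\CT$, $W^*_\CT$, $W^t_\CT$.

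First I would check that the three transformed families $\{\bar{V}_s\}$, $\{V^t_s\}$, $\{V^*_s\}$ are again orthonormal, so that the subspaces $\CT'$, $\CT''$, $\CT'''$ and their projections are well defined. This is immediate from the orthonormality relation (\ref{vv}): taking the complex conjugate of $\tr(V_s^*\,V_m)=\delta_{sm}$ gives $\tr(V_s^t\,\bar{V}_m)=\delta_{sm}$, which, since $(\bar{V}_s)^*=V_s^t$, is precisely the orthonormality of $\{\bar{V}_s\}$. The cases $\{V^t_s\}$ and $\{V^*_s\}$ follow likewise by combining (\ref{vv}) with the cyclicity of the trace and the identity $\tr(X)=\tr(X^t)$.

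The computational heart is the index bookkeeping in the definition (\ref{WV}), $W_\CT=\sum_{s,m} E_{sm}\otimes V_m\bar{V}_s$. Using $E_{sm}^t=E_{sm}^*=E_{ms}$ and reindexing the double sum, I would verify the three identities
\[
  W_{\CT'} = \bar{W}_\CT \,, \qquad
  W_{\CT''} = W^*_\CT \,, \qquad
  W_{\CT'''} = W^t_\CT \,.
\]
For instance, replacing each $V_s$ by $V^*_s$ turns the block $V_m\bar{V}_s$ sitting in position $(s,m)$ into $V^*_m\,\overline{V^*_s}=\bar{V}^t_m\,V^t_s$, which matches the $(s,m)$ block of $W^t_\CT$; the conjugate case ($\{\bar V_s\}$, giving $\bar W_\CT$) is immediate from $\bar E_{sm}=E_{sm}$, and the conjugate-transpose case ($\{V^t_s\}$, giving $W^*_\CT$) is analogous. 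I expect this matching to be the main obstacle: transposing or conjugate-transposing $W_\CT$ swaps both the outer block indices $(s,m)$ and the inner matrix factors, so one must relabel the summation index carefully to bring each result back to the standard form of (\ref{WV}) and to pair the correct matrix operation with the correct transformation of the $V_s$.

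With these identities in hand the conclusion is immediate: since $Q\,W_\CT\in U(nr)$, each of $Q\,W_{\CT'}$, $Q\,W_{\CT''}$, $Q\,W_{\CT'''}$ equals $Q$ times the conjugate, conjugate transpose, or transpose of a unitary matrix and is therefore unitary, whence Theorem~\ref{PROPTLW} yields that $T'$, $T''$, and $T'''$ all solve (T1)--(T4). Everything apart from the index matching of the previous paragraph is routine.
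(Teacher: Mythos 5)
Your proposal is correct and follows the paper's own route: the corollary is deduced directly from Theorem~\ref{PROPTLW} together with the observation that unitarity of $Q\,W_\CT$ implies unitarity of $Q\,\bar{W}_\CT$, $Q\,W^t_\CT$, and $Q\,W^*_\CT$. The paper leaves the identifications $W_{\CT'}=\bar{W}_\CT$, $W_{\CT''}=W^*_\CT$, $W_{\CT'''}=W^t_\CT$ (and the preservation of orthonormality) implicit, and your index bookkeeping --- including the crossed pairing of transposition with conjugate transposition --- checks out.
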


\begin{rem}\label{HGW}
The validity of condition (\ref{TLP}) depends  neither 
on a particular choice of the orthonormal spanning set of
$\CT$ nor on a particular choice of the orthonormal basis 
$\CE$ of ${{\mathbb C}^n}$.
Indeed, for two different orthonormal spanning sets,
 $\{v_s\}_{s=1}^r$ and
$\{v'_s\}_{s=1}^r$, where 
$v'_s = \sum_{k=1}^r {h_{sk} v_k}$,
$h \in U(r)$, the corresponding $W$ matrices
are related by a unitary transformation, namely
$W'_{\CT} = 
(\bar{h} \,{\ot}\, I_n) \, W_{\CT} \, (h^t \,{\ot}\, I_n)$.
For two different orthonormal bases of ${{\mathbb C}^n}$, 
the matrices of coefficients are related
as in (\ref{gVg}) and so the $W$ matrices corresponding
to the same subspace $\CT$ are also related
by a unitary transformation, namely
$W'_{\CT} = (I_r \ot g^t) \, W_{\CT} \, (I_r \ot \bar{g})$.
In either case, the unitarity of $W_\CT$ implies
the unitarity of $W'_\CT$. 
\end{rem}

\begin{rem}
Condition (\ref{TLP}) admits also the following 
formulation. Let $J$ be the unitary involutive automorphism
of ${\mathbb C}^n\,{\otimes}\,{\mathbb C}^n$ 
which maps a vector $v$ with the coefficient matrix $V$
into the vector $J(v)$ with the coefficient matrix
$V^t$. Note that, by (\ref{gVg}), 
the map $v \to J(v)$ is independent 
of a choice of the basis $\CE$ of ${\mathbb C}^n$. 
Observe that $(W_{\CT})^*= W_{J(\CT)}$.
Therefore, condition (\ref{TLP}) is equivalent
to the requirement that $W_{J(\CT)}$ is a scalar
multiple of the inverse to $W_{\CT}$.
\end{rem}

\section{On the range of $Q$}

An interesting problem is to determine the range
of possible values of $Q$ in (T2) for
solutions to (T1)--(T4) if the rank $r$
and the dimension $n$ of the underling 
space ${\mathbb C}^n$ are given.

\vspace{1mm}{\small
\begin{ex}
For $T(q;\zeta)$ given by (\ref{xxz1}) with
$q >0$, we have $Q=q+q^{-1} \in [2,+\infty)$.
As we will see below, $Q=2$ is the sharp lower
bound in the $n=2$, $r=1$ case. 
\end{ex} 
}

\subsection{Rank one case}\label{ROC}

For $r=1$, the normalization condition (\ref{vv})
and the unitarity condition (\ref{TLP}) acquire the
following form: 
\begin{equation}\label{TLVVr}  
   \tr(V\,V^*) =1 \, \qquad\qquad  
   V\, \bar{V} \, V^t V^* 
 =  Q^{-2}   \, I_n \,.
\end{equation}
Clearly, $V$ must be non--singular. 
Taking this into account, we derive from (\ref{TLVVr}) the 
following expressions for $Q$:
\begin{equation}\label{muV} 
  Q^2 =  \bigl|\det V \bigr|^{-\frac{4}{n}} 
\,, \qquad\qquad
 Q^2 =  \tr\bigl( ({V}^* V)^{-1}\bigr)  \,.
\end{equation}
They, in turn, allow us to find the lower bound for $Q$
in the rank one case.

\begin{propn}\label{QVN}
Suppose that $V \in {M}_{n}$
 satisfies relations (\ref{TLVVr}). Then\\[1mm]
a) The following inequality holds:
\begin{equation}\label{Qdim} 
 Q^2 \geq n^2 . 
\end{equation} 
b)  The equality in (\ref{Qdim}) is achieved if and only 
if $V$ is a scalar multiple of a unitary matrix,
that is 
\begin{equation}\label{VU} 
V= \frac{1}{\sqrt{n}}\ G \,, \qquad G \in U(n) \,.
\end{equation} 
\end{propn} 
 
Thus, in the rank one case, $Q=n$ is the 
sharp lower bound. Moreover, 
for every $n \geq 2$, 
a unitary tensor space representation of
$TL_N(Q)$ of rank one exists for every $Q$ in the range 
$[n,+\infty)$ (see Theorem~\ref{ListUq} in Section~\ref{TLPR}).

\subsection{Higher rank case}

Let us now establish some estimates on the lower 
bound for $Q$ in the higher rank case.

\begin{thm}\label{Qrs}
If $T \in {M}_{n^2}$ has rank~$r \geq 1$ and
satisfies relations (T1)--(T4) with $Q>0$,
then the following inequalities hold: 
\begin{align}
\label{Qdim4} 
{} Q^4 & \geq   \frac{2n^2}{n^2+r} \,, \\[1mm] 
\label{Qdim2} 
{} Q  & \geq  \frac{n}{r}   \,.
\end{align}  
\end{thm}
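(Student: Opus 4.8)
The plan is to translate everything into the two partial traces (marginals) of the projection $P$ and to reduce both inequalities to scalar facts about them. Writing $T=QP$ and using Proposition~\ref{trTT}, I would set $\sigma=\tr_1 P$ and $\rho=\tr_2 P$, the partial traces of $P$ over the first and the second tensor factor; both are positive semidefinite, satisfy $\tr\sigma=\tr\rho=r$, and obey $0\le\sigma,\rho\le n\,I_n$ because $P\le I_{n^2}$. The identity $\tr_{123}(T_{12}T_{23})=nr$ becomes $\tr(\sigma\rho)=nr/Q^2$, so with $\lambda:=nQ^{-2}=\tr(\sigma\rho)/r$ the targets read $\lambda\le r^2/n$ for (\ref{Qdim2}) and $\lambda^2\le(n^2+r)/2$ for (\ref{Qdim4}). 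Next I would record the two compression identities obtained by partial-tracing (T3) and (T4): tracing $P_{12}P_{23}P_{12}=Q^{-2}P_{12}$ over the third factor and $P_{23}P_{12}P_{23}=Q^{-2}P_{23}$ over the first gives
\begin{equation*}
 P\,(\sigma\otimes I_n)\,P=\lambda\,P,\qquad P\,(I_n\otimes\rho)\,P=\lambda\,P,
\end{equation*}
i.e. the compressions of $\sigma\otimes I_n$ and of $I_n\otimes\rho$ to the range of $P$ are the scalar $\lambda$.

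For (\ref{Qdim4}) I would set $M=\sigma\otimes I_n+I_n\otimes\rho$, so that adding the two identities yields $PMP=2\lambda P$. Since the compression $X\mapsto PXP$ is a positive unital map on the range of $P$, the Schwarz (Kadison) inequality gives $PM^2P\ge(PMP)^2=4\lambda^2P$, hence $4\lambda^2r\le\tr(PM^2P)\le\tr(M^2)$. A direct expansion gives $\tr(M^2)=n\bigl(\tr\sigma^2+\tr\rho^2\bigr)+2r^2$, and the bound $\sigma,\rho\le nI_n$ forces $\tr\sigma^2,\tr\rho^2\le nr$; therefore $\tr(M^2)\le2r(n^2+r)$ and $\lambda^2\le(n^2+r)/2$, which is exactly (\ref{Qdim4}). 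This part I expect to be routine, and it is insensitive to any transpose ambiguity in the marginals since only traces of powers of $\sigma,\rho$ enter.

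For (\ref{Qdim2}) the case $r\ge n$ is immediate: $\tr(\sigma\rho)\le\|\rho\|_{\mathrm{op}}\,\tr\sigma\le nr\le r^3/n$, which is equivalent to $\lambda\le r^2/n$. The real content is $r<n$, where the operator-norm estimate is too weak and one must use the full strength of the unitarity of $QW_\CT$. Here I would first observe that this unitarity forces $\rho$ and $\sigma$ to be nonsingular (if $\rho w=0$ then $V_m^*w=0$ for all $m$, producing a kernel vector of $W_\CT^*$ and contradicting invertibility), and then try to lift the sharp rank-one computation, in which $Q^2=\tr\bigl((V^*V)^{-1}\bigr)$ together with the Cauchy--Schwarz inequality $\tr(A^{-1})\tr(A)\ge n^2$ yields $Q^2\ge n^2$. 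Concretely, I would pass to the unital completely positive map $\widetilde\Phi(X)=\sum_m(\rho^{-1/2}V_m)\,X\,(\rho^{-1/2}V_m)^*$ (unital because $\sum_m V_mV_m^*=\rho$) and exploit the relation $\widetilde\Phi(\overline{\rho})=rQ^{-2}\rho^{-1}$ coming from $\tr_{\mathbb C^r}(W_\CT W_\CT^*)=rQ^{-2}I_n$, combining it with operator convexity of $x\mapsto x^{-1}$ and a matrix AM--HM inequality.

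The step I expect to be the main obstacle is precisely this $r<n$ case of (\ref{Qdim2}): unlike rank one, the individual blocks $V_m$ need not be invertible, so the determinant/AM--HM argument does not transfer verbatim and must be rephrased entirely in terms of the nonsingular marginals $\rho,\sigma$ and the unital map $\widetilde\Phi$. I would regard getting this rephrasing to close (and recovering the known sharpness $Q=n$ at $r=1$) as the crux of the whole theorem, with (\ref{Qdim4}) and the regime $r\ge n$ of (\ref{Qdim2}) being the straightforward parts.
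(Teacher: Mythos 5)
Your argument for (\ref{Qdim4}) is correct and genuinely different from the paper's. You work with the marginals $\sigma=\tr_1 P$, $\rho=\tr_2 P$, derive $P(I_n\otimes\rho)P=P(\sigma\otimes I_n)P=nQ^{-2}P$ by partial-tracing (T3)--(T4), and close with the Schwarz inequality $PM^2P\ge (PMP)^2$ for the compression by $P$ together with $\sigma,\rho\le nI_n$; all of these steps check out. The paper instead takes the Hermitian pencil $J(\alpha)=T_{12}+T_{34}+\alpha T_{23}$ in $M_{n^4}$, computes $\tr_{1234}(J(\alpha)^2)=rn^2Q^2\alpha^2+4rn^2\alpha+2rQ^2(n^2+r)$ using the trace identities (\ref{tr123}) and (\ref{trt}), and reads off (\ref{Qdim4}) from the non-positivity of the discriminant. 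The two arguments are of comparable difficulty; yours has the mild advantage of staying inside $M_{n^2}$ and making visible exactly which positivity is being used.

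For (\ref{Qdim2}) there is a genuine gap. You prove it only for $r\ge n$; for $r<n$, which is the case of real interest (all the varying-$Q$ representations of Section 4 have $r\le 2<n$ or $r=1$), you offer a programme built on the unital map $\widetilde\Phi$ and the identity $\sum_k V_k\bar\rho V_k^*=rQ^{-2}I_n$, and you yourself flag that you cannot close it. It indeed does not close as stated: taking traces in $\widetilde\Phi(\bar\rho)=rQ^{-2}\rho^{-1}$ gives a quantity $\sum_m\tr(\rho^{-1}V_m\bar\rho V_m^*)$ for which operator convexity and AM--HM supply bounds in the wrong direction, and no combination of the constraints you retain on $\sigma,\rho$ alone can force $\tr(\sigma\rho)\le r^3/n$ when $r<n$ (e.g.\ $\sigma=\rho$ of rank one with $\tr\sigma=r$ satisfies all of them yet violates the target). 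The paper's proof uses a different and much shorter mechanism: by Theorem~\ref{PROPTLW}, $QW_\CT$ is a unitary matrix of size $rn$, so its Schatten $1$-norm is exactly $rn$, i.e.\ $Q\,\|W_\CT\|_1=rn$; on the other hand, triangle inequality plus the H\"older bound $\|V_m\bar V_k\|_1\le\|V_m\|_2\|V_k\|_2=1$ applied to the $r^2$ blocks of (\ref{WV}) gives $\|W_\CT\|_1\le r^2$, whence $Q\ge n/r$. The missing idea in your proposal is precisely this: measure $W_\CT$ in the trace norm, where unitarity pins the value exactly and block subadditivity gives the competing upper bound.
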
 

Inequality (\ref{Qdim4}) implies the following.

\begin{cor}\label{Q1} 
$Q=1$ is possible only for $r=n^2$ that is in the trivial
case $T= I_{n^2}$. 
\end{cor}

Next, we will find certain restrictions on the possible
values of $Q$ using the Jones--Wenzl orthogonal 
projector~\cite{Jo1,We1}. Recall that, for a generic 
value of $Q$, the algebra $TL_N(Q)$ with the 
anti--involution (\ref{TL0}) possesses 
a unique non--zero element $\SP_N$ such that
\begin{align}
\label{JW1}
{}& \SP_N \, \SP_N =\SP_N \,, \qquad \SP_N^* = \SP_N \,, \\
\label{JW2}
{}& \ST_k \, \SP_N = \SP_N \, \ST_k = 0 \,, 
 \qquad  \text{for}\ \  k= 1,\ldots, N-1 \,.
\end{align}
For the first three values of $N$, these
projectors are given by
\begin{eqnarray}\label{jw123}
 \SP_1 = 1 \,, \qquad
 \SP_2 = 1 - \frac{1}{Q} \, \ST_1 \,, \qquad
 \SP_3 = 1 -
 \frac{Q\,\bigl( \ST_1 + \ST_2  \bigr)}{Q^2-1} 
 + \frac{\bigl( \ST_1 \ST_2 + \ST_2 \ST_1 \bigr)}{Q^2-1}  \,.
\end{eqnarray}

Let $\tau_{n,r}$ be the unitary tensor space representation 
of $TL_N(Q)$ determined by a matrix $T \in M_{n^2}$ 
which has rank~$r \geq 1$ and satisfies (T1)--(T4). 
Denote $P_{n,r,N}= \tau_{n,r}(\SP_{N})$
and $d_{N}(n,r) = \tr_{1,\ldots, N} (P_{n,r,N})$, 
where $\tr_{1,\ldots, N}$ is the matrix trace in 
$M_{n^N}$.  

\vspace{1mm}{\small
\begin{ex}
For the projectors given in (\ref{jw123}), we have 
(cf. (\ref{tr123}) and (\ref{trt}))
\begin{equation}\label{drn}
 d_1(n,r)= n \,, \qquad
 d_{2}(n,r)= n^2-r \,, \qquad
 d_3(n,r)= n^3 -2rn \,.
\end{equation}
\end{ex}
}
\vspace{1mm} 

Note that relations (\ref{JW1}) imply that
$P_{n,r,N}$ is a positive semi--definite matrix.
Therefore, $d_{N}(n,r)$ must be non--negative.
But we see from (\ref{drn}) that $d_3(n,r)<0$
for $r>n^2/2$. This implies that every representation
$\tau_{n,r}$ of a rank $r>n^2/2$ can correspond 
only to the value $Q=1$ (for which $\SP_3$ is not defined).
By a similar analysis of values of $d_N(n,r)$
for $N \geq 3$, we establish the following statement.

\begin{thm}\label{QJW}
Suppose that $T \in {M}_{n^2}$ has rank 
$r > n^2/4$ and satisfies relations (T1)--(T4)
with $Q>0$. Then $Q$ in (T2) belongs to the
following discrete set of values: 
\begin{align}
\label{Qspec0} 
{}& \text{if}\quad 
 4\cos^2 \Bigl(\frac{\pi}{m+2}\Bigr) \leq \frac{n^2}{r}  
  < 4\cos^2\Bigl(\frac{\pi}{m+3} \Bigr)  ,
 \qquad m \in {\mathbb N} \,, \\
\label{Qspec} 
{}& \text{then}\quad
 Q \in J_m \equiv 
 \left\{ 2 \cos \Bigl(\frac{\pi}{k+2} \Bigr)\,, 
  \quad k=1,\ldots,m \right\} .
\end{align} 
\end{thm}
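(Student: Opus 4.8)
The plan is to base everything on the non-negativity of the numbers $d_N(n,r)=\tr_{1,\ldots,N}(P_{n,r,N})$ noted in the text, driven by a single three-term recursion. Concretely, I would first prove that
\[
 d_N(n,r)=n\,d_{N-1}(n,r)-r\,d_{N-2}(n,r),\qquad d_0=1,\ d_1=n,
\]
which already reproduces $d_2=n^2-r$ and $d_3=n^3-2rn$ in (\ref{drn}). To derive it I would feed Wenzl's recursion $\SP_N=\SP_{N-1}-\frac{[N-1]}{[N]}\,\SP_{N-1}\ST_{N-1}\SP_{N-1}$, where $[k]=(q^k-q^{-k})/(q-q^{-1})$ and $q+q^{-1}=Q$ (one checks this reproduces (\ref{jw123})), into $\tr_{1,\ldots,N}$. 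The first term contributes $n\,d_{N-1}$ because $\SP_{N-1}$ acts trivially on the extra $N$-th strand, and by cyclicity and $\SP_{N-1}^2=\SP_{N-1}$ the second reduces to $\tr(\SP_{N-1}\ST_{N-1})$. Thus the recursion is equivalent to the tracial identity $\tr(\SP_{N-1}\ST_{N-1})=\frac{[N]}{[N-1]}\,r\,d_{N-2}$.

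The crux is proving this identity, which I would do by induction on $N$. Expanding the remaining $\SP_{N-1}$ once more by Wenzl and using that $\ST_N$ and $\SP_{N-1}$ act on disjoint strands reduces it, via the $q$-integer identity $[2][N]=[N+1]+[N-1]$, to the cleaner statement $\tr(\SP_{N-1}\ST_{N-1}\SP_{N-1}\ST_N)=r\,d_{N-1}$, whose base case $N=2$ is exactly $\tr_{123}(T_{12}T_{23})=nr$ from (\ref{trt}). Closing this induction — propagating $\tr_{123}(T_{12}T_{23})=nr$ through the Jones--Wenzl structure using the annihilation property (\ref{JW2}) together with the braid relations (\ref{TL2}) — is the step I expect to be the main obstacle, since the partial trace $\tr_2 P$ is \emph{not} a scalar matrix, so the reduction cannot proceed from the trace conditions alone.

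With the recursion in hand the rest is clean. For $r>n^2/4$ the characteristic equation $\lambda^2-n\lambda+r=0$ has complex roots $\sqrt r\,e^{\pm i\phi}$ with $\cos\phi=\frac{n}{2\sqrt r}\in(0,1)$, giving $d_N(n,r)=r^{N/2}\,\sin((N+1)\phi)/\sin\phi$ with $\phi\in(0,\pi/2)$. Since $\tfrac{n^2}{r}=4\cos^2\phi$, the hypothesis (\ref{Qspec0}) is exactly $\phi\in(\tfrac{\pi}{m+3},\tfrac{\pi}{m+2}]$; then $(N+1)\phi\in(0,\pi]$ for $N\le m+1$ while $(m+3)\phi\in(\pi,2\pi)$, so $d_N\ge0$ for $N\le m+1$ but $d_{m+2}<0$. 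Because $\tau_{n,r}$ is a $*$-homomorphism by (T1), each $P_{n,r,N}$ is an orthogonal projection, so (\ref{JW1}) forces $d_N\ge0$ wherever $\SP_N$ exists; hence $\SP_{m+2}$ must fail to exist, i.e. $[\ell]=0$ for some $\ell\le m+2$.

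It remains to pin down which $Q$ makes the least such $\ell$ admissible. Here I would use a second positivity input: $\tr(\SP_{k-1}\ST_{k-1}\SP_{k-1})\ge0$ since $\ST_{k-1}\succeq0$, which by the identity above equals $\frac{[k]}{[k-1]}\,r\,d_{k-2}$ with $d_{k-2}>0$ for $k\le m+2$. Starting from $[1]=1$ and $[2]=Q>0$, this forces $[k]>0$ for all $k<\ell$; writing $Q=2\cos\theta$, the values $\sin(k\theta)$ stay positive up to $k=\ell-1$ and vanish at $k=\ell$, which is possible only if $\theta=\pi/\ell$ (this excludes spurious roots of unity such as $Q=2\cos(2\pi/5)$, where $[3]<0$). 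Finally $Q\ge2$ is impossible, since then every $\SP_N$ exists and $d_{m+2}<0$ is contradicted. Therefore $Q=2\cos(\pi/\ell)$ with $3\le\ell\le m+2$, i.e. $Q\in J_m$ as claimed in (\ref{Qspec}).
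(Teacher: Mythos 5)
Your strategy is the same as the paper's (positivity of $d_N(n,r)$, the three--term recursion obtained from Wenzl's formula, forced termination of the Jones--Wenzl tower), but as written the proof has a genuine gap at exactly the point you yourself flag: the identity
\[
 \tr_{1,\ldots,N+1}\bigl(\tau_{n,r}(\SP_{N-1}\,\ST_{N-1}\,\SP_{N-1}\,\ST_{N})\bigr)=r\,d_{N-1}(n,r),
\]
on which the whole recursion rests, is left unproven. The induction you propose is unnecessary, and the obstacle you anticipate (that the partial trace of $P$ is not a scalar matrix) is not actually in the way, because only the \emph{full} trace ever enters. Write $\ST_N=Q^{-1}\ST_N^2$ by (T2), cycle one factor of $\ST_N$ to the front of the trace, and use that $\tau_{n,r}(\SP_{N-1})$ commutes with $\tau_{n,r}(\ST_N)$ (disjoint strands) together with $\SP_{N-1}^2=\SP_{N-1}$ to rewrite the argument of the trace as $Q^{-1}\,\SP_{N-1}\,\ST_N\ST_{N-1}\ST_N\,\SP_{N-1}$; the Temperley--Lieb relation $\ST_N\ST_{N-1}\ST_N=\ST_N$ collapses this to $Q^{-1}\,\ST_N\SP_{N-1}$, whose trace factorizes over the disjoint tensor factors as $Q^{-1}d_{N-1}\,\tr T=Q^{-1}d_{N-1}\,Qr=r\,d_{N-1}$. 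This two--line computation is precisely what the paper does; inserting it, your reduction via $[2][N]=[N+1]+[N-1]$ closes the recursion, and the spectral analysis for $r>n^2/4$ that follows is correct.

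On the endgame you are, if anything, more careful than the paper. The paper observes only that $Q=2\cos\bigl(\tfrac{\pi}{k+2}\bigr)$ makes the tower terminate at $N=k+1$ and concludes $Q\in J_m$, without explicitly excluding other roots of unity such as $q=e^{2\pi i/\ell}$ for which $\rho_N$ also blows up before level $m+2$. Your supplementary positivity input --- $\tr\bigl(\tau_{n,r}(\SP_{k-1}\ST_{k-1}\SP_{k-1})\bigr)\geq 0$ forces $[k]_q>0$ for every $k$ strictly below the termination point, which pins down $\theta=\pi/\ell$ and discards the spurious values --- fills this in cleanly, and the exclusion of $Q\geq 2$ is right. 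So the only defect is the unproved trace identity above; once it is supplied, your argument is complete and in that one respect tighter than the original.
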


\begin{rem}
In the theory of von Neumann algebras, it is know \cite{Jo1,We1}
that the algebra $TL_{\infty}(Q)$ with the anti--involution 
(\ref{TL0}) admits a normalizable positive trace only if 
$Q \in J_\infty \cup [2,+\infty)$. 
The situation with unitary tensor space representations 
of $TL_N(Q)$ is somewhat different because the range 
of allowed values of $Q$ depends on the value of 
the parameter~$n^2/r$. In particular, if $r \leq n^2/4$, 
then the positive definiteness of the Jones--Wenzl
projector imposes no restrictions on~$Q$.
\end{rem}
\vspace{1mm}

Theorem~\ref{QJW} along with Corollary~\ref{Q1}
imply, in particular, the following.

\begin{cor}\label{Qcor}
a) There exists no unitary tensor space representation 
of $TL_N(Q)$ of rank 
$r \in \bigl(\frac{1}{2}n^2,n^2 \bigr)$. \\
b) Each unitary tensor space representation 
of $TL_N(Q)$ of rank
$r \in \bigl(\frac{1}{2}(3-\sqrt{5})n^2, 
\frac{1}{2}n^2 \bigr]$ 
corresponds to $Q = \sqrt{2}$. \\
c) Each unitary tensor space representation 
of $TL_N(Q)$ of rank
$r \in \bigl( \frac{1}{3} n^2 , 
\frac{1}{2}(3-\sqrt{5})n^2 \bigr]$ 
corresponds to either $Q = \sqrt{2}$ or 
$Q =\frac{1}{2}(1+\sqrt{5})$.
\end{cor}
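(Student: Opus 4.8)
The plan is to deduce Corollary~\ref{Qcor} directly from Theorem~\ref{QJW} together with Corollary~\ref{Q1}, treating it as a purely arithmetic unpacking of the interval scheme \rf{Qspec0}--\rf{Qspec}. The key observation is that the hypothesis $r>n^2/4$ of Theorem~\ref{QJW} is equivalent to $n^2/r<4$, and that the right endpoint $4\cos^2(\pi/(m+3))$ tends to $4$ as $m\to\infty$, so the intervals in \rf{Qspec0} partition the half--open range $[1,4)$ of $n^2/r$ (the left endpoint for $m=1$ being $4\cos^2(\pi/3)=1$). Each part then amounts to locating the value $n^2/r$ inside this partition and reading off the corresponding allowed set $J_m$.

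First I would record the numerical landmarks. For the value $m=1$ the set is $J_1=\{2\cos(\pi/3)\}=\{1\}$, occurring when $1\le n^2/r<4\cos^2(\pi/4)=2$, i.e.\ when $n^2/2<r\le n^2$; by Corollary~\ref{Q1} the value $Q=1$ forces $r=n^2$, which gives part~(a): the open rank interval $(\tfrac12 n^2,n^2)$ admits no representation at all. For $m=2$ we have $J_2=\{1,\sqrt2\}$, valid when $4\cos^2(\pi/4)=2\le n^2/r<4\cos^2(\pi/5)$. Since $\cos(\pi/5)=\tfrac14(1+\sqrt5)$, one computes $4\cos^2(\pi/5)=\tfrac12(3+\sqrt5)$, whence $n^2/r$ lying in $[2,\tfrac12(3+\sqrt5))$ corresponds to $r$ in $(\tfrac{2}{3+\sqrt5}n^2,\tfrac12 n^2]=(\tfrac12(3-\sqrt5)n^2,\tfrac12 n^2]$, after rationalising $\tfrac{2}{3+\sqrt5}=\tfrac12(3-\sqrt5)$. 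On this range the only admissible values are $Q=1$ or $Q=\sqrt2$; discarding $Q=1$ by Corollary~\ref{Q1} (which is excluded here since $r<n^2$) leaves $Q=\sqrt2$, giving part~(b).

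For part~(c), I would take $m=3$, so $J_3=\{1,\sqrt2,2\cos(\pi/5)\}=\{1,\sqrt2,\tfrac12(1+\sqrt5)\}$, valid when $4\cos^2(\pi/5)=\tfrac12(3+\sqrt5)\le n^2/r<4\cos^2(\pi/6)=3$. This translates into $r\in(\tfrac13 n^2,\tfrac{2}{3+\sqrt5}n^2]=(\tfrac13 n^2,\tfrac12(3-\sqrt5)n^2]$, and again ruling out $Q=1$ leaves exactly $Q\in\{\sqrt2,\tfrac12(1+\sqrt5)\}$, which is part~(c). The only genuine work is the trigonometric evaluation of $\cos(\pi/5)$ and the arithmetic conversion between the $n^2/r$ intervals and the $r$ intervals; I do not anticipate a conceptual obstacle, since everything follows mechanically once the partition of $[1,4)$ induced by \rf{Qspec0} is made explicit and the special values $2\cos(\pi/(k+2))$ for $k=1,2,3$ are substituted. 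The one point requiring care is the consistent use of half--open versus closed endpoints when inverting the inequalities (because $r\mapsto n^2/r$ reverses order), so that the rank intervals stated in (a)--(c) match the strict/non--strict bounds in \rf{Qspec0} exactly.
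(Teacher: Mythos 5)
Your proposal is correct and follows exactly the route the paper intends: the paper states the corollary as an immediate consequence of Theorem~\ref{QJW} and Corollary~\ref{Q1} without writing out the details, and your unpacking of the $m=1,2,3$ cases of \rf{Qspec0}--\rf{Qspec}, including the evaluation $4\cos^2(\pi/5)=\tfrac12(3+\sqrt5)$ and the inversion of the $n^2/r$ intervals, is exactly the omitted arithmetic. The endpoint bookkeeping and the exclusion of $Q=1$ via Corollary~\ref{Q1} are handled correctly.
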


{\small
\begin{ex}\label{Trn2}
For $n=r=2$, by Corollary~\ref{Qcor}, the only allowed 
value of $Q$ is $Q=\sqrt{2}$. In this case, a particular
solution to (T1)--(T4) is given by
\begin{equation}\label{ss1}
 T(\zeta) =   \frac{1}{\sqrt{2}}
 \begin{pmatrix}
  1 & 0 & 0 & i \zeta \\
  0 & 1 & i & 0 \\
  0 & -i & 1 & 0 \\
  -i\zeta^{-1} & 0 & 0 & 1
 \end{pmatrix} 
  \,, \qquad |\zeta|=1  \,. 
\end{equation}
The corresponding R--matrix appearing in (\ref{RT2})
was listed as $R_{H0.2}$ in \cite{Hie} among other
constant solutions to the Yang--Baxter equation  
in the $n=2$ case.
\end{ex}
}
\vspace*{1mm}

\begin{rem}
By Theorem~\ref{QJW}, for $n=r=3$, the only
allowed values of $Q$ are $\sqrt{2}$, 
$\frac{1}{2}(1+\sqrt{5})$, and~$\sqrt{3}$.
A solution corresponding to $Q=\sqrt{3}$
was constructed in~\cite{WX}.
\end{rem}

\subsection{Higher rank case, special cases}

First, let us remark that every unitary
tensor space representation of $TL_N(Q)$ of rank 
$r$ can be used to construct an infinite tower
of representations of the same rank for  
underlying spaces of higher dimensions.

\begin{propn}\label{Vsca}
Suppose that
$T =Q\,P_{\CT}$, $\CT \sim \{V_1,\ldots,V_r\}$
is a solution to (T1)--(T4).
Given $m \in \mathbb N$,
define $\tilde{\CT} \sim \{\tilde{V}_1,\ldots,\tilde{V}_r\}$,
where $\tilde{V}_k = \frac{1}{\sqrt{m}} I_m \otimes V_k$
for all~$k$ (or, alternatively, 
$\tilde{V}_k = \frac{1}{\sqrt{m}} V_k \otimes I_m$
for all~$k$).
Then $\tilde{T}= m\, Q\,P_{\tilde{\CT}}$ 
is a solution to (T1)--(T4).
\end{propn}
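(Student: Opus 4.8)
The plan is to verify the single unitarity criterion of Theorem~\ref{PROPTLW} for the scaled data $\tilde\CT$, thereby reducing the claim to the criterion already known to hold for $\CT$. Throughout I write the two summation indices in the definition~(\ref{WV}) as $s,t$ to avoid a clash with the integer~$m$, so that $W_\CT = \sum_{s,t=1}^r E_{st} \ot V_t \bar V_s$; by hypothesis and Theorem~\ref{PROPTLW} we have $Q\,W_\CT \in U(nr)$.

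First I would check that $\{\tilde V_k\}$ is again an orthonormal set, so that $\tilde\CT \sim \{\tilde V_1,\ldots,\tilde V_r\}$ is well defined and $P_{\tilde\CT}$ has rank~$r$. Using (\ref{vv}) together with $\tilde V_k = \frac{1}{\sqrt m}\, I_m \ot V_k$, one computes $\tr\bigl(\tilde V_s^* \tilde V_t\bigr) = \frac1m\,\tr\bigl(I_m \ot V_s^* V_t\bigr) = \tr\bigl(V_s^* V_t\bigr) = \delta_{st}$, so orthonormality is preserved; the computation is identical for the alternative choice $\tilde V_k = \frac{1}{\sqrt m}\, V_k \ot I_m$. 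Hence $\tilde\CT$ is $r$-dimensional and the new data have $\tilde n = mn$, $\tilde r = r$.

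The core computation is to express $W_{\tilde\CT}$ in terms of $W_\CT$. Since $\overline{I_m} = I_m$, we get $\tilde V_t\,\bar{\tilde V}_s = \frac1m\, I_m \ot V_t \bar V_s$, whence $W_{\tilde\CT} = \frac1m \sum_{s,t=1}^r E_{st} \ot I_m \ot V_t \bar V_s$, an operator on $\mathbb C^r \ot \mathbb C^m \ot \mathbb C^n$. Let $U$ be the permutation unitary interchanging the first two tensor slots $\mathbb C^r$ and $\mathbb C^m$; conjugating by $U$ reorders $E_{st} \ot I_m \ot V_t \bar V_s$ into $I_m \ot E_{st} \ot V_t \bar V_s$, so that $U\,W_{\tilde\CT}\,U^* = \frac1m\, I_m \ot W_\CT$. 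With the new parameter $\tilde Q = mQ$ this gives $U\,\bigl(\tilde Q\,W_{\tilde\CT}\bigr)\,U^* = I_m \ot \bigl(Q\,W_\CT\bigr)$, whose right-hand side is unitary because $Q\,W_\CT \in U(nr)$; therefore $\tilde Q\,W_{\tilde\CT} \in U(mnr) = U(\tilde n\,\tilde r)$. Theorem~\ref{PROPTLW} then yields that $\tilde T = mQ\,P_{\tilde\CT}$ solves (T1)--(T4). For the alternative assignment $\tilde V_k = \frac{1}{\sqrt m}\, V_k \ot I_m$ the reshuffling is even simpler: one finds directly $W_{\tilde\CT} = \frac1m\, W_\CT \ot I_m$, so that $\tilde Q\,W_{\tilde\CT} = \bigl(Q\,W_\CT\bigr) \ot I_m$ is unitary with no permutation required, and the same conclusion follows.

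The only genuine bookkeeping obstacle is keeping the three tensor slots $M_r$, $M_m$, $M_n$ in the correct order and confirming that conjugation by $U$ indeed sends $E_{st} \ot I_m$ to $I_m \ot E_{st}$; once the factor $I_m$ is moved past the index space, unitarity of $\tilde Q\,W_{\tilde\CT}$ is an immediate consequence of that of $Q\,W_\CT$, and the verification via condition~(\ref{TLP}) is routine.
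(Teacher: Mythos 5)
Your proposal is correct and follows essentially the same route as the paper: check that orthonormality is preserved and then show $W_{\tilde\CT}$ is, up to a tensor-factor reordering, $\frac1m I_m\otimes W_\CT$, so that unitarity of $\tilde Q\,W_{\tilde\CT}$ follows from that of $Q\,W_\CT$ via Theorem~\ref{PROPTLW}. You are in fact slightly more careful than the paper, which writes $W_{\tilde\CT}=\frac1m I_m\otimes W_\CT$ directly and leaves the permutation of the $M_r$ and $M_m$ slots implicit.
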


\begin{rem}
Let us stress that $\tilde{T}$ does not coincide with 
the Kronecker product of $T$ and the identity matrix. 
Indeed, $\tilde{T}$ has the same rank as $T$. Even if $T$ is 
the trivial solution, $\tilde{T}$ is non--trivial for $m>1$.
\end{rem}

Next, we will refine the estimates on the value
of $Q$ for representations where the
spanning vectors of the subspace $\CT$ have
certain specific properties.

Given an orthonormal basis $\{e_a\}_{a=1}^n$ of 
${\mathbb C}^n$, we will write 
$v \sim V$ if $V \in M_n$ is the matrix of
coefficients of a vector 
$v \in {\mathbb C}^n \otimes {\mathbb C}^n$, i.e. 
$v  = \sum_{a,b=1}^n V_{ab}\, e_a \otimes e_b$.
Relation (\ref{gVg}) implies that the following 
characteristics of a vector in 
${\mathbb C}^n \otimes {\mathbb C}^n$ 
are independent of the choice of a basis
of~${\mathbb C}^n$:\\
a) $v \sim V$ such that
$V$ is a symmetric or antisymmetric matrix;\\
b) $v \sim V$ such that
$V$ is a scalar multiple of a unitary matrix.

\begin{propn}\label{tausym}
Suppose that
$T =Q\,P_{\CT} \in {M}_{n^2}$ has rank $r$ and
satisfies (T1)--(T4)  and $\CT$ contains 
a non--zero vector $v \sim V$ such that matrix $V$ 
is symmetric or antisymmetric. Then  
 \begin{equation}\label{Qsym}   
  Q^2  \leq n^2 .
\end{equation}
\end{propn}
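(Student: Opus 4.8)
The plan is to combine the unitarity criterion of Theorem~\ref{PROPTLW} with the rigidity that a symmetric or antisymmetric coefficient matrix imposes on the product $V\bar{V}$. First I would reduce to a convenient spanning set. Replacing $v$ by $v/\scp{v}{v}^{1/2}$ preserves symmetry (resp.\ antisymmetry) of its coefficient matrix, and by Remark~\ref{HGW} the unitarity of $W_\CT$ does not depend on which orthonormal spanning set of $\CT$ is used; hence I may assume $\CT \sim \{V_1,\ldots,V_r\}$ with $V_1 = V$ and, by (\ref{vv}), $\tr(V^*V)=1$. The key observation is that (anti)symmetry collapses $V\bar{V}$ onto a positive matrix: if $V^t=V$ then $\bar{V}=V^*$ and $V\bar{V}=VV^*$, while if $V^t=-V$ then $\bar{V}=-V^*$ and $V\bar{V}=-VV^*$. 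In either case $\|V\bar{V}\,y\| = \|VV^*\,y\|$ for all $y$, and $VV^*$ is positive semidefinite with $\tr(VV^*)=\tr(V^*V)=1$.

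Next I would feed a single well-chosen test vector into the isometry $Q\,W_\CT$. Reading off the block form (\ref{WV}), the $(s,m)$ block of $W_\CT$ is $V_m\bar{V}_s$, so a vector $\xi \in {\mathbb C}^{r}\ot{\mathbb C}^n$ with blocks $\xi_m\in{\mathbb C}^n$ satisfies $(W_\CT\xi)_s = \sum_m V_m\bar{V}_s\,\xi_m$. I would take $\xi$ whose only nonzero block is the first, equal to $y$ (i.e.\ $\xi_m=\delta_{m1}y$), so that the first block of $W_\CT\xi$ is exactly $V_1\bar{V}_1\,y = \pm VV^*\,y$. Since $Q\,W_\CT\in U(nr)$ by Theorem~\ref{PROPTLW}, it is an isometry, and bounding the full norm below by the norm of the first block gives $\|y\|^2 = \|\xi\|^2 = Q^2\|W_\CT\xi\|^2 \geq Q^2\|VV^*\,y\|^2$.

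Finally I would take $y$ to be a unit eigenvector of $VV^*$ for its largest eigenvalue $\lambda_{\max}$. Because $VV^*$ is positive semidefinite of trace $1$ on an $n$-dimensional space, its largest eigenvalue dominates the average, $\lambda_{\max}\geq 1/n$. Substituting yields $1 \geq Q^2\lambda_{\max}^2 \geq Q^2/n^2$, i.e.\ $Q^2\leq n^2$, which is (\ref{Qsym}). The step I expect to require the most care is the opening reduction rather than any computation: one must verify that passing to an orthonormal spanning set containing the normalized (anti)symmetric vector is legitimate and that $Q$ is unchanged, so that evaluating on the one test vector genuinely pins down the global constant $Q$. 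Everything past that point is the isometry identity together with the elementary bound $\lambda_{\max}\geq \tr(VV^*)/n$.
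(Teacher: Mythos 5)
Your proposal is correct and follows essentially the same route as the paper: both normalize $v$, place it first in an orthonormal spanning set (justified by Remark~\ref{HGW}), observe that the $(1,1)$ block of the unitary matrix $Q\,W_\CT$ is $\pm Q\,VV^*$ --- positive semidefinite up to sign, with trace $\pm Q$ --- and conclude $Q\le n$ from the fact that this block is a contraction. The only cosmetic difference is that the paper invokes the general fact that a principal submatrix of a unitary matrix is a contraction and bounds $|\tr(QVV^*)|$ by $n$ times the largest eigenvalue, whereas you re-derive the contraction property with an explicit test vector and compare $\lambda_{\max}$ with the average $1/n$; these are the same estimate.
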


This statement along with Proposition~\ref{QVN}
implies, in particular, the following.

\begin{cor}
If $T =Q\,P_{\CT} \in {M}_{n^2}$ satisfies (T1)--(T4)
and $\CT \sim \{V\}$, where matrix $V$ 
is symmetric or antisymmetric, then $Q=n$.
\end{cor}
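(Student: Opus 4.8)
The plan is to combine the two inequalities just established, namely Proposition~\ref{QVN} and Proposition~\ref{tausym}, each of which handles one direction of the desired equality. Since the hypothesis is that $\CT \sim \{V\}$ is one--dimensional with $V$ symmetric or antisymmetric, the subspace $\CT$ trivially contains the non--zero vector $v \sim V$ with $V$ of the required type, so Proposition~\ref{tausym} applies directly and yields the upper bound $Q^2 \leq n^2$.

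For the lower bound I would invoke Proposition~\ref{QVN}. First I would observe that in the rank one case the normalization and unitarity conditions reduce to the system~(\ref{TLVVr}), as spelled out at the start of Section~\ref{ROC}. Concretely, since $T = Q\,P_\CT$ with $\CT \sim \{V\}$ satisfies (T1)--(T4), Theorem~\ref{PROPTLW} tells us that $Q\,W_\CT$ is unitary, and for $r=1$ the matrix $W_\CT$ is simply $V\bar V$; unwinding the unitarity of $Q\,V\bar V$ together with the normalization $\tr(VV^*)=1$ gives exactly the two relations in~(\ref{TLVVr}). Hence $V$ meets the hypotheses of Proposition~\ref{QVN}, and part~a) of that proposition furnishes $Q^2 \geq n^2$.

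Putting the two bounds together forces $Q^2 = n^2$, and since $Q>0$ and $n>0$ this gives $Q=n$, as claimed. The two inequalities are genuinely complementary: Proposition~\ref{QVN} is a general rank one lower bound valid for every admissible $V$, whereas Proposition~\ref{tausym} exploits precisely the extra symmetry/antisymmetry of $V$ to produce the matching upper bound, and it is the conjunction of these two facts under the single hypothesis on $V$ that pins down $Q$.

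I do not expect a serious obstacle here, since both ingredients are already available from the excerpt; the only point requiring a little care is the routine verification that the rank one unitarity condition really does collapse to~(\ref{TLVVr}) so that Proposition~\ref{QVN} is applicable, but this is immediate from the definition of $W_\CT$ in~(\ref{WV}) when $r=1$. The argument is therefore a short two--sided squeeze, and the whole content is the recognition that the symmetric/antisymmetric hypothesis activates both an upper and a lower estimate simultaneously.
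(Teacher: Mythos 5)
Your proof is correct and is exactly the argument the paper intends: the corollary is stated as an immediate consequence of combining Proposition~\ref{tausym} (upper bound $Q^2 \leq n^2$) with Proposition~\ref{QVN} (lower bound $Q^2 \geq n^2$ in the rank one case, after noting that the unitarity condition reduces to~(\ref{TLVVr}) when $r=1$). Nothing further is needed.
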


{\small 
\begin{ex}
See solutions listed in part a) of Theorem~\ref{ListUq}.
The corresponding matrices $V$ are (anti)symmetric
by the symmetry (\ref{cgcsym}) of the Clebsch--Gordan 
coefficients of the algebra $U(su_2)$.
\end{ex}
}

\begin{propn}\label{Guni}
Suppose that
$T =Q\,P_{\CT} \in {M}_{n^2}$ has rank $r$ and
satisfies (T1)--(T4)  and $\CT$ contains a non--zero vector $v \sim V$
such that $V$ is a scalar multiple of a unitary matrix.\\
a) Then  
\begin{equation}\label{Qguni}   
  Q^2 = \frac{n^2}{r} \,.
\end{equation}
b) If, in addition, $\frac{n^2}{4} < r < n^2$, then
either $r=\frac{n^2}{3}$ and $Q=\sqrt{3}$
or $r=\frac{n^2}{2}$ and $Q=\sqrt{2}$.
\end{propn}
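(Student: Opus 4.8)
The plan is to prove part a) by extracting a single diagonal block from the unitarity condition of Theorem~\ref{PROPTLW}, and then to derive part b) by feeding the resulting identity $Q^2=n^2/r$ into Theorem~\ref{QJW} and closing the argument with a rationality (Niven) observation.

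For part a), I would first invoke Remark~\ref{HGW} to assume without loss of generality that the distinguished vector $v \sim V$ is the first member $v_1$ of the orthonormal spanning set, so that $V_1$ is a scalar multiple of a unitary matrix; the normalization $\tr(V_1^* V_1)=1$ from (\ref{vv}) then forces $V_1^* V_1 = \frac{1}{n} I_n$. By Theorem~\ref{PROPTLW}, $Q W_\CT$ is unitary, i.e. $Q^2 W_\CT^* W_\CT = I_{nr}$. Using the block structure (\ref{WV}), the $(p,q)$ block of $W_\CT^* W_\CT$ equals $\sum_i V_i^t V_p^* V_q \bar V_i$; reading off the $(1,1)$ block and inserting $V_1^* V_1 = \frac{1}{n} I_n$ collapses the identity to $\frac{1}{n} \sum_i V_i^t \bar V_i = \frac{1}{Q^2} I_n$. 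Rewriting $V_i^t \bar V_i = (V_i^* V_i)^t$, transposing, and taking the trace with $\tr(V_i^* V_i)=1$ yields $r = n^2/Q^2$, which is (\ref{Qguni}).

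For part b), the hypothesis $\frac{n^2}{4} < r < n^2$ together with part a) gives $1 < Q^2 = n^2/r < 4$, and since $r > n^2/4$ Theorem~\ref{QJW} applies, forcing $Q = 2\cos\bigl(\frac{\pi}{k+2}\bigr)$ for some $k \in \mathbb N$. The decisive observation is that $Q^2 = n^2/r$ is rational (as $n,r$ are integers), while $Q^2 = 4\cos^2\bigl(\frac{\pi}{k+2}\bigr) = 2 + 2\cos\bigl(\frac{2\pi}{k+2}\bigr)$; hence $\cos\bigl(\frac{2\pi}{k+2}\bigr)$ is a rational cosine of a rational multiple of $\pi$. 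By Niven's theorem such a value lies in $\{0, \pm\frac12, \pm 1\}$, so $Q^2 \in \{0,1,2,3\}$, and the constraint $1 < Q^2 < 4$ leaves only $Q^2 = 2$ (whence $Q = \sqrt 2$, $r = n^2/2$) and $Q^2 = 3$ (whence $Q = \sqrt 3$, $r = n^2/3$).

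I expect part a) to be routine once the block bookkeeping in $W_\CT^* W_\CT$ is handled correctly, the only delicate points being the identity $V_i^t \bar V_i = (V_i^* V_i)^t$ and the legitimacy of moving $v$ into the first slot of the spanning set via Remark~\ref{HGW}. The main conceptual step is in part b): recognizing that the rationality of $Q^2$, an integrality consequence of $n$ and $r$ being integers, is precisely the extra input needed to collapse the a priori infinite discrete set $J_m$ of Theorem~\ref{QJW} down to the two values $\sqrt 2$ and $\sqrt 3$ through Niven's theorem.
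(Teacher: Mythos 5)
Your proof is correct and follows essentially the same route as the paper: part a) extracts the $(1,1)$ block of the unitarity condition of Theorem~\ref{PROPTLW} (the paper works with $W_\CT W_\CT^*$ where you use $W_\CT^* W_\CT$, an immaterial difference), and part b) combines $Q^2=n^2/r$ with Theorem~\ref{QJW} and a rationality argument. The only cosmetic divergence is that you invoke Niven's theorem where the paper observes directly that $4\cos^2\bigl(\frac{\pi}{m+2}\bigr)$ is an algebraic integer and that rational algebraic integers are ordinary integers --- two packagings of the same number-theoretic fact.
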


\vspace*{1mm}
{\small 
\begin{ex} 
For $T(\zeta)$ given by (\ref{ss1}), we have
$T(\zeta)=\sqrt{2} \, P_{\CT}$, 
$\CT \sim \{V_1,V_2\}$, where 
\begin{equation}\label{ss2} 
V_1  = \frac{1}{\sqrt{2}} \biggl( \begin{matrix}
  i\zeta & 0 \\
  0 & 1
 \end{matrix} \biggr)
  \,, \qquad
   V_2  = \frac{1}{\sqrt{2}} \biggl( \begin{matrix}
  0 & i \\
  1 & 0
\end{matrix} \biggr) 
\,, \qquad |\zeta|=1\,.
\end{equation}
Both $V_1$ and $V_2$ are scalar multiples of unitary matrices. 
So, $Q=\sqrt{2}$ as required by Proposition~\ref{Guni}.
Applying the recipe of Proposition~\ref{Vsca}, we can
use $V_1$ and $V_2$ in order to construct a unitary
tensor space representation of $TL_N(Q)$ of rank two
corresponding to $Q=n/\sqrt{2}$ for any even~$n$.
\end{ex}
} 

\vspace*{1mm}
\begin{propn}\label{VVG1} 
Suppose that
$T =Q\,P_{\CT} \in {M}_{n^2}$ has rank $r$ and
satisfies (T1)--(T4)  and $\CT \sim \{V_1,\ldots,V_r\}$, 
where matrix $V_1$ is non--singular and either
$V_k = V_1 \, g_k$  for $k=2,\ldots,r$ or
$V_k = g_k \, V_1$  for $k=2,\ldots,r$,
where, in both cases, all $g_k$ are unitary.
Then the following inequality holds:
\begin{equation}\label{Qguni2}   
  Q^2 \geq \frac{n^2}{r} \,.
\end{equation}
\end{propn}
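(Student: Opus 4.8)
The plan is to turn the unitarity criterion of Theorem~\ref{PROPTLW} into a single scalar identity for $Q$ and then estimate it by the same trace inequality that underlies Proposition~\ref{QVN}. First I would unfold the condition $Q W_\CT \in U(nr)$ of \eqref{TLP} into $W_\CT W_\CT^* = Q^{-2} I_{nr}$ and read off its diagonal $n\times n$ blocks. Using the block form \eqref{WV}, the $(i,k)$ block of $W_\CT$ is $V_k \bar V_i$ and the $(k,j)$ block of $W_\CT^*$ is $V_j^t V_k^*$, so the $(i,j)$ block of $W_\CT W_\CT^*$ equals $\sum_{k=1}^r V_k \bar V_i V_j^t V_k^*$ and unitarity is equivalent to $\sum_{k=1}^r V_k \bar V_i V_j^t V_k^* = Q^{-2}\delta_{ij} I_n$. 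For the present proof only the $(1,1)$ block is needed:
\[
  \sum_{k=1}^r V_k\, \bar V_1 V_1^t\, V_k^* = Q^{-2}\, I_n .
\]

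Next I treat the first case, $V_k = V_1 g_k$ (setting $g_1 := I_n$; all $g_k$ unitary). Substituting $V_k = V_1 g_k$ and $V_k^* = g_k^* V_1^*$ into the displayed relation and multiplying by the non-singular $V_1^{-1}$ on the left and by $(V_1^*)^{-1}$ on the right gives
\[
  \sum_{k=1}^r g_k\, (\bar V_1 V_1^t)\, g_k^* = Q^{-2}\, (V_1^* V_1)^{-1}.
\]
Taking the trace removes the conjugations: since each $g_k$ is unitary, cyclicity yields $\tr\bigl(g_k (\bar V_1 V_1^t) g_k^*\bigr) = \tr(\bar V_1 V_1^t) = \tr(V_1^* V_1) = 1$ by the normalization \eqref{vv}, so the left-hand side is simply $r$. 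Writing $A = V_1^* V_1$, a positive definite matrix with $\tr A = 1$, I obtain the clean identity $Q^2 = \tr(A^{-1})/r$. Inequality \eqref{Qguni2} then follows from the Cauchy--Schwarz (AM--HM) bound $\tr(A^{-1})\,\tr(A)\ge n^2$ for positive definite $A$ — the very estimate behind Proposition~\ref{QVN} — which gives $\tr(A^{-1}) \geq n^2$ and hence $Q^2 \geq n^2/r$.

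Finally, the second case, $V_k = g_k V_1$, reduces to the first by the transpose symmetry of Corollary~\ref{BTH}. The transposed family $\CT'' \sim \{V_1^t,\ldots,V_r^t\}$ is again a solution to (T1)--(T4) with the same $Q$ and the same rank $r$, and here $V_k^t = V_1^t\, g_k^t$ with $V_1^t$ non-singular and $g_k^t$ unitary (the transpose of a unitary is unitary). Thus $\CT''$ is of the form already handled, so $Q^2 \geq n^2/r$ holds for it and therefore for $\CT$ as well.

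I expect the only delicate point to be the bookkeeping in the block computation of $W_\CT W_\CT^*$ — keeping the conjugates, transposes, and adjoints straight so that the $(1,1)$ block comes out as $\sum_k V_k \bar V_1 V_1^t V_k^*$ rather than some transposed variant. Once that identity is in hand, the trace manipulation and the Cauchy--Schwarz estimate are routine, and the transpose reduction for the second case is immediate from Corollary~\ref{BTH}. As a consistency check, equality in Cauchy--Schwarz occurs exactly when $A$ is scalar, i.e.\ when $V_1$ is a scalar multiple of a unitary, recovering the equality $Q^2 = n^2/r$ of Proposition~\ref{Guni}.
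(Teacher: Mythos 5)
Your proof is correct and follows essentially the same route as the paper: extract the $(1,1)$ block of $Q^2 W_\CT W_\CT^* = I_{nr}$, substitute $V_k = V_1 g_k$, conjugate by $V_1$, take traces, and apply the Cauchy (AM--HM) inequality to $\tr\bigl((V_1^*V_1)^{-1}\bigr)$. The only cosmetic difference is that for the case $V_k = g_k V_1$ you reduce via the transposed family $\CT''$ while the paper uses $\CT'''$ from Corollary~\ref{BTH}; both reductions are immediate and equivalent.
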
 

\vspace{1mm}
\begin{rem}  
In the rank one case, by Proposition~\ref{QVN},
we have $Q \geq n$ and the lower bound is achieved only when 
the matrix of coefficients $V$ is a scalar multiple
of a unitary matrix. Therefore, in view of 
Proposition~\ref{Guni}, one could conjecture
that $Q \geq n/\sqrt{r}$ if $r \leq n^2/4$.
However, below (see Theorem~\ref{ListUqP}) 
we will construct a family of rank two solutions to (T1)--(T4) 
for $n=3$ for which $Q\in [2,+\infty)$.
This example refutes the conjecture  
since $2 < 3/\sqrt{2}$.
Thus, in the case $r \leq n^2/4$,
it remains an open problem to sharpen the 
estimate $Q \geq n/r$ established in Theorem~\ref{Qrs}.
\end{rem}

\section{Representations of rank one and two 
via $U_q(su_2)$}\label{UQ}
\subsection{$U_q(su_2)$ and Clebsch--Gordan decomposition}

Recall the definition of the universal enveloping Lie 
algebra $U(su_2)$ and its quantum deformation $U_q(su_2)$.
\begin{defn}
a) $U(su_2)$ is the unital *--algebra over $\mathbb C$
with generators $X^+$, $X^-$, $H$ 
and the following defining relations:
\begin{equation}\label{sss0}   
  H \, X^{\pm} - X^{\pm} H = \pm X^{\pm} \,, \qquad
    X^+ X^- - X^- X^+ = 2 H \,,
\end{equation}
\begin{equation}\label{ssst0}   
  H^* = H \,, \qquad (X^{\pm})^* = X^{\mp} .
\end{equation}
b)
$U_q(su_2)$, $q >0$, $q \neq 1$, is the unital *--algebra 
over $\mathbb C$ with generators $X^+$, $X^-$, $K$, $K^{-1}$ 
and the following defining relations:
\begin{equation}\label{sss}   
  K\, X^{\pm} = q^{\pm 1} X^{\pm} \, K \,, \qquad
    X^+  X^- - X^- X^+ = \frac{K^2-K^{-2}}{q-q^{-1}} \,,
\end{equation}
\begin{equation}\label{ssst}   
  K\,K^{-1} = K^{-1} K = 1 \,, \qquad
  (K^{\pm 1})^* = K^{\pm 1} \,, \qquad 
  (X^{\pm})^* = X^{\mp} \,.
\end{equation}
\end{defn}

For both algebras, the center is generated by the 
corresponding Casimir element: 
\begin{align}
 \label{casim1}  
 U(su_2):\qquad {}& 
 C_1= X^- X^+ + H(H+1) \,,\\
\label{casimq}     
 U_q(su_2):\qquad {}&
   C_q = X^- X^+ +
   \frac{(K-K^{-1})(qK-q^{-1}K^{-1})}{(q-q^{-1})^2} .
\end{align}

Both algebras become bialgebras if the comultiplication
is defined as follows:
\begin{align}
 \label{delef0}  
 U(su_2):\qquad {}& 
 \Delta(X^{\pm}) = X^{\pm} \otimes 1 + 
   1 \otimes X^{\pm} \,, \qquad
   \Delta(H) = H \otimes 1 + 1 \otimes H \,,\\
\label{delef}   
 U_q(su_2):\qquad {}&
  \Delta(X^{\pm}) = X^{\pm} \otimes K + 
   K^{-1} \otimes X^{\pm} \,, \qquad
   \Delta(K^{\pm 1}) = K^{\pm 1} \otimes K^{\pm 1} .
\end{align}

\begin{rem}
Setting formally $K^{\pm 1}=q^{\pm H}$ and considering 
the limit $q\to 1$, one recovers from the defining 
relations and comultiplication of $U_q(su_2)$ those 
of $U(su_2)$. Furthermore, the q--number, i.e.
a function ${\mathbb R}_+ \times {\mathbb C} \to {\mathbb C}$
defined as follows: $[t]_{q=1}=t$ and
\begin{equation}\label{qn}   
   [t]_q = \frac{q^t - q^{-t}}{q-q^{-1}}  
   		\qquad \text{for} \ q \neq 1 
\end{equation}
is continuous at $q=1$. For these reasons, one can
regard $U(su_2)$ as the limit of $U_q(su_2)$ as
$q \to 1$. In particular, the Clebsch--Gordan
coefficients of $U_q(su_2)$ are continuous functions
at $q=1$ and their limit as $q \to 1$ yields 
the Clebsch--Gordan coefficients of $U(su_2)$.
\end{rem}

An irreducible finite dimensional representation 
of  $U_q(su_2)$ is characterized by 
its highest weight $\Lambda$ which
is a non--negative integer.
Following the terminology used in physics, we will 
refer to $S=\frac{1}{2} \Lambda$ as spin. We denote by 
${\mathcal H}_S^q$ the irreducible $U_q(su_2)$--module
of dimension $n=2S+1$.
On ${\mathcal H}_S^q$,  
the Casimir element $C_q$ takes the value $[S]_q[S+1]_q$. 

The tensor square 
${\mathcal H}_S^q {\otimes} {\mathcal H}_S^q$ 
decomposes into a direct sum of irreducible modules, 
${\mathcal H}_S^q {\otimes} {\mathcal H}_S^q = 
\bigoplus_{J=0}^{2S} {\mathcal H}_J^q$.
Let $|m\rangle \in {\mathcal H}_S^q$  denote the
eigenvector of $K$ (or $H$ if $q=1$)
such that  $K|m\rangle =q^{m} |m\rangle$
(respectively, $H|m\rangle =m |m\rangle$).
Let $|J,m \rangle_q \in {\mathcal H}_J^q \subset 
{\mathcal H}_S^q {\otimes} {\mathcal H}_S^q$ 
denote the joint eigenvector of $\Delta(K)$ and
$\Delta(C_q)$ (or $\Delta(H)$ and $\Delta(C_1)$ if $q=1$), i.e.
\begin{equation}\label{jev1} 
\begin{aligned} 
 q \neq 1: &\ \ \
 \Delta(K)|J,m \rangle_q = q^{m} |J,m \rangle_q , \qquad  \ \  
 \Delta(C_q) |J,m \rangle_q = [J]_q[J+1]_q |J,m \rangle_q , \\
 q= 1: &\ \ \   
 \Delta(H)|J,m \rangle_{q=1} = m |J,m \rangle_{q=1} , \quad
  \Delta(C_1) |J,m \rangle_{q=1} = J(J+1) |J,m \rangle_{q=1} .
\end{aligned}   
\end{equation}
The sets of vectors
$\{ |J,m \rangle_q \}_{m=-J,\ldots,J}^{J=0,\ldots,2S}$ 
and 
$\{ |m_1 \rangle \otimes |m_2\rangle \}_{m_1,m_2=-S,\ldots,S}$
provide two orthonormal bases for 
${\mathcal H}_S^q {\otimes} {\mathcal H}_S^q$
related to each other as follows:
\begin{equation}\label{cgb}   
    |J,m \rangle_q = \sum_{m_1,m_2=-S}^S
    \bigl\{  S , S , m_1 ,  m_2 | J , m \bigr\}_q  \, 
    |m_1 \rangle \otimes |m_2\rangle \,,
\end{equation}
where $\{  S , S , m_1 ,  m_2 | J , m\}_q$ stands for 
the Clebsch--Gordan coefficient (see formulae
(\ref{cgc1})--(\ref{cgcsym}) in the Appendix).

 Let us identify the vector $e_a$ of the canonical 
basis $\{e_a\}_{a=1}^{2S+1}$ of ${\mathbb C}^{2S+1}$
with the vector $|S+1-a \rangle$ of~${\mathcal H}_S^q$.
Then, by (\ref{cgb}), the vector 
$|J,m \rangle_q \in  
  {\mathcal H}_S^q {\otimes} {\mathcal H}_S^q$ 
can be associated
with the matrix $V \in M_{2S+1}$ such that:
\begin{equation}\label{vjm}   
   V_{ab}   = \delta_{a+b+m,2S+2} \, 
   \bigl\{  S , S , S+1 -a ,  S+1-b \,| J,m \bigr\}_q  \,.
\end{equation}

{\small
\begin{ex}
For $S=1/2$, matrices
\begin{equation}\label{v12s12}
 V_1 = \frac{1}{\sqrt{q^2+1}} 
 \biggl( \begin{matrix}
  0 & q \\
  -1  & 0 
 \end{matrix} \biggr)  \,, \qquad
 V_2 = \frac{1}{\sqrt{q^2+1}} 
 \biggl( \begin{matrix}
  0 & 1 \\
  q  & 0 
 \end{matrix} \biggr)   
\end{equation} 
correspond, respectively to the vectors 
$|0,0 \rangle_q$ and $|1,0 \rangle_q$.
\end{ex}
}

Observe that $V_1$ and $V_2$ in (\ref{v12s12})
coincide with (\ref{vTq}) for $\zeta=-1$ and $\zeta=1$
(up to a sign and transposition, respectively).
Therefore they define unitary tensor space representations 
of $TL_N(Q)$. Motivated by this example,
we will look for other unitary tensor space representations 
of $TL_N(Q)$ determined in the same sense
by one or two joint eigenvectors of $\Delta(C_q)$
and $\Delta(K)$ (or their counterparts 
$\Delta(C_1)$ and $\Delta(H)$ if $q=1$).

\subsection{TL vectors and TL pairs}\label{TLPR}

Below, $\frac{1}{2}{\mathbb Z}_{\geq 0}$ stands
for the set $\{0,1/2,1,3/2,\ldots\}$.

\begin{defn}
a) Given $S \in \frac{1}{2}{\mathbb Z}_{\geq 0}$ and $q>0$, 
a vector 
$|J,m \rangle_q \in {\mathcal H}_S^q {\otimes} {\mathcal H}_S^q$ 
is called a TL vector if equations (T1)--(T4) admit a solution 
of the form $T=QP_q$, where  $Q>0$ and $P_q$ is the projection 
in ${\mathcal H}_S^q {\otimes} {\mathcal H}_S^q$ on the 
one dimensional subspace spanned by $|J,m \rangle_q $. \\[1mm]
b) Given $S \in \frac{1}{2}{\mathbb Z}_{\geq 0}$ and $q>0$, 
a pair of orthogonal vectors $|J_1,m_1 \rangle_q,\, 
|J_2,m_2 \rangle_q \in 
{\mathcal H}_S^q {\otimes} {\mathcal H}_S^q$ 
is called a TL pair if equations (T1)--(T4) admit a solution 
of the form $T=QP_q$, where $Q>0$ and $P_q$ is the projection in 
${\mathcal H}_S^q {\otimes} {\mathcal H}_S^q$
 onto the two dimensional subspace spanned by these vectors.
 \end{defn}
 
\begin{propn}\label{PVQ}
Given $S \in \frac{1}{2}{\mathbb Z}_{\geq 0}$,
a vector $|J,m \rangle_q $ (respectively, a pair of orthogonal
vectors $|J_1,m_1 \rangle_q$ and $|J_2,m_2 \rangle_q$) 
is a TL vector (respectively, a TL pair) either for all $q>0$ 
or only for a finite (possibly empty) set of values of~$q$.
\end{propn}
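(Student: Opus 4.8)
The plan is to turn the defining relations (T1)--(T4) into a single scalar condition on $q$ and then to recognise that condition as the vanishing of a rational function, whose zero set on $(0,\infty)$ is forced to be either all of $(0,\infty)$ or finite. First I would apply Theorem~\ref{trTTb}. Writing $P_q$ for the orthogonal projection onto the line spanned by $|J,m\rangle_q$ (respectively onto the plane spanned by the TL--pair vectors) and taking $r=1$ (respectively $r=2$), the vector (pair) is a TL vector (pair) precisely when $\tr_{123}(P_{12}P_{23})\neq 0$ and the scalar identity (\ref{trP}) holds. Thus the whole question reduces to: for which $q>0$ does (\ref{trP}) hold?

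The decisive step is to show that $g(q):=\tr_{123}(P_{12}P_{23})$ and $\tr_{123}(P_{12}P_{23})^2$ are rational functions of $q^{1/2}$ (of $q$ when $S$ is an integer). I would avoid the explicit Clebsch--Gordan formulas and describe $P_q$ spectrally. Since the decomposition $\mathcal H_S^q\otimes\mathcal H_S^q=\bigoplus_{J=0}^{2S}\mathcal H_J^q$ is multiplicity free, the line $\mathbb C\,|J,m\rangle_q$ is exactly the joint eigenspace of the commuting operators $\Delta(C_q)$ and $\Delta(K)$ for the eigenvalues $[J]_q[J+1]_q$ and $q^m$, so $P_q=\Pi_J\,\Pi^{\mathrm{wt}}_m$, where $\Pi_J$ projects onto $\mathcal H_J^q$ and $\Pi^{\mathrm{wt}}_m$ is the constant, $\{0,1\}$--valued diagonal projection onto the weight--$m$ subspace. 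In the canonical basis $\{e_a\otimes e_b\}$ the matrices of $\Delta(C_q)$ and $\Delta(K)$ have entries rational in $q^{1/2}$ by (\ref{casimq}) and (\ref{delef}); away from the finitely many $q$ at which two of the Casimir eigenvalues $[J]_q[J+1]_q$ coincide, $\Pi_J$ is the Lagrange interpolation polynomial in $\Delta(C_q)$, with coefficients rational in $q$. Hence the entries of $P_q$, and therefore $g(q)$ and $\tr_{123}(P_{12}P_{23})^2$, are rational in $q^{1/2}$; the pair case follows by adding the two single--vector projections.

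Now I would assemble the dichotomy. With $t=q^{1/2}$, the function $f:=g^2-n\,r\,\tr_{123}(P_{12}P_{23})^2$ is rational in $t$ on $(0,\infty)$, so $\{f=0\}$ is either all of $(0,\infty)$ (when $f\equiv 0$) or finite (when $f\not\equiv 0$). It remains to dispose of the nondegeneracy clause $g\neq 0$. Writing $g=\tr_{123}(P_{12}P_{23}P_{12})=\tr_{123}\bigl((P_{23}P_{12})^*(P_{23}P_{12})\bigr)\geq 0$, one sees that $g$ vanishes only where $P_{12}P_{23}=0$. For a TL vector a direct computation settles this: by (\ref{PiTau}) one has $g=\tr(V\bar V V^t V^*)$, and since (\ref{vjm}) makes $V$ band supported ($V_{ab}\neq 0$ only for $a+b=2S+2-m$), the product $V^2$ is diagonal, while the exchange symmetry (\ref{cgcsym}) gives $V_{ba}=\pm V_{ab}$; hence $g=\sum_{a,b}|V_{ab}|^4>0$ for every $q>0$. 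So in the rank--one case there is no degenerate locus and the TL--vector set equals $\{f=0\}$, which is all of $(0,\infty)$ or finite. For a TL pair, $g$ is again a rational function that is either strictly positive on $(0,\infty)$ or vanishes at finitely many points, and those finitely many points are absorbed into the finite exceptional set.

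I expect the rationality claim of the second paragraph to be the main obstacle: the individual Clebsch--Gordan coefficients carry square roots, and the real content is that these cancel inside the projection $P_q$. Routing the argument through the spectral description of $P_q$ sidesteps the explicit coefficient formulas, but care is needed at the finitely many $q$ where the Casimir eigenvalues collide (so the interpolation description of $\Pi_J$ breaks down); one must check that these points, together with any zeros of $g$ in the pair case, are consistent with and absorbed into the finite exceptional set, so that the clean ``all $q>0$ or finitely many $q$'' alternative survives.
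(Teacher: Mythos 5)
Your overall strategy coincides with the paper's: reduce membership in the TL set to the scalar condition (\ref{trP}) via Theorem~\ref{trTTb}, prove that the relevant trace combination is a rational function of $q$, and invoke the ``identically zero or finitely many zeros'' dichotomy. The difference is entirely in how you establish rationality, and that is where your argument has a genuine gap.

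Your claim that $\Delta(C_q)$ has entries rational in $q^{1/2}$ in the canonical (orthonormal) basis is false. In the orthonormal weight basis the generators act by $X^{\pm}|m\rangle=\sqrt{[S\mp m]_q[S\pm m+1]_q}\,|m\pm 1\rangle$, so the cross terms of $\Delta(X^-)\Delta(X^+)$, e.g.\ $X^-K^{-1}\otimes KX^+$, carry coefficients of the form $q^{m_2-m_1+1}\sqrt{[S+m_1]_q[S-m_1+1]_q[S-m_2]_q[S+m_2+1]_q}$, and the radicand is not a perfect square in general (already for $S=3/2$, $m_1=3/2$, $m_2=-1/2$ it equals $[3]_q[2]_q^2$). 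Consequently the Lagrange-interpolation projector $\Pi_J$, and hence $P_q=\Pi_J\Pi_m^{\mathrm{wt}}$, does \emph{not} have entries rational in $q^{1/2}$: the entries of $P_q$ are products $V_{ab}\bar V_{cd}$ of two Clebsch--Gordan coefficients, each carrying its own square root of $q$-factorials, and these do not pair up into perfect squares for distinct positions $(a,b)\neq(c,d)$ (again $S=3/2$, $m=1$ gives entries proportional to $\sqrt{[3]_q}$). So the rationality of $g(q)$ and of $\tr_{123}(P_{12}P_{23})^2$ cannot be read off from rationality of $P_q$ — the latter simply fails. The rationality of the \emph{traces} is true, but it rests on a cancellation that must be exhibited: the paper's Lemma~\ref{FPQ} does this by observing that the antidiagonal support of each $V_k$ together with the symmetry (\ref{cgcsym}) makes the particular products $V_kV_k$, $V_kV_k^t$, $V_k^tV_k$ diagonal with entries free of square roots, and then checking — with a case analysis on the index patterns for $r=2$ — that every term of $\tr_{123}(P_{12}P_{23})^2$ can be regrouped into a trace of such paired blocks. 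That regrouping is the actual content of the rationality lemma and is absent from your proposal.

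Two smaller points. Your treatment of the degeneracy locus $g=0$ in the rank-one case is in the right spirit and is more explicit than the paper's, but the formula $g=\sum_{a,b}|V_{ab}|^4$ is not quite right for $q\neq 1$: with $b=2S+2-m-a$ one gets $g=\sum_a|V_{ab}V_{ba}|^2$, and $|V_{ba}|\neq|V_{ab}|$ in general. And in the pair case, ``absorbed into the finite exceptional set'' does not resolve the scenario $f\equiv 0$ with $g$ vanishing at isolated points, which would make the TL set cofinite rather than all of $(0,\infty)$; this loose end is shared with the paper's own proof, but your phrasing suggests it has been handled when it has not.
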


Recall that we are predominantly interested in solutions to
$\mathrm{(T1)}$--$\mathrm{(T4)}$ with varying
$Q$, i.e. solutions that depend on a parameter $q$ in such a way
that $Q=Q(q)$ is a non--constant function of~$q$.
Proposition~\ref{PVQ} simplifies considerably the task of
finding all such solutions if $P_q$ is  the projection
onto a subspace spanned by one or two joint
eigenvectors of $\Delta(C_q)$ and $\Delta(K)$.
Indeed, Proposition~\ref{PVQ} implies that 
it suffices to restrict consideration to the case $q=1$
and then verify which of the found solutions
remain solutions to  $\mathrm{(T1)}$--$\mathrm{(T4)}$
for all $q>0$. Such a strategy allows us to 
establish the following. 

\begin{thm}\label{ListUq}
a) For $q=1$,  the exhaustive list of TL vectors $|J,m \rangle_{q=1}$
 is given by:
\begin{eqnarray}
\label{j1a} &&
 |0,0 \rangle_{q=1}  \quad 
 \text{for all}\ \  S \in  \frac{1}{2}{\mathbb Z}_{\geq 0};\\
\label{j1b} &&
 |1,0 \rangle_{q=1}  \quad 
 \text{for}\ \  S =  \frac{1}{2};\\
 \label{j1c} &&
 |2,0 \rangle_{q=1}  \quad 
 \text{for}\ \  S =  \frac{3}{2}.
\end{eqnarray}
In all the three cases, the corresponding
value of $Q$ is $Q= 2S+1$.\\[1mm]
b) The exhaustive list of vectors $|J,m \rangle_{q} $ which
are TL vectors for all $q>0$ is given by:
\begin{eqnarray}
\label{jqa} &&
 |0,0 \rangle_{q}  \quad 
 \text{for all}\ \  S \in  \frac{1}{2}{\mathbb Z}_{\geq 0};\\
\label{jqb} &&
 |1,0 \rangle_{q}  \quad 
 \text{for}\ \  S =  \frac{1}{2}.
\end{eqnarray}
In both cases, the corresponding
value of $Q$ is $Q= [2S+1]_q$.
\end{thm}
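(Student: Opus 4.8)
The plan is to exploit Proposition~\ref{PVQ} first: it reduces the search for vectors that are TL for all $q>0$ to an analysis at $q=1$ followed by a persistence check, so I would prove part~a) (the $q=1$ classification) and then deduce part~b). For a fixed joint eigenvector $|J,m\rangle_q$ I associate, via~(\ref{vjm}), the coefficient matrix $V\in M_{2S+1}$, whose nonzero entries sit on the anti-diagonal $a+b=2S+2-m$. Since the rank-one analysis requires $V$ non-singular, the shifted (shorter) anti-diagonals with $m\neq0$ produce a zero row and column and are excluded; hence $m=0$, and $V$ is the anti-diagonal matrix with entries $c_k:=\{S,S,k,-k|J,0\}_q$. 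For such a $V$ the matrix $W_{\CT}=V\bar V$ of~(\ref{WV}) is diagonal, so the unitarity criterion of Theorem~\ref{PROPTLW} (with $r=1$), together with the normalization~(\ref{vv}) and the symmetry~(\ref{cgcsym}), collapses at $q=1$ (where the $c_k$ are real) to the single scalar requirement that $|c_k|$ be independent of $k$; then $\sum_k|c_k|^2=1$ forces $Q=2S+1$. As a check, this is exactly the equality case $Q=n$ of Proposition~\ref{QVN}, an anti-diagonal matrix of constant modulus being a scalar multiple of a unitary one.

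The combinatorial core of part~a) is to classify the pairs $(S,J)$ for which $|c_k^{(J)}|$ is flat. I would recognize $(c_k^{(J)})_{k=-S}^{S}$ as the eigenvector, with eigenvalue $J(J+1)$ by~(\ref{jev1}), of the Casimir $\Delta(C_1)$ restricted to the weight-zero subspace of the $q=1$ module ${\mathcal H}_S^{q}{\otimes}{\mathcal H}_S^{q}$; in the product basis $|k\rangle\otimes|{-}k\rangle$ this restriction is the real symmetric tridiagonal matrix with diagonal $d_k=2S(S+1)-2k^2$ and off-diagonal $b_k=(S-k)(S+k+1)$. Substituting the flat ansatz $c_k=\epsilon_k/\sqrt{2S+1}$, $\epsilon_k=\pm1$, into the eigenvalue equation and setting $\sigma_k=\epsilon_k\epsilon_{k+1}$ turns it into the finite linear system
\begin{equation*}
 d_k + b_{k-1}\sigma_{k-1} + b_k\sigma_k = J(J+1),\qquad k=-S,\dots,S,
\end{equation*}
with boundary terms $b_{-S-1}=b_S=0$. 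The two endpoint equations force $J(J+1)\in\{0,4S\}$. The value $0$ gives $J=0$ with all $\sigma_k=-1$, which solves the system for every $S$ and yields the singlet family~(\ref{j1a}). For the value $4S$ I would solve the two-term recurrence for $u_k:=b_k\sigma_k$ in closed form, $u_k=k(k+1)-S(S-1)+2S(-1)^{k+S}$; consistency of the remaining endpoint equation forces $2S$ to be odd, and then $|\sigma_k|=|u_k|/b_k=1$ at every interior $k$ forces $S\in\{\tfrac12,\tfrac32\}$. Intersecting $J(J+1)=4S$ with the requirement that $J$ be a positive integer $\le2S$ leaves exactly $(S,J)=(\tfrac12,1)$ and $(\tfrac32,2)$, i.e.\ (\ref{j1b}) and~(\ref{j1c}), all with $Q=2S+1$.

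For part~b) I would take the three $q=1$ survivors and test persistence for general $q$. By Proposition~\ref{PVQ} each is TL either for all $q>0$ or only on a finite set, so it suffices to examine, at arbitrary $q>0$, the diagonal $V\bar V$ built from the real $q$-coefficients $c_k(q)$ subject to the $q$-analogue of~(\ref{cgcsym}), $c_{-k}(q)=(-1)^{2S-J}c_k(q^{-1})$; the TL condition becomes the requirement that $|c_k(q)\,c_{-k}(q)|$ be independent of $k$. For $J=0$ the singlet products telescope to the constant $1/[2S+1]_q$, so the singlet survives for all $q$ with $Q=[2S+1]_q$, giving~(\ref{jqa}). For $(S,J)=(\tfrac12,1)$ the space ${\mathbb C}^2$ carries a single such product, the condition is vacuous, and the vector survives with $Q=[2]_q$, giving~(\ref{jqb}). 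For $(S,J)=(\tfrac32,2)$ there are two distinct products, $c_{3/2}(q)c_{-3/2}(q)$ and $c_{1/2}(q)c_{-1/2}(q)$, which coincide at $q=1$ but separate for $q\neq1$; exhibiting one such $q$ and invoking the dichotomy excludes this case from part~b).

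The step I expect to be the main obstacle is the exhaustiveness in the case $J(J+1)=4S$ of part~a): passing from the flat-eigenvector requirement to the sign recurrence and proving that only the two sporadic pairs survive. Once the Casimir restriction is written as the explicit Jacobi matrix, the remaining work is the closed-form solution of the recurrence and the parity/modulus bookkeeping, together with the explicit $q$-deformed computation in part~b) confirming that the two products for $(\tfrac32,2)$ genuinely separate away from $q=1$.
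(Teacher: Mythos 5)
Your proposal is correct, and it reaches the classification by a genuinely different route than the paper. Both proofs agree on the outer structure: reduce to $m=0$ (singular $V$ otherwise), observe that $V\bar V$ is diagonal so that Theorem~\ref{PROPTLW} collapses to the requirement that all nonzero entries of $V$ have equal modulus $1/\sqrt{2S+1}$, and use Proposition~\ref{PVQ} to settle part~b) by a persistence check at general $q$ (the paper likewise disposes of $(S,J)=(\tfrac32,2)$ via the explicit $q$-dependence in Example~\ref{ex32}). The divergence is in the combinatorial core of part~a). The paper works directly with the explicit Clebsch--Gordan coefficients: Lemma~\ref{FpCG} for $p=0,1,2$ yields $J(J+1)=4S$, and a second necessary condition, the factorial-ratio identity (\ref{jjs2}), is then killed for $J\geq 6$ by the monotone bound $F(J)=(J/(J-2))^J$, leaving a finite check for $J\leq 5$. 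You instead never touch the closed-form coefficients: you realize $(c_k)$ as the flat eigenvector of the Jacobi matrix representing $\Delta(C_1)$ on the zero-weight space, convert flatness into the sign recurrence for $u_k=b_k\sigma_k$, and solve it exactly; the endpoints give $J(J+1)\in\{0,4S\}$, the terminal condition gives $2S$ odd, and the modulus constraint $|u_k|=b_k$ at interior sites eliminates all $S\geq 2$. I checked your closed form $u_k=k(k+1)-S(S-1)+2S(-1)^{k+S}$ and the endpoint bookkeeping; they are right, and the one step you assert without proof (that $|u_k|=b_k$ forces $S\in\{\tfrac12,\tfrac32\}$) does go through in a few lines: for $(-1)^{k+S}=+1$ the equality forces $k(k+1)=S(S-1)$, i.e.\ $k\in\{-S,S-1\}$, so any interior $k$ with $k+S$ even and $-S<k<S-1$ is fatal, and such $k$ exists precisely when $S\geq 2$. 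Your route buys a purely elementary recurrence argument with no asymptotic estimate and no explicit CG formulas (it also explains structurally why $J(J+1)=4S$ appears, as an endpoint condition); the paper's route needs only three explicit coefficients but pays with the ad hoc growth estimate on (\ref{jjs2}). Do spell out the $|u_k|=b_k$ elimination if you write this up, since it is the exhaustiveness claim and is currently only asserted.
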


\vspace*{0.5mm}
\begin{rem} 
It was observed long ago in the physical literature that 
(\ref{j1a}) and (\ref{jqa}) are TL vectors, see \cite{BaBa}
and \cite{BaKu}, respectively.
\end{rem}

\vspace*{0.5mm} 
{\small
\begin{ex}\label{ex32} 
For $S=\frac{3}{2}$, vector $|2,0 \rangle_{q} $ corresponds to 
the following matrix:
\begin{equation}\label{232}
 V = \frac{1}{\sqrt{q^{10}+q^6+q^4+1}}  
\Biggl( \begin{matrix}
  0 & 0 & 0 & q \\
  0 & 0 & q^4+q^2-1  & 0 \\
  0 & q^5-q^3-q & 0 & 0 \\
  - q^4 & 0 & 0 & 0
 \end{matrix} \, \Biggr)  \,.
\end{equation} 
It is interesting to remark that, for $S=3/2$, 
vector $|2,0 \rangle_{q}$ 
is a TL vector not only at $q=1$ but also at 
the points where $(q^4-1)^2=2q^4$, i.e. at
$q=(2 \pm \sqrt{3})^{\frac 14}$. 
These points correspond to $Q^2=12+18\sqrt{6}$.
\end{ex}
}

\vspace*{0.5mm} 
\begin{rem}
The fact that a vector $|J,m \rangle_q $ is not a TL vector
at $q=1$ does not exclude the possibility that it becomes
a TL vector at some other value of~$q$ (by Proposition~\ref{PVQ},
there can be only finite number of such values). 
Not aiming at finding all such cases, we give a particular
example below.
\end{rem}
 
\vspace*{0.5mm}  
{\small
\begin{ex} 
For $S= 1$, vector $|1,0 \rangle_{q} $ corresponds to 
the following matrix:
\begin{equation}\label{s1j1}
 V = \frac{1}{\sqrt{q^4+1}}  
\Biggl( \begin{matrix}
  0 & 0 & q \\
  0 & q^2 -1  & 0 \\
   -q & 0 & 0
 \end{matrix} \, \Biggr)  \,.
\end{equation} 
Clearly, the corresponding vector is not a TL vector
at $q=1$ since $V$ is singular at this point. However,
it becomes a TL vector at the points where $q^2-1=\pm q$, i.e.
at  $q=(\sqrt{5} \pm 1)/2$. These points correspond to $Q=3$.
\end{ex}
}
 
 \begin{thm}\label{ListUqP}
a) For $q=1$,  the exhaustive list of TL pairs $|J_1,m_1 \rangle_{q=1}$,
$|J_2,m_2 \rangle_{q=1} $ such that $J_1 \geq J_2$  is given by:
\begin{eqnarray}
\label{jj1a} &&
i)\ \ \ \, |1,m \rangle_{q=1}  , \ \ |1,-m \rangle_{q=1}; \qquad
ii)\ \ \, |1,m \rangle_{q=1}  , \ \ |1,0 \rangle_{q=1}; \\
\label{jj1b} &&
 iii)\ \, |2,m \rangle_{q=1}  , \ \ |1,-m \rangle_{q=1}; \qquad
iv)\ \ |2,m \rangle_{q=1}  , \ \ |1,0 \rangle_{q=1}; \\
\label{jj1c} &&
v)\ \ \ |2,m \rangle_{q=1}  , \ \ |2,-m \rangle_{q=1} .
\end{eqnarray}
\nopagebreak
In all of these cases, $S=1$, $m=\pm 1$, and the corresponding
value of $Q$ is $Q= 2$.\\[1mm]
b) The exhaustive list of pairs of vectors $|J_1,m_1 \rangle_q $,
$|J_2,m_2 \rangle_q $ such that $J_1 \geq J_2$ which
are TL vectors for all $q>0$ is given by:
\begin{eqnarray}
\label{jjqa} && 
 i) \ \ \, |2,1 \rangle_{q}  , \ \ |1,-1 \rangle_{q} ; \qquad
 ii) \ \ \, |2,-1 \rangle_{q}  , \ \ |1,1 \rangle_{q} . 
\end{eqnarray}
In both cases, $S=1$ and the corresponding
value of $Q$ is $Q= q^2+q^{-2} \equiv [2]_{q^2}$.
\end{thm}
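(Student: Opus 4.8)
The plan is to verify both parts through the single scalar criterion of Theorem~\ref{trTTb}: for a rank--two orthogonal projection $P$ with $P_{12}P_{23}\neq 0$, a solution $T=QP$ to (T1)--(T4) exists precisely when (\ref{trP}) holds, and then $Q$ is fixed by (\ref{QP}). Thus the whole classification reduces to testing, for each candidate pair, whether
\[
\bigl(\tr_{123}(P_{12}P_{23})\bigr)^2 = n\,r\,\tr_{123}(P_{12}P_{23})^2, \qquad n=2S+1,\ r=2 .
\]

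First I would translate this condition into the language of coefficient matrices. Writing $|J_1,m_1\rangle_q\sim V_1$ and $|J_2,m_2\rangle_q\sim V_2$ through (\ref{vjm}), both traces become polynomials in the Clebsch--Gordan coefficients. The crucial simplification is the weight grading: each $V_s$ is supported on the single anti--diagonal $a+b=2S+2-m_s$, so products such as $V_s^t\bar V_s$ and $V_tV_t^*$ are diagonal, and $\tr_{123}(P_{12}P_{23})$ collapses to a bilinear expression in the squared column norms and squared row norms of $V_1,V_2$. The quartic trace $\tr_{123}(P_{12}P_{23})^2$ carries analogous weight bookkeeping: only the terms whose accumulated weights cancel survive, leaving a short explicit sum. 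I expect the hypothesis $P_{12}P_{23}\neq 0$ to already eliminate the extreme--weight vectors (those with $|m_s|$ maximal, whose $V_s$ has a single nonzero entry), thereby narrowing the admissible weights.

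For part a) I would set $q=1$ and insert the closed--form $U(su_2)$ Clebsch--Gordan coefficients, also exploiting their symmetry (\ref{cgcsym}) to pair up equivalent cases. After the weight analysis pins down the admissible $(m_1,m_2)$, what remains is to check the resulting polynomial identity for each spin $S$ and each admissible $(J_1,J_2)$; this should yield exactly the five families (\ref{jj1a})--(\ref{jj1c}), all forcing $S=1$, $m=\pm1$, and, via (\ref{QP}), $Q=2$. For part b) I would invoke Proposition~\ref{PVQ}: a pair that is a TL pair for every $q>0$ is, in particular, one at $q=1$, hence it must already occur in the part a) list. It then suffices to recompute the two traces for each of those five families using the $q$--deformed coefficients and to decide, again by Proposition~\ref{PVQ}, whether (\ref{trP}) holds identically in $q$ or only at isolated points. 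Only the family $|2,m\rangle_q,\,|1,-m\rangle_q$ should survive, giving (\ref{jjqa}), and evaluating (\ref{QP}) then returns $Q=q^2+q^{-2}$.

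The main obstacle is the exhaustiveness in part a): ruling out every spin $S\neq 1$. Since $S$ ranges over an infinite set, a brute--force check per spin is insufficient; the real work is to extract from the weight grading and from the selection rules of the Clebsch--Gordan coefficients a uniform reason why, for $S\geq 3/2$, the quadratic and quartic traces can no longer satisfy (\ref{trP}). Converting the weight bookkeeping into such an a priori bound on $S$ --- so that only finitely many spins need be inspected by hand --- is the step I expect to require the most care.
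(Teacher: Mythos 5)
Your overall architecture is sound for part b) (reduce to the $q=1$ classification via Proposition~\ref{PVQ}, then test which pairs deform), but for part a) the proposal stops exactly where the proof actually begins. You correctly identify that the whole difficulty is exhaustiveness over the infinitely many spins $S$, and you state that extracting ``a uniform reason why, for $S\geq 3/2$, the traces can no longer satisfy (\ref{trP})'' is the step requiring the most care --- but you do not supply that reason. This is the heart of the theorem, and without it the argument is not a proof. Moreover, the scalar criterion of Theorem~\ref{trTTb} that you chose as your engine is probably too weak to deliver it: a single polynomial identity in the Clebsch--Gordan data does not obviously force enough structure to rule out all large $S$ uniformly.

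The paper instead works with the matrix unitarity criterion of Theorem~\ref{PROPTLW}, whose rigidity is what makes the exhaustiveness argument go through. Writing the unitarity of $QW_{\CT}$ as the system (\ref{veq1})--(\ref{veq3}) for $A=V_2V_1$, $B=V_1V_1^t$, $C=V_2V_2^t$ and using the $(\pm)$--symmetry (\ref{cgcsym}) of the $q=1$ coefficients, one derives Lemma~\ref{BCQ}: the diagonal matrices $B$ and $C$ are singular and every non--zero diagonal entry equals $Q^{-1}$. Combined with $\tr B=1$, the bound $Q\geq n/r=(2S+1)/2$ from Theorem~\ref{Qrs}, and the explicit low--order Clebsch--Gordan values of Lemma~\ref{FpCG} ($f_0,f_1,f_2$), this forces, for $S\geq 9/2$, a short list of incompatible polynomial constraints on $(J_1,m_1,S)$; the finitely many remaining spins $S\leq 4$ are then checked directly. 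Also, your expectation that $P_{12}P_{23}\neq 0$ eliminates the extreme--weight vectors is not how the weight restriction arises: the constraint $m_1m_2\leq 0$ comes from the unitarity equations themselves (a zero first row of both $V_1$ and $V_2$ would contradict (\ref{aabb1})). To turn your outline into a proof you would need to either reproduce an analogue of Lemma~\ref{BCQ} from the trace identity alone --- which I do not see how to do --- or switch to the unitarity criterion as the paper does.
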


\vspace*{0.5mm}
{\small
\begin{ex} 
For $m=1$, the TL pair ii) in (\ref{jj1a}) corresponds to
\begin{equation}\label{s1j11}
 V_1 = \frac{1}{\sqrt{2}} 
\Biggl( \begin{matrix}
  0 & 1 & 0 \\
  -1 & 0  & 0 \\
   0 & 0 & 0
 \end{matrix} \, \Biggr) , \qquad
 V_2 = \frac{1}{\sqrt{2}}
\Biggl( \begin{matrix}
  0 & 0 & 1 \\
  0 & 0  & 0 \\
   -1 & 0 & 0
 \end{matrix} \, \Biggr) .
\end{equation}  
The first TL pair in (\ref{jjqa}) corresponds to
\begin{equation}\label{s1j21}
 V_1 = \frac{1}{\sqrt{q^4+1}}  
\Biggl( \begin{matrix}
  0 & 1 & 0 \\
  q^2 & 0  & 0 \\
   0 & 0 & 0
 \end{matrix} \, \Biggr) , \qquad
 V_2 = \frac{1}{\sqrt{q^4+1}}  
\Biggl( \begin{matrix}
  0 & 0 & 0 \\
  0 & 0  & q^2 \\
   0 & -1 & 0
 \end{matrix} \, \Biggr) .
\end{equation} 
In \cite{By2}, we will construct a more general rank two
solution containing (\ref{s1j21}) as a particular case.
\end{ex}
}

\appendix
\section{Appendix}
\begin{proof}[\bf Proof of Proposition~\ref{TTQT}]
Part a). The case $T=0$ is trivial, so we assume that
$T\neq 0$. Take $N = 4$ and consider
$T_{12}=T \,{\otimes}\, I_n \,{\otimes}\, I_n$, 
$T_{23}=I_n \,{\otimes}\, T \,{\otimes}\,  I_n$, and
$T_{34}=I_n \,{\otimes}\, I_n \,{\otimes}\, T$.  
Using relations $\mathrm{(T3)}$--$\mathrm{(T4)}$
and taking into account that $T_{12}$ commutes with 
$T_{34}$, we obtain
\begin{equation}\label{1234}
\begin{aligned}
{}& T^2 \otimes T^m \equiv
T^2_{12} \,   T_{34}^m =  
 T_{12} \, T_{34} \, T_{12} \, T_{34}^{m-1}
 \stackrel{\scriptscriptstyle (T4)}{=} 
 T_{12} \, T_{34} \, T_{23} \, T_{34} \, T_{12} \, T_{34}^{m-1} \\
{}& 
  = T_{34} \, T_{12} \, T_{23} \, T_{12} \, T_{34}^m 
  \stackrel{\scriptscriptstyle (T3)}{=}
  T_{34} \, T_{12} \, T_{34}^m =
  T_{12} \, T_{34}^{m+1} \equiv T \otimes T^{m+1} ,
\end{aligned}
\end{equation}
for every positive integer $m$.
Whence, by taking the partial trace $\tr_{34}$, 
we obtain
\begin{equation}\label{tr1234}
  T^2 \, \tr\, (T^m) \, = T  \, \tr\, (T^{m+1}) \,.
\end{equation}

Since $T \neq 0$, we infer from (\ref{tr1234}) that
 $\tr(T^m)=0$ implies $\tr(T^{m+1})=0$. Therefore,
if $\tr T =0$, then $\tr(T^m)=0$ for every positive integer~$m$.
Thus, $T$ is a nilpotent. 
Suppose it is a nilpotent of order $k>2$. 
Then the r.h.s. of (\ref{1234}) vanishes for $m=k-1$
but the l.h.s. does not vanish (since $A \ot B \neq 0$ 
if $A, B \neq 0$). Thus, $T$ is a nilpotent
of order two.
 
If $\tr T \neq 0$,  then $T$ has at least one non--zero
eigenvalue and so it is not a nilpotent, i.e.
$T\neq 0$ and $T^2\neq 0$. 
In this case, (\ref{tr1234}) for $m=1$ 
implies that $\tr\,(T^2)\neq 0$. Whence it follows
that $T$ satisfies relation (T2) with
$Q= \tr (T^2)/\tr T \neq 0$.
 
Part b). By the part a), we know that relations 
(T3)--(T4) imply relation~(T2) irrespective of 
whether or not $T$ is Hermitian. Let us show that 
(T1)--(T3) imply~(T4). Consider 
the following Hermitian matrix: 
$H=T_{23}T_{12}T_{23}-T_{23}$.
Then we have
\begin{align*}
{}  \tr_{123} \bigl(H^2\bigr) 
{}  & \stackrel{\scriptscriptstyle (T2)}{=}  
 Q \,\tr_{123} \bigl( T_{23} T_{12} T_{23} T_{12}T_{23}
- 2 T_{23} T_{12}T_{23} + T_{23} \bigr)\\
{}  & \stackrel{\scriptscriptstyle (T3)}{=}  
 Q \,\tr_{123} \bigl(T_{23} - T_{23} T_{12} T_{23}\bigr) 
  = Q \,\tr \bigl(T_{23} - T_{12} T^2_{23}\bigr)  \\
{}  &  \stackrel{\scriptscriptstyle (T2)}{=}   
 Q \,\tr_{123} \bigl( T_{23} - Q\,   T_{12} T_{23} \bigr)
 \stackrel{\scriptscriptstyle (T2)}{=}
  Q \,\tr_{123} \bigl( T_{23} - T^2_{12} T_{23} \bigr) \\ 
 {}  &=  
 Q \,\tr_{123} \bigl( T_{23} -  T_{12} T_{23} T_{12} \bigr) 
 \stackrel{\scriptscriptstyle (T3)}{=}
  Q \,\tr_{123} \bigl( T_{23} -  T_{12}  \bigr) = 0 \,. 
\end{align*}
But, since $H$ is Hermitian, $\tr\,(H^2)=0$ 
implies that $H=0$, that is relation $\mathrm{(T4)}$ holds.
One can show in the same way that  
(T1)--(T2) along with (T4) imply~(T3).
\end{proof}

\begin{proof}[\bf Proof of Theorem~\ref{trTTb}]
Given an orthogonal projection $P \in M_{n^2}$ of 
rank $r$,
consider the following family of Hermitian matrices:
$H(\alpha)=P_{12} P_{23} P_{12} - \alpha\, P_{12}$,
$\alpha \in {\mathbb R}$.
If there exists $\alpha_0 >0$ such that
$H(\alpha_0)=0$, then $T=P/\sqrt{\alpha_0}$
satisfies relations (T1)--(T3)
and hence, by Proposition~\ref{TTQT}, relation (T4)
as well. We have
\begin{align*}
{} f(\alpha) \equiv \tr_{123}\bigl(H^2(\alpha)\bigr) &=  
\tr_{123} \bigl( P_{12} P_{23} P_{12} P_{23} P_{12}
- 2\alpha\, P_{12} P_{23} P_{12}  + \alpha^2 P_{12} \bigr)\\
{}& = \alpha^2 n r - 2\alpha\, \tr_{123} \bigl(P_{12} P_{23} \bigr)
 + \tr_{123} ( P_{12} P_{23} )^2   \,. 
\end{align*}
Since $H(\alpha)$ is Hermitian, $H^2(\alpha)$ 
is positive semi--definite and therefore
$f(\alpha) \geq 0$. This means
that the discriminant of the quadratic
polynomial $f(\alpha)$ is non--positive, i.e.
\begin{equation}\label{trP2} 
   \bigl( \tr_{123}\,(P_{12} P_{23}) \bigr)^2 \leq 
   n\,r \, \tr_{123}\, (P_{12} P_{23} )^2   
\end{equation}
for every projection~$P$.
Hence a necessary and sufficient condition 
for equation $f(\alpha)=0$ to have a solution 
is the condition that the discriminant 
of $f(\alpha)$ vanishes. Thus, $H(\alpha)$
vanishes for some $\alpha=\alpha_0$ if
and only if the inequality (\ref{trP2})
for a given $P$ becomes an equality.
For such $P$, $f(\alpha)$ acquires the
following form:
$f(\alpha)=  \bigl(\alpha n r - 
\tr_{123}\,(P_{12} P_{23}) \bigr)^2/(nr)$.
Hence $f(\alpha_0)=0$ for 
$\alpha_0= \tr_{123}\,(P_{12} P_{23})/(nr)$.

It remains to note that the condition 
$P_{12} P_{23} \neq 0$ guarantees that 
$\alpha_0 >0$. Indeed,
$\tr_{123}\,(P_{12} P_{23})= 
\tr_{123}\,(P^2_{12} P^2_{23}) =
\tr_{123}\,\bigl( (P_{12} P_{23})
  (P_{12} P_{23})^* \bigr) \geq 0$,
where the equality occurs only if $P_{12} P_{23}=0$.
\end{proof}

\begin{proof}[\bf Proof of Proposition~\ref{trTT}]

Relations (T1)--(T2) imply that
$T=QP$, where $P$ is an orthogonal projection
and $Q>0$. Therefore,
the hypotheses listed in b) imply the
same hypotheses in terms of~$P$, i.e.
the hypotheses of Theorem~\ref{trTTb}
(as noted at the end of the proof of
Theorem~\ref{trTTb}, the condition
$\tr_{123}\,(P_{12} P_{23}) \neq 0$
is equivalent to $P_{12} P_{23} \neq 0$).
Hence b) implies~a).

Verification that a) implies c) is the following:
\begin{align}
\label{ac1} {}&    
   \tr_{123}\, (T_{12} T_{23} )^2 = 
   \tr_{123}\, (T_{12} T_{23} T_{12} T_{23})
   \stackrel{\scriptscriptstyle (T3)}{=} 
   \tr_{123}\, (T_{12} T_{23} ) , \\
\label{ac2} {}&
   Q \tr_{123}\,(T_{12} T_{23})
   \stackrel{\scriptscriptstyle (T2)}{=} 
    \tr_{123}\,(T^2_{12} T_{23}) =
   \tr_{123}\,(T_{12} T_{23} T_{12})
   \stackrel{\scriptscriptstyle (T3)}{=} 
    \tr_{123}\,(T_{12} ) 
   \stackrel{\scriptscriptstyle (\ref{tr123})}{=} 
    Q \,r \, n .
\end{align}
Finally, it is clear that c) implies~b). 
\end{proof}

\begin{proof}[\bf Proof of Theorem~\ref{PROPTLW}]
Using that $E_{ab}E_{cd}=\delta_{bc}\,E_{ad}$ and
$\tr(E_{ab})=\delta_{ab}$, it is straightforward
to compute for $P_\CT$ given by (\ref{PiTau})
the following traces:
\begin{align}
\label{trPP}  
{}&  \tr_{123}\bigl( (P_{\CT})_{12} (P_\CT)_{23} \bigr) =
  \sum_{s,m=1}^r  
  \tr \Bigl(  V_s \bar{V}_m V^t_m V_s^* \Bigr) \,, \\
\label{trPP2}  
{}&  \tr_{123}\bigl( (P_{\CT})_{12} (P_\CT)_{23} \bigr)^2 =
  \sum_{s,s',m,m'=1}^r   
  \tr \bigl( V_s \bar{V}_{m'} V^t_m V_s^* 
  V_{s'} \bar{V}_m V^t_{m'} V_{s'}^* \bigr) \,.
\end{align}
 
Note that $ \tr_{123} \bigl(
(P_{\CT})_{12} (P_\CT)_{23} \bigr)
= \tr W_\CT W^*_\CT$. Therefore 
$(P_{\CT})_{12} (P_\CT)_{23} \neq 0$
iff $W_\CT \neq 0$ (cf. the proof of Theorem~\ref{trTTb}).

Consider the following partitioned matrix 
containing $r^2$ blocks of the size $n \,{\times}\, n$:
\begin{align}
\label{AV}  
{}&  A_\CT =  W_\CT W_\CT^* =
   \sum_{s,m=1}^r  E_{sm} \otimes
   \Bigl( \sum_{k=1}^r V_k \bar{V}_s V^t_m V_k^* \Bigr) \,.
\end{align} 
Observe that (\ref{trPP}) and (\ref{trPP2})
coincide with $\tr A_\CT$ and $\tr(A_\CT^2)$, respectively.
Therefore condition (\ref{trP}) for $P_\CT$ acquires
the following form: 
\begin{equation}\label{AAA}
 (\tr A_\CT )^2 = nr \, \tr(A_\CT^2) \,.
\end{equation} 

Suppose that $T=QP_\CT$ is a solution to (T1)--(T4).
Then, by Theorem~\ref{trTTb}, equality (\ref{AAA}) holds
and $W_\CT \neq 0$ so that $\beta^2 \equiv \tr(A_\CT)/nr >0$. 
Since $A_\CT$ is Hermitian, matrix 
$H=(A_\CT - \beta^2 I_{nr})^2$ is positive semi--definite.
But equality (\ref{AAA}) implies that $\tr H =0$. Whence $H=0$ 
and thus $A_\CT = \beta^2 I_{nr}$. (Another way to establish
this result is to invoke the Cauchy inequality for the eigenvalues
of $A_\CT$.) Therefore $Q W_\CT$ is unitary for $Q=1/\beta$. 
The converse implication is obvious:
if $Q W_\CT$ is unitary, then 
$A_\CT=Q^{-2} I_{nr}$ and therefore equality 
(\ref{AAA}) holds. Hence, by Theorem~\ref{trTTb}, 
$T=QP_\CT$ is a solution to (T1)--(T4).
\end{proof}
 
\begin{proof}[\bf Proof of Proposition~\ref{QVN}] 
a) Let $\lambda_1,{\ldots},\lambda_n >0$ 
be the singular values of~$V$. 
The first relation in (\ref{TLVVr})
means that $\sum_{a=1}^n \lambda^2_a=1$.
Since 
$|\det V|^2 = \det V^* V = \prod_{a=1}^n \lambda^2_a$,
the arithmetic--geometric mean inequality implies that 
$|\det V|^2 \leq n^{-n}$. Substituting
this inequality in the first formula in (\ref{muV}),
we obtain  the estimate~(\ref{Qdim}).
Let us remark that the same estimate follows from the 
second  formula in (\ref{muV}) if we apply
the Cauchy inequality:
$\tr\bigl( (V {V}^*)^{-1}\bigr)=
\sum_{a=1}^n \lambda_a^{-2} = 
(\sum_{a=1}^n \lambda_a^2)(\sum_{a=1}^n \lambda_a^{-2})
\geq 
\bigl(\sum_{a=1}^n \lambda_a\, \lambda_a^{-1}\bigr)^2 
= n^2$. \\
b) The preceding consideration shows that the
equality $Q=n$ is achieved when $|\det V|^2$ has its 
maximal possible value, $n^{-n}$, that is when the 
arithmetic--geometric mean inequality for $\lambda_a^2$ 
turns into an equality. This is possible only if all 
$\la_a$ are equal which in turn implies that $V^* V$ 
is a scalar multiple of the identity matrix and 
whence $V=\alpha G$, where $G$ is unitary 
(and $\alpha=n^{-\frac 12}$ by the first relation 
in~(\ref{TLVVr})). Clearly, any $V$ of such a form
satisfies the second relation in~(\ref{TLVVr}).
\end{proof}

\begin{proof}[\bf Proof of Theorem~\ref{Qrs}]

Consider a family of Hermitian matrices:
$J(\alpha)=T_{12}+T_{34} + \alpha\, T_{23}$,
$\alpha \in \mathbb R$.
For the trace in $M_{n^4}$, we have
(cf. (\ref{tr123}) and (\ref{trt})) 
$\tr_{1234}\,(T_{12}) = Q n^2 r$,
$\tr_{1234}\,(T_{12} T_{23}) = n^2 r$, etc.
Using these equalities, we find
\begin{align*}
 f(\alpha) \equiv
 \tr_{1234}\bigl(J(\alpha)^2\bigr)  {}& =   
 r n^2 Q^2 \alpha^2 + 4 r n^2 \alpha +
 2 r Q^2 (n^2+r)  \,.
\end{align*}
Since $J^2(\alpha)$ is positive semi--definite,
the quadratic polynomial $f(\alpha)$ must be 
non--negative. The condition that the discriminant
of $f(\alpha)$ is non--positive yields  
the first estimate in (\ref{Qdim2}).

In order to prove the second estimate in (\ref{Qdim2}),
recall that if $\{\sigma_a\}_{a=1}^n$ are 
the singular values of a matrix  $A \in {M}_{n}$,
then $||A||_p = \bigl(\sum_{a=1}^n \sigma_a^p \bigr)^{1/p}$,
$p \geq 1$, defines the Schatten $p$--norm of~$A$.
We will need the following properties of these norms
(see, e.g. \cite{Ber}, Proposition~9.2.3 and Proposition~9.3.6):
\begin{align}
\label{ss}
 ||A+B||_p \leq ||A||_p + ||B||_p  \,, \qquad\quad
 ||A\,B||_p \leq ||A||_{2p} \, ||B||_{2p} \,.
\end{align} 

Since $T$ is a solution to (T1)--(T4),
it has the form $T=Q P_\CT$. Let $W_\CT$ be
the corresponding $rn\,{\times}\,rn$ matrix 
given by~(\ref{WV}).
Using inequalities (\ref{ss}),
we obtain 
\begin{align*}
 ||W_{\CT}||_1 &= \Big|\Big|   \sum_{k,m=1}^r
  E_{km} \ot V_m \bar{V}_k \,\Big|\Big|_1 \leq
  \sum_{k,m=1}^r \Big|\Big| E_{km} \ot V_m \bar{V}_k 
  \,\Big|\Big|_1  \\
 {}&
 =
  \sum_{k,m=1}^r || V_m \bar{V}_k ||_1
    \leq \sum_{k,m=1}^r || V_m ||_2 \, ||V_k ||_2  
  = r^2 \,,
\end{align*}
where we have taken into account that
$||V_k||_2=1$ for all~$k$ by the normalization condition~(\ref{vv}). 
Thus, $||W_{\CT}||_1 \leq r^2$.
On the other hand,  by Theorem~\ref{PROPTLW},
$Q W_{\CT}$ is a unitary matrix. Therefore,
$Q ||W_{\CT}||_1 = r\,n$. Hence the estimate
(\ref{Qdim2}) follows.
\end{proof}

Before proving Theorem~\ref{QJW}, we will
prove an auxiliary statement.

\begin{lem}\label{dPRN}
For $d_{N}(n,r) = \tr_{1,\ldots, N} (P_{n,r,N})$,
the following recurrence relation holds:
\begin{equation}\label{dtrt} 
 d_{N+1}(n,r) = n \, d_{N}(n,r) - r \, d_{N-1}(n,r) \,.
\end{equation} 
For the initial values $d_0(n,r)=1$ and $d_1(n,r)=n$,
the solution to (\ref{dtrt}) is given by:
\begin{align}
\label{dna1} 
{}& \text{if  } r \neq \frac{n^2}{4}: \qquad
 d_{N} (n, r) = 
  r^{\frac{N}{2}}
 \, \frac{ \xi^{N+1} - \xi^{-N-1}}{\xi-\xi^{-1}} \,,  \qquad
 \text{where}\quad \xi+\xi^{-1} = \frac{n}{\sqrt{r}}  \,,\\
\label{dna2} 
{}& \text{if  } r = \frac{n^2}{4}: \qquad
 d_{N} (n, r) = 
  r^{\frac{N}{2}} \, (N+1) \,.
\end{align} 
\end{lem}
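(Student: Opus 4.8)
The plan is to prove the two assertions in turn: the three--term recurrence \rf{dtrt} is the substantive part, after which the closed forms \rf{dna1}--\rf{dna2} follow by solving a constant--coefficient linear recurrence. Throughout I abuse notation and write $\SP_N,\ST_N$ for the images $\tau_{n,r}(\SP_N),\tau_{n,r}(\ST_N)$, with $\SP_N$ acting on the first $N$ strands. For the recurrence I would start from Wenzl's recurrence for the Jones--Wenzl projectors: for generic $Q$, inside $TL_{N+1}(Q)$ one has $\SP_{N+1}=\SP_N-c_N\,\SP_N\ST_N\SP_N$ for a nonzero scalar $c_N$. Taking $\tr_{1,\ldots,N+1}$ and using that $\SP_N$ acts trivially on strand $N+1$ gives $\tr_{1,\ldots,N+1}(\SP_N)=n\,d_N(n,r)$, while cyclicity together with $\SP_N^2=\SP_N$ collapses the second term to $c_N\,t_N$, where $t_N:=\tr_{1,\ldots,N+1}(\SP_N\ST_N\SP_N)=\tr_{1,\ldots,N+1}(\SP_N\ST_N)$. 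Hence $d_{N+1}=n\,d_N-c_N\,t_N$, and the whole recurrence reduces to the claim $c_N\,t_N=r\,d_{N-1}$.

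The key input, which I would establish first, is the operator identity
\[ \tr_{23}\bigl(T_{23}\,T_{12}\bigr)=r\,I_n \]
on ${\mathbb C}^n$ (the untraced first factor of $({\mathbb C}^n)^{\otimes 3}$). It follows from (T4) by applying $\tr_{23}$ to $T_{23}T_{12}T_{23}=T_{23}$: the right side is $\tr_{23}(T_{23})=(\tr T)\,I_n=Q r\,I_n$, while on the left, because the outer $T_{23}$ is supported on the two traced strands, partial--trace cyclicity moves it next to the inner $T_{23}$, and (T2) gives $T_{23}^2=Q\,T_{23}$; thus the left side equals $Q\,\tr_{23}(T_{23}T_{12})$, and dividing by $Q$ proves the claim. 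This is the step that sidesteps the fact that the matrix trace is not a Markov trace on $\tau_{n,r}(TL_N)$.

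To evaluate $c_N t_N$ I would proceed as follows. Using $\ST_N=\ST_N\ST_{N-1}\ST_N$ (relation (T4)) and cyclicity, $t_N=\tr(\ST_N\SP_N\ST_N\,\ST_{N-1})$. Expanding $\SP_N=\SP_{N-1}-c_{N-1}\SP_{N-1}\ST_{N-1}\SP_{N-1}$ and using (T2), (T4), the idempotency of $\SP_{N-1}$ and $[\SP_{N-1},\ST_N]=0$ yields the reduction identity $\ST_N\SP_N\ST_N=\kappa_N\,\ST_N\SP_{N-1}$ for a scalar $\kappa_N$; the requirement $\ST_N\SP_{N+1}=0$ then forces $\kappa_N c_N=1$. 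Substituting and commuting $\SP_{N-1}$ past $\ST_N$ gives $t_N=\kappa_N\,\tr(\SP_{N-1}\,\ST_N\ST_{N-1})$. Taking the partial trace over the last two strands, $\SP_{N-1}$ factors out and $\tr_{N,N+1}(\ST_N\ST_{N-1})$ is exactly $\tr_{23}(T_{23}T_{12})=r\,I_n$ acting on strand $N-1$; hence $t_N=\kappa_N\,r\,\tr_{1,\ldots,N-1}(\SP_{N-1})=(r/c_N)\,d_{N-1}$. Therefore $c_N t_N=r\,d_{N-1}$ and $d_{N+1}=n\,d_N-r\,d_{N-1}$, the unknown Wenzl constant $c_N$ cancelling — the satisfying point of the argument. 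This argument is valid for $N\geq 2$; the remaining instance $d_2=n\,d_1-r\,d_0=n^2-r$ is read off directly from \rf{jw123}.

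Finally I would solve \rf{dtrt}. Its characteristic equation $x^2-nx+r=0$ has roots $x_\pm=\sqrt r\,\xi^{\pm1}$ with $\xi+\xi^{-1}=n/\sqrt r$; imposing $d_0=1$, $d_1=n$ gives \rf{dna1} when $r\neq n^2/4$ (distinct roots) and, in the confluent case $r=n^2/4$ ($\xi=1$, a double root), the form $d_N=r^{N/2}(N+1)$ of \rf{dna2}. The main obstacle is the recurrence itself, and within it the correct handling of the two scalars: proving the key identity $\tr_{23}(T_{23}T_{12})=r\,I_n$ (where (T4) and partial--trace cyclicity do the work) and verifying $\kappa_N c_N=1$, so that $c_N$ drops out and the clean coefficients $n$ and $r$ emerge.
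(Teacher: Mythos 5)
Your proof is correct and follows essentially the same route as the paper: both start from Wenzl's recursion $\SP_{N+1}=\SP_N-c_N\,\SP_N\ST_N\SP_N$, take the trace, and reduce the cross term to $r\,d_{N-1}$ using (T2), (T4), cyclicity, commutation of $\SP_{N-1}$ with $\ST_N$, and the scalar recursion $c_N=(Q-c_{N-1})^{-1}$, before solving the resulting linear recurrence. The only (cosmetic) difference is bookkeeping: the paper combines the scalars via $(1-Q^{-1}\rho_{N-1})\rho_N=Q^{-1}$ and lands on $\tr(\SP_{N-1}\otimes T)=Qr\,d_{N-1}$, whereas you package the same cancellation into the operator identities $\ST_N\SP_N\ST_N=\kappa_N\,\ST_N\SP_{N-1}$ with $\kappa_Nc_N=1$ and $\tr_{23}(T_{23}T_{12})=r\,I_n$, both of which check out.
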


Note that $n^{-N} d_N(n,r)$ is a polynomial
in $r/n^2$ of degree~$\lfloor{N/2}\rfloor$.
 
\begin{proof}[\bf Proof of Lemma~\ref{dPRN}]
It is well known \cite{We1} that the idempotent $\SP_N$
defined by relations (\ref{JW1}) and (\ref{JW2})
satisfies the following recursion relation:
\begin{equation}\label{Pind} 
 \SP_{N+1} = \SP_N - \rho_N \, 
 \SP_N  \ST_{N} \, \SP_N \,,
\end{equation} 
where the scalar factor $\rho_N$  in turn
satisfies a recursion relation, namely
\begin{equation}\label{rhoind} 
  \rho_{N+1} = \bigl( Q - \rho_N \bigr)^{-1} \,,\qquad
  \rho_0=0.
\end{equation}
Now, taking into account that $\ST_N$ commutes with $\SP_{N-1}$, 
we verify relation~(\ref{dtrt}):
\begin{align*}
{}& n \, d_{N}(n,r) - d_{N+1}(n,r)  =
  n \, \tr_{1,\ldots,N} P_{n,r,N} -
  \tr_{1,\ldots,N+1} P_{n,r,N+1}  \\
{}& \stackrel{\scriptscriptstyle (\ref{Pind})}{=} 
\rho_{N} \tr_{1,\ldots,N+1} 
   \bigl( \tau_{n,r}( \SP_N \ST_{N} \SP_N) \bigr)
= \rho_{N} \tr_{1,\ldots,N+1} 
   \bigl( \tau_{n,r}( \SP_{N}  \ST_N) \bigr) \\
{}& \stackrel{\scriptscriptstyle (\ref{Pind})}{=}    
\rho_{N} \tr_{1,\ldots,N+1} 
   \bigl( \tau_{n,r}( \SP_{N-1}  \ST_N  - \rho_{N-1} \, 
   \SP_{N-1} \ST_{N-1}  \SP_{N-1}  \ST_N) \bigr) \\
{}& \stackrel{\scriptscriptstyle (T2)}{=} 
\rho_{N} \tr_{1,\ldots,N+1} 
   \bigl( \tau_{n,r}( \SP_{N-1}  \ST_N  - 
   Q^{-1} \rho_{N-1} \, 
   \ST_{N-1} \SP_{N-1} \ST^2_N \SP_{N-1} ) \bigr) \\
{}& =     
\rho_{N} \tr_{1,\ldots,N+1} 
   \bigl( \tau_{n,r}( \SP_{N-1} \ST_N  - 
   Q^{-1}\rho_{N-1} \, 
   \ST_{N}  \ST_{N-1} \ST_{N} \SP_{N-1}) \bigr) \\
{}& \stackrel{\scriptscriptstyle (T4)}{=}    
(1- Q^{-1} \rho_{N-1})\, \rho_{N} \tr_{1,\ldots,N+1} 
   \bigl( \tau_{n,r}( \SP_{N-1}  \ST_N ) \bigr) \\
{}& \stackrel{\scriptscriptstyle (\ref{rhoind})}{=}   
 Q^{-1} \tr_{1,\ldots,N+1} 
   \bigl(  P_{n,r,N-1} \otimes T \bigr) 
   = r \, d_{N-1}(n,r) \,. 
\end{align*}

The difference equation (\ref{dtrt})
can be rewritten in the form
$v_{N} = B \cdot v_{N-1}$, where 
\begin{equation} 
B = \begin{pmatrix}
    n & - r \\ 1 & 0 
   \end{pmatrix} , \qquad
v_N=    
 \begin{pmatrix}
    d_{N+1}(n,r) \\
    d_{N}(n,r) 
   \end{pmatrix} , \qquad
v_0=  
   \begin{pmatrix}
    n \\  1 
   \end{pmatrix} .
\end{equation}
Computing the $N$--th power of the matrix $B$, we obtain
solution (\ref{dna1})--(\ref{dna2}).
\end{proof}

\begin{proof}[\bf Proof of Theorem~\ref{QJW}]
First, note that, if $r<n^2/4$, then
$\xi$ in (\ref{dna1}) is positive and so
$d_N(n,r)$ is also positive.
For $r=n^2/4$, $d_N(n,r)$ is obviously positive
as well.

The solution to the recursion relation (\ref{rhoind})
is given by
\begin{align}
\label{rhosolv1} 
{}& \text{if   } Q \neq 2: \qquad
 \rho_{N} =  \frac{q^N - q^{-N}}{q^{N+1}-q^{-N-1}} \,,
  \qquad \text{where}  \quad q+q^{-1} = Q \geq 1 \,,\\
\label{rhosolv2} 
{}& \text{if   } Q = 2: \qquad
   \rho_{N} = \frac{N}{N+1}\,.
\end{align}  
If $q=e^{\pm \frac{i\pi}{k+2}}$, where 
$k \in {\mathbb N}$, then $\rho_{m}$
is finite for all $m \leq k$ but $\rho_{k+1}=\infty$.
By (\ref{Pind}), this implies that, for
$Q =2 \cos\bigl(\frac{\pi}{k+2}\bigr)$, 
$k \in {\mathbb N}$, the sequence of Jones--Wenzl 
projectors $\SP_N$ is defined only up to $N=k+1$. 

For $n^2 \geq r > n^2/4$, we have $|\xi|=1$
in (\ref{dna1}) and therefore 
$r^{-\frac{N}{2}} \, d_N(r,n)= 
 \frac{\sin(N+1)\gamma}{\sin\gamma}$,
where $1 \leq 2\cos \gamma= n/\sqrt{r} <2$.
Therefore, for 
$\gamma \in (\frac{\pi}{m+3},\frac{\pi}{m+2}]$,
$m \in {\mathbb N}$,
 we have $d_N(r,n) \geq 0$ for all 
$N \leq m+1$ and $d_{m+2}(r,n) < 0$. 
The latter inequality contradicts the 
positive semi--definiteness of $P_{n,r,N}$ and
therefore it requires the sequence of Jones--Wenzl 
projectors to terminate at some $N$ not exceeding $m+1$.
Which, by the preceding consideration, 
restricts the allowed values of $Q$ to the set
$\{ 2 \cos\bigl(\frac{\pi}{k+2}\bigr) ,\,\
 1 \leq k \leq m \}$.   
\end{proof}
 
\noindent
\begin{proof}[\bf Proof of Proposition~\ref{Vsca}]
If 
$\tilde{V}_k = \frac{1}{\sqrt{m}} I_m \otimes V_k$,
then $W_{\tilde{\CT}} = \frac{1}{m} I_m \otimes W_{\CT}$.
Therefore, if $Q W_{\CT}$ is unitary, so is 
$m Q W_{\tilde{\CT}}$. It remains to note that
$\tr\bigl(\tilde{V}_k \tilde{V}^*_p\bigr)=
\frac{1}{m}\tr\bigl(I_m \otimes V_k V^*_p\bigr)=
\frac{1}{m}\tr\bigl(I_m) \tr\bigl( V_k V^*_p\bigr)
=\delta_{kp}$.
The case $\tilde{V}_k = \frac{1}{\sqrt{m}}  V_k \otimes I_m$
is analogous.
\end{proof}

\noindent
\begin{proof}[\bf Proof of Proposition~\ref{tausym}]
Without a loss of generality, we can assume that
$v$ is of length one. Taking Remark~\ref{HGW} into account, 
we can choose an orthonormal spanning set for $\CT$ 
in such a way that $v$ is its first basis vector.
That is, $\CT \sim \{V_1,\ldots,V_r\}$, where
$V_1$ is (anti)symmetric.
In this case, the upper--left block of $Q W_{\CT}$ is 
$U \equiv Q V_1 \bar{V_1} = \pm Q V_1 V_1^*$.
Whence, by the normalization condition (\ref{vv}),
we have $|\tr U| = Q$. By Theorem~\ref{PROPTLW}, 
$QW_\CT$ is unitary. So, $U$ is a principal submatrix
of a unitary matrix and hence a contraction. 
Since $U$ (or $-U$) is positive semi--definite, it implies 
that all its eigenvalues lie between 0 and~1. Therefore,
$|\tr U| \leq n$, which imposes the restriction on~$Q$.
\end{proof}

\noindent
\begin{proof}[\bf Proof of Proposition~\ref{Guni}] 
a) Without a loss of generality, we can assume that
$v$ is of length one. Taking Remark~\ref{HGW} into account, 
we can choose an orthonormal spanning set for $\CT$ 
in such a way that $v$ is its first basis vector.
That is, $\CT \sim \{V_1,\ldots,V_r\}$, where
$V_1$ is a scalar multiple of a unitary matrix. 
By Theorem~\ref{PROPTLW}, $QW_\CT$ is unitary.
Therefore the upper--left block of 
$Q^2 W_{\CT} W_{\CT}^*$ equals to~$I_n$, that is 
\begin{equation}\label{VV11}   
 Q^{2} \sum_{k=1}^{r} V_k \bar{V}_1 V_1^t V_k^* = I_n \,.
\end{equation}
Since $V_1$ is a scalar multiple of a unitary matrix, 
we have $\bar{V}_1 V_1^t = \frac{1}{n} I_n$ , where
$1/n$ factor is due to the normalization~(\ref{vv}).
Substituting the latter relation in (\ref{VV11}),
taking trace, and using again the condition (\ref{vv}),
we obtain that $Q^2=n^2/r$.

b) If $n^2/4 < r < n^2$, then Theorem~\ref{QJW} applies.
Taking into account that $Q^2=n^2/r$, we conclude from
(\ref{Qspec0}) and (\ref{Qspec}) that
\begin{equation}\label{nrcos}
\frac{n^2}{r} = 4 \cos^2 \Bigl( \frac{\pi}{m+2}\Bigr) \,,
 \qquad m \in {\mathbb N} \,.
\end{equation} 
Note that the r.h.s. of (\ref{nrcos}) equals to 
$e^{\frac{2\pi i}{m+2}}+2+e^{-\frac{2\pi i}{m+2}}$,
which is a sum of three algebraic integers. Therefore,
the r.h.s. of (\ref{nrcos}) is itself an algebraic integer.
But the l.h.s. of (\ref{nrcos}) is a rational number.
It is well known (see, e.g. Theorem~206 in~\cite{HaWr})
that the only rational algebraic integers are ordinary
integers. Thus, $n^2/r$ is an integer from the
interval $(1,4)$. Hence $n^2/r = 2$ or $n^2/r = 3$.
\end{proof}

\begin{proof}[\bf Proof of Proposition~\ref{VVG1}]   
First, consider the case $V_k = V_1 g_k$.
By Theorem~\ref{PROPTLW}, $QW_\CT$ is unitary.
Therefore equality (\ref{VV11}) holds.
Substituting $V_k = V_1 g_k$ 
in (\ref{VV11}) and taking into
account that $V_1$ is invertible, we can rewrite (\ref{VV11})
as follows:
\begin{align}\label{veq4b} 
     \bar{V}_1 V_1^t + 
     \sum_{k=2}^r g_k  \bar{V}_1 V_1^t g_k^* = 
    Q^{-2} ( V_1^* V_1 )^{-1} \,. 
\end{align}
Since $g_k$ are unitary and $V_1$ satisfies (\ref{vv}),
the trace of the l.h.s. of (\ref{veq4b}) is equal to~$r$.
By the Cauchy inequality (cf. Proof of Proposition~\ref{QVN}),
the trace of the r.h.s. of (\ref{veq4b}) is greater
or equal to $n^2/Q^{2}$. Whence follows inequality~(\ref{Qguni2}).

Now consider the case $V_k = g_k V_1$. Since $T=Q P_\CT$
is a solution to (T1)--(T4), so is $T'''=Q P_{\CT'''}$
(cf. Corollary~\ref{BTH}). But $V'''_1 = V^*_1$ and
$V'''_k = V^*_k = V^*_1 g^*_k$. Therefore the
preceding considerations apply to $T'''$ yielding the same
inequality on~$Q$.
\end{proof}

\noindent
{\bf Clebsch--Gordan coefficients for $U_q(su_2)$ }

The Clebsch--Gordan coefficients appearing in (\ref{cgb})
are given by \cite{Kir}:
\begin{eqnarray}
\nonumber  
 &&  \!\!\!\!\!\!\!\!
   \bigl\{ S,S,k_1,k_2 \vert J,m \bigr\}_q =
 \delta_{k_1+k_2,m} \,
 q^{\frac{1}{2}(2S-J)(2S+J+1)+S(k_2-k_1)} 
 \, [J]! \left(\frac{[2J+1]}{[2S+J+1]!}\right)^{\frac 12}  \\
 \label{cgc1} 
 && \!\!\!   \times
\bigl(  [2S-J]! [S+k_1]![S-k_1]!
 [S+k_2]![S-k_2]! [J+k_1+k_2]![J-k_1-k_2]!  \bigr)^{\frac{1}{2}}    \\ 
 \nonumber   
 &&\!\!\!   \times
\sum_{l \geq 0} \frac{(-1)^l \, q^{-l(2S+J+1)} }%
 {[l]! [2S-J-l]! [S-k_1-l]! [S+k_2-l]! 
 [J-S+k_1+l]! [J-S-k_2+l]! } .
\end{eqnarray}
Here $[l]! \equiv \prod_{p=1}^l [p]_q$ if $l$ is a positive integer,
 $[0]!=1$, and $[l]!=\infty$ if $l$ is a negative integer.
 Due to the latter property,  the sum in (\ref{cgc1}) always terminates.
 For $q=1$,  expression (\ref{cgc1}) recovers the Clebsch--Gordan 
 coefficients for $U(su_2)$.
 
The Clebsch--Gordan coefficients satisfy the orthogonality relation:
\begin{equation}\label{cgcort}   
\sum_{k_1,k_2=-S}^S 
\bigl\{ S,S,k_1,k_2 \vert J_1,m_1 \bigr\}_q \,
\bigl\{ S,S,k_1,k_2 \vert J_2,m_2 \bigr\}_q
= \delta_{J_1,J_2} \,  \delta_{m_1,m_2} \,. 
\end{equation} 
 They also possess a number of symmetries including
the following one:
\begin{equation}\label{cgcsym}   
\bigl\{ S,S,m_2,m_1 \vert J,m \bigr\}_q  =
 (-1)^{2S-J} \,  \bigl\{ S,S,m_1,m_2 \vert J,m \bigr\}_{q^{-1}} \,.
\end{equation} 

Let $V$ be the matrix associated by (\ref{vjm}) to a vector 
$|J,m \rangle_{q=1} \in 
  {\mathcal H}_S^{q=1} {\otimes} {\mathcal H}_S^{q=1}$, $m \geq 0$.
Every row and column of $V$ has at most one non--zero entry.
Below we will need explicit expressions for the non--zero entries 
of the first few rows of~$V$. 
    
\begin{lem}\label{FpCG} 
Let $p \in \{0,1,2\}$. For all 
$S \in \frac{1}{2}{\mathbb Z}_{\geq 1}$
and $ 0 \leq m \leq \min (J,2S-p)$  we have 
\begin{equation}\label{cgcmp}   
    \bigl\{ S , S ,  S-p , m+p-S | J , m  \bigr\}_{q=1} 
    = f_p(J,m) \,  
  \left(\frac{(2J+1)(2S-p)!(2S-m-p)!(J+m)!}%
   {p! (m+p)! (2S-J)! (2S+J+1)! (J-m)!}\right)^{\frac 12} ,
\end{equation}
where
\begin{equation}\label{fmp1}    
  f_0(J,m) = 1 \,, \qquad 
   f_1(J,m) = J(J+1) - 2S(m+1) \,,
\end{equation} 
\begin{equation}\label{fmp2}    
 f_2(J,m)   =  f^2_1(J,m)  + 2(m+1-2S) \,  f_1(J,m) 
 + 2S(m+1)(m-2S) \,.
\end{equation}
\end{lem}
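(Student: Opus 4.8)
The plan is to specialize the general formula (\ref{cgc1}) to $q=1$, where every $q$-number collapses to an ordinary integer and $[l]!$ becomes $l!$, and then to substitute $k_1=S-p$, $k_2=m+p-S$ (which is consistent with the constraint $k_1+k_2=m$ built into (\ref{cgc1})). Under this substitution the seven factorials on the second line of (\ref{cgc1}) become $(2S-J)!\,(2S-p)!\,p!\,(m+p)!\,(2S-m-p)!\,(J+m)!\,(J-m)!$, while the summation factor turns into the finite alternating sum
\begin{equation*}
\Sigma_p = \sum_{l=0}^{p} \frac{(-1)^l}{l!\,(2S-J-l)!\,(p-l)!\,(m+p-l)!\,(J-p+l)!\,(J-m-p+l)!}\,,
\end{equation*}
which terminates at $l=p$ because $(p-l)!=\infty$ for $l>p$. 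The constraints $0\le m\le\min(J,2S-p)$ guarantee that all factorials appearing with a nonnegative argument are genuinely finite.

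First I would separate the coefficient into a manifestly positive prefactor $J!\,\big((2J+1)/(2S+J+1)!\big)^{1/2}$ times the square root of the product of the seven factorials above, times $\Sigma_p$. To match (\ref{cgcmp}) I then have to show that this prefactor divided by the square-root factor displayed in (\ref{cgcmp}) equals $J!\,(2S-J)!\,p!\,(m+p)!\,(J-m)!$. This is checked by squaring: after cancelling $(2J+1)$, $(2S+J+1)!$, and one copy each of $(2S-p)!$, $(2S-m-p)!$, $(J+m)!$, the square of the ratio reduces to $\big(J!\,(2S-J)!\,p!\,(m+p)!\,(J-m)!\big)^2$, and since all these factorials are positive the positive square root is unambiguous. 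Consequently
\begin{equation*}
 f_p(J,m) = J!\,(2S-J)!\,p!\,(m+p)!\,(J-m)!\,\Sigma_p\,,
\end{equation*}
with the sign of the Clebsch--Gordan coefficient correctly carried by $\Sigma_p$.

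It then remains to evaluate $\Sigma_p$ for $p=0,1,2$. Each individual term of $\Sigma_p$, once multiplied by the factorial prefactor, collapses to a product of a few linear factors in $J$, $m$, and $S$; for instance in the case $p=2$ the $l=0,1,2$ terms give $J(J-1)(J-m)(J-m-1)$, $-2J(2S-J)(m+2)(J-m)$, and $(2S-J)(2S-J-1)(m+1)(m+2)$. For $p=0$ the single term yields $f_0=1$, and for $p=1$ the two terms combine to $J(J-m)-(2S-J)(m+1)=J(J+1)-2S(m+1)$, which is (\ref{fmp1}). The main obstacle is the $p=2$ case: I would sum the three explicit terms and verify that the resulting polynomial in $(J,m,S)$ agrees with the expression (\ref{fmp2}), namely $f_1^2+2(m+1-2S)f_1+2S(m+1)(m-2S)$. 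This last identity is a routine but somewhat lengthy algebraic expansion, and it is the only genuinely computational part of the argument.
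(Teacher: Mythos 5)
Your proposal is correct and follows exactly the route the paper intends: its entire proof of this lemma is the single sentence ``A direct verification with the help of formula~(\ref{cgc1}) for $q=1$,'' and you have simply carried out that verification in full (the substitution $k_1=S-p$, $k_2=m+p-S$, the factorial bookkeeping giving $f_p=J!\,(2S-J)!\,p!\,(m+p)!\,(J-m)!\,\Sigma_p$, and the termwise evaluation of $\Sigma_p$ for $p=0,1,2$ all check out). Nothing further is needed.
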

\begin{proof}
A direct verification with the help of formula~(\ref{cgc1})
for $q=1$.
\end{proof}

\begin{rem}\label{Mex}
The Clebsch--Gordan coefficient on the l.h.s.
of (\ref{cgcmp}) vanishes if $p=1$ and $m=2S$
and also if $p=2$ and $m=2S-1$ or $m=2S$.
In these cases, we have $f_1=0$ and $f_2=0$, 
respectively. Indeed, $m=2S$ implies $J=2S$ and 
hence $f_1=f_2=0$. Similarly, $m=2S-1$ implies 
either $J=2S$ or $J=2S-1$. In both cases, 
$|f_1|=2S$ and $f_2=0$.
\end{rem}

\begin{proof}[\bf Proof of Proposition~\ref{PVQ}]   
Let $P_q \in {M}_{(2S+1)^2}$
stand for the orthogonal projection in 
${\mathcal H}_S^q {\otimes} {\mathcal H}_S^q$
on  the one (or two) dimensional subspace  
spanned by the vector (respectively, the pair of vectors)
under consideration.
Consider the following function of~$q$:
\begin{equation}\label{fdP} 
  f(q)= \Bigl( \tr_{123}\,\bigl( (P_q)_{12} (P_q)_{23} \bigr) \Bigr)^2 -
   (2S+1)\,r \, \tr_{123}\, \bigl( (P_q)_{12} (P_q)_{23} \bigr)^2  \,,
\end{equation}
where $r=\mathrm{rank}(P_q)$.
Our proof of the proposition will be based on the following lemma:
\begin{lem}\label{FPQ}
Function  $f(q)$ is rational in~$q$.
\end{lem}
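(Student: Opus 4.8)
The plan is to express $f(q)$ through the two trace formulas (\ref{trPP}) and (\ref{trPP2}) applied to the matrices $V_1,\dots,V_r$ (with $r\in\{1,2\}$) associated by (\ref{vjm}) to the vectors spanning the range of $P_q$, and then to exhibit each of the two traces as a finite sum of rational functions of $q$. Since $q>0$, every Clebsch--Gordan coefficient in (\ref{cgc1}) is real, so $\bar V_i=V_i$ and $V_i^*=V_i^t$; thus the traces entering $f(q)$ are $\sum_{i,j}\tr(V_iV_jV_j^tV_i^t)$ and $\sum_{i,j,k,l}\tr(V_iV_lV_k^tV_i^tV_jV_kV_l^tV_j^t)$. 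First I would record, from (\ref{vjm}) and (\ref{cgc1}), the structure of the entries: $V_i$ is supported on the single anti--diagonal $a+b=2S+2-\mu_i$ (so each row and column carries at most one nonzero entry), and each nonzero entry factors as
\[(V_i)_{ab}=q^{\,p_i(a,b)}\,\sqrt{w(a)\,w(b)\,\gamma_i}\ R_i(a,b),\qquad w(x)=[2S+1-x]!\,[x-1]!,\]
where $p_i(a,b)=\tfrac12(2S-J_i)(2S+J_i+1)+S(a-b)$, the factor $\gamma_i=[2S-J_i]!\,[J_i+\mu_i]!\,[J_i-\mu_i]!\,[2J_i+1]/[2S+J_i+1]!$ depends only on $i$, and $R_i(a,b)$ is the product of $[J_i]!$ with the terminating $l$--sum in (\ref{cgc1}). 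Here $w$, $\gamma_i$, $R_i$ are built from $q$--numbers, hence rational in $q$, while $\tfrac12(2S-J_i)(2S+J_i+1)$ is an integer because its two factors $2S-J_i$ and $2S+J_i+1$ differ by the odd number $2J_i+1$.

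The anti--diagonal support makes the first trace immediate: in $\tr(V_iV_jV_j^tV_i^t)=\sum_{a,b,c,d}(V_i)_{ab}(V_j)_{bc}(V_j)_{dc}(V_i)_{ad}$ the two $V_j$ factors force $d=b$, so the sum collapses to $\sum_{a,b,c}(V_i)_{ab}^2\,(V_j)_{bc}^2$, in which every entry is squared; thus each surd becomes $w(a)w(b)\gamma_i$ and each $q$--power doubles, so each term is manifestly rational, and squaring the finite sum keeps it rational.

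The main work is the degree--eight second trace. Writing it along the cyclic chain $a_1\to\cdots\to a_8\to a_1$ and imposing the eight anti--diagonal constraints $a_1+a_2=\sigma_i$, $a_2+a_3=\sigma_l,\dots$ with $\sigma_i=2S+2-\mu_i$, I would solve for $a_2,\dots,a_8$ in terms of the single free index $a_1$, the eighth constraint then being automatic. Two cancellations remain: (i) every index $a_p$ occurs in exactly two of the eight factors, so the surds combine into the rational $\prod_p w(a_p)$, and each label $i,j,k,l$ appears in exactly two factors, so the $\gamma$'s pair into $\gamma_i\gamma_j\gamma_k\gamma_l$; (ii) the total power of $q$ is an integer. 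The only possibly half--integral contribution is $\sum S(\mathrm{row}-\mathrm{col})$, and summing $\mathrm{row}-\mathrm{col}$ over the eight factors telescopes to $2(a_1-a_3+a_5-a_7)$, so this contribution equals $2S(a_1-a_3+a_5-a_7)\in\mathbb Z$ since $2S\in\mathbb Z$. Hence each of the finitely many terms, and therefore the whole sum, is rational in $q$, and $f(q)$, being a polynomial in the two traces, is rational as claimed. The anticipated obstacle is precisely point (ii): individual Clebsch--Gordan coefficients genuinely carry half--integral powers of $q$ through the $S(a-b)$ term, and one must check that these fractional parts cancel term by term; the telescoping parity identity is exactly what restores integrality.
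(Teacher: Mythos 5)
Your proof is correct, and it reaches the conclusion by a genuinely different mechanism than the paper, even though both arguments share the same starting point (the trace formulas (\ref{trPP})--(\ref{trPP2}), the realness of the Clebsch--Gordan coefficients for $q>0$, and the anti--diagonal support of the matrices $V$ coming from (\ref{vjm})). The paper never expands the traces entrywise: its key observation is that $V_kV_k$, $V_kV_k^t$ and $V_k^tV_k$ are \emph{diagonal matrices with rational entries} (the square roots of $q$--factorials cancel inside these particular products, by (\ref{cgc1}) and the symmetry (\ref{cgcsym})), and it then disposes of the degree--eight trace by a case analysis on which of the four labels coincide, regrouping the eight factors in each case into nested sandwiches of such diagonal rational blocks; this is tailored to $r\le 2$, which is all that Proposition~\ref{PVQ} needs. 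You instead factor each individual entry as $q^{p}\sqrt{w(a)w(b)\gamma_i}\,R_i(a,b)$ and show that every monomial of the trace is separately rational, via two pairing arguments: each lattice index and each vector label occurs in exactly two of the eight factors (so all surds square up), and the potentially half--integral exponents $S(\mathrm{row}-\mathrm{col})$ sum to $2S$ times an integer because each index enters the alternating sum with an even total coefficient. Your identification of point (ii) as the real obstacle is apt --- it is exactly the place where the paper instead leans on (\ref{cgcsym}). What your route buys is uniformity: it handles arbitrary rank $r$ with no case analysis, whereas the paper's regrouping would need to be reexamined for $r\ge 3$; what the paper's route buys is that one never has to track the fractional $q$--powers explicitly. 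The only cosmetic imprecision is your remark that the eighth anti--diagonal constraint is ``automatic'': it holds identically in $a_1$ only when the labels $\mu$ satisfy the telescoped compatibility condition, and otherwise the whole term vanishes --- but either way the contribution is rational, so nothing is lost.
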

An immediate consequence of Lemma~\ref{FPQ} is that 
$f(q)$  either vanishes identically for all $q>0$
or it has only a finite (possibly empty) set of zeros on
the semi--axis $q>0$.
But, by Theorem~\ref{trTTb}, a solution to (T1)--(T4) of the form
$T=Q P_q$ exists only for such values of $q$ that $f(q)=0$.
\end{proof}

\begin{proof}[\bf Proof of Lemma~\ref{FPQ}]   
Let $V_1,\ldots,V_r$ be the matrices associated by formula (\ref{cgb}) 
to some set of vectors  $|J_1,m_1 \rangle_q, \ldots, |J_r,m_r \rangle_q$
in ${\mathcal H}_S^q {\otimes} {\mathcal H}_S^q$.
Let $P_q$ be the projection onto the subspace spanned by these vectors.
Define the corresponding function $f(q)$ by~(\ref{fdP}).
Using (\ref{trPP}) and (\ref{trPP2}) and taking into account that all entries 
of each $V_k$ are real, we get
\begin{equation}\label{pvv1} 
 f_1(q) \equiv  \tr_{123}\,\bigl( (P_q)_{12} (P_q)_{23} \bigr)  =
   \sum_{k_1,k_2=1}^r
   \tr \bigl( 
   V_{k_1} V^t_{k_1} V^t_{k_2}  V_{k_2} \bigr) \,,
\end{equation}
\begin{equation}\label{pvv2} 
 f_2(q) \equiv  \tr_{123}\, \bigl( (P_q)_{12} (P_q)_{23} \bigr)^2 =
   \sum_{k_1,k_2,k_3,k_4=1}^r
   \tr \bigl(
    V_{k_1} V^t_{k_2} V^t_{k_3} V_{k_4}
    V_{k_2} V^t_{k_1} V^t_{k_4} V_{k_3}  \bigr) \,.
\end{equation}
By (\ref{vjm}), matrices $V_{k} V_{k}$, $V_{k} V^t_{k}$ and
$V^t_{k} V_{k}$  are diagonal  for every $k$.
Observe that formula (\ref{cgc1}) along with the symmetry 
 (\ref{cgcsym}) imply that all non--zero entries of these matrices 
 contain no square roots of  $q$--factorials.  
 Therefore, all non--zero entries of  these matrices are
 rational functions in~$q$. Whence it is evident that $f_1(q)$
  is a rational function for all $r$ and $f_2(q)$
  is a rational function for $r=1$. 
  Now, consider $f_2(q)$
  for $r=2$. In (\ref{pvv2}), the contributions from 
  the terms with $k_1=k_2$
  or $k_3=k_4$ are rational functions because
  the matrices in (\ref{pvv2}) can be multiplied 
  in the following ways:
 $\tr\bigl((V_{k_1} V^t_{k_2}) (V^t_{k_3} (V_{k_4}
    (V_{k_2} V^t_{k_1}) V^t_{k_4}) V_{k_3})\bigr)$ and
 $\tr \bigl(    (V_{k_1} (V^t_{k_2} (V^t_{k_3} V_{k_4})
    V_{k_2}) V^t_{k_1}) (V^t_{k_4} V_{k_3})  \bigr)$, respectively.    
If   $k_1 \neq k_2$ and  $k_3 \neq k_4$, then   
either $k_1=k_4$ and $k_2=k_3$ or 
$k_1=k_3$ and $k_2=k_4$. The contributions from such terms
also yield rational functions  because, in these cases,
 the matrices  in (\ref{pvv2}) can be multiplied in 
 the following ways:
 $\tr\bigl( (V_{k_1} (V^t_{k_2} V^t_{k_3}) V_{k_4})
    (V_{k_2} (V^t_{k_1} V^t_{k_4}) V_{k_3}) \bigr)$ and
 $\tr \bigl(
   ( V_{k_3}  V_{k_1})  (V^t_{k_2} (V^t_{k_3} (V_{k_4}
    V_{k_2}) V^t_{k_1}) V^t_{k_4})   \bigr)$, respectively. 
Thus, we conclude that both $f_1(q)$ and $f_2(q)$ are  
rational functions in $q$ if $r=1,2$. Hence the
same holds for $f(q)=f^2_1(q)-(2S+1)r\,f_2(q)$.
\end{proof}

\begin{proof}[\bf Proof of Theorem~\ref{ListUq}]
Let $V$ be the matrix  associated to a vector
 $|J,m \rangle_q$ by formula (\ref{vjm}). Note that 
 $V \bar{V}$ is diagonal and it  is degenerate for all $m\neq 0$. 
 Thus, we have to restricts
 consideration to the case $m=0$. 

For $J=m=0$, the sum in (\ref{cgc1}) contains only one term 
($l=S-k_1=S+k_2$) and, therefore, by (\ref{vjm}), we have 
\begin{equation}\label{vabj0} 
 V_{ab} = \delta_{a+b,2S+2} \,
  \frac{(-1)^{a-1} q^{S+1-a}}{ \sqrt{ [2S+1]_q} } \,, \  \qquad
   a=1,\ldots,2S+1 \,.
 \end{equation}
 Thus, $ [2S+1]_q V \bar{V} = (-1)^{2S} I_{2S+1}$. Hence, 
 by Theorem~\ref{PROPTLW}, vector $|0,0 \rangle_q$
 is a TL vector for all $q>0$ and all
 $S \in  \frac{1}{2} \mathbb{Z}_{\geq 0}$.
 
 For $q>0$, $S=1/2$, $J=1$, $m=0$, the corresponding matrix $V$
 was given in~(\ref{v12s12}). It is easy to see,
 invoking Theorem~\ref{PROPTLW}, that it corresponds
 to a TL vector for all $q>0$. 
 
 In order to complete the proof, we have to consider
 vectors $|J,0 \rangle_q$  for $S \geq 1$ and $J \geq 1$.
 
 First, setting $q=1$, we will prove that, in this case, the list 
 of TL vectors given in  the part a) is exhaustive. 
 To this end, we will invoke Theorem~\ref{PROPTLW} again.
 For $q=1$, matrix $V$ is either
 symmetric or anti--symmetric due to~(\ref{cgcsym}).
 Therefore, the diagonal matrix $V \bar{V} = \pm V V^*$ 
 can be a scalar multiple of a unitary matrix only if all 
 the non--zero entries of $V$ have equal moduli. 
 In view of formula  (\ref{vjm}), the latter 
 condition is equivalent to the following requirement:
 \begin{equation}\label{cge1} 
  \bigl\{ S , S ,  S-p , p-S \,| J , 0  \bigr\}_{q=1}^2
  = \frac{1}{ 2S+1} \,,
 \end{equation}
 for $p=0,1,\ldots,2S$. The value on the r.h.s. of (\ref{cge1})
 is due to the normalization condition, $\tr (VV^*) =1$.
 
For $S \geq 1$, equality of the expressions 
on the l.h.s. of (\ref{cge1}) for $p=0,1,2$ is 
equivalent by Lemma~\ref{FpCG} to the following equalities:
 \begin{equation}\label{cge2} 
 \bigl( f_0(J,0) \, (2S)! \bigr)^2 =  
 \bigl( f_1(J,0) \, (2S-1)! \bigr)^2 = 
 \bigl( \frac{1}{2} f_2(J,0) \, (2S-2)! \bigr)^2 .
 \end{equation}
For $J>0$, formulae (\ref{fmp1}) imply that the first
 equality in (\ref{cge2}) holds iff 
  \begin{equation}\label{jjs1} 
   J(J+1) = 4S \,.
 \end{equation}
 Remarkably, under this condition, we have, by (\ref{fmp1})
and (\ref{fmp2}), $f_1(J,0)=2S$ and $f_2(J,0)=4S(1-2S)$
so that the second equality in (\ref{cge2}) holds identically 
thus imposing no further restrictions on $S$ and~$J$.
 
 Equality (\ref{jjs1}) provides a necessary condition for 
  $|J,0 \rangle_{q=1}$ to be a TL vector. In order to
  obtain another  necessary condition, we rewrite
 equality (\ref{cge1}) for $p=0$ 
 with the help of (\ref{cgcmp}) and (\ref{fmp1}) 
 in the following form:
\begin{equation}\label{jjs2} 
   \frac{ (2S-J)! (2S+J+1)! }{(2S)! (2S+1)!} = 2J+1\,.
 \end{equation}
For $J>2$, define $F(J) \equiv \bigl(J/(J-2)\bigr)^J$.
Observe that if $J$ and $S$ are related as in (\ref{jjs1})
and $J>2$, then we have the following estimate for 
the l.h.s. of~(\ref{jjs2}):
\begin{eqnarray*}
{}&&   \frac{ (2S-J)! (2S+J+1)! }{(2S)! (2S+1)!} = 
    \prod_{k=1}^J \frac{2S+1+k}{2S -J+k}
    =  \prod_{k=1}^J \Bigl( 1+ \frac{J+1}{2S -J+k} \Bigr) \\
{}&&    < \Bigl( 1+ \frac{J+1}{2S -J-1} \Bigr)^J 
=\Bigl( \frac{2S}{2S -J-1} \Bigr)^J 
 \stackrel{ (\ref{jjs1})}{=} F(J) \,.
 \end{eqnarray*}
 Note that $F(J)$ decreases monotonically as $J$ grows
 and we have $F(6) = (3/2)^6 \approx 11.4<2\cdot6+1$.
 Therefore, for all $J \geq 6$, the l.h.s. of (\ref{jjs2}) 
 is smaller then the r.h.s. So, it remains to check
 whether (\ref{jjs2})  holds for $1 \leq J \leq 5$ provided that
  $J$ and $S$ are related as in~(\ref{jjs1}).
  A~direct inspection shows that equalities (\ref{jjs1}) and  
(\ref{jjs2}) are not compatible for $J=3,4,5$ but they hold 
  for $J=1$ and $J=2$ if  the corresponding values of $S$ are 
 $S=1/2$ and $S=3/2$, respectively.
 The first of these cases,  $J=1$, $S=1/2$ was already 
 considered above. In the second case,  
it is easy to check that $|2,0 \rangle_{q=1}$ is 
indeed a TL vector if $S=3/2$ (cf. Example~\ref{ex32}). 
This completes the proof of the part~a) of the theorem.
 
 What the part~b) of the theorem is concerned,
 it was already explained at the beginning of this proof
 why the vectors listed in the part b) are TL vectors.
This list is exhaustive because the third case found 
for $q=1$, i.e. $|2,0 \rangle_{q}$ for $S=3/2$
is not a TL vector except for $q=1$ and two other values of~$q$ 
(cf. Example~\ref{ex32}).
\end{proof}

\begin{proof}[\bf Proof of Theorem~\ref{ListUqP}]
Part a).
Let $V_1$, $V_2$ be the matrices  associated 
by formula (\ref{vjm}) to a pair of orthogonal vectors
 $|J_1,m_1 \rangle_{q=1}$, $|J_2,m_2 \rangle_{q=1}$. 
Taking into account that  $V_1$, $V_2$ are real,
the unitarity condition (\ref{TLP}) is equivalent
to the following set of equations:
\begin{eqnarray}
\label{veq1} 
{}&   V_1 V_1 V_1^t V_1^t +
    V_2 V_1 V_1^t V_2^t = Q^{-2} I_n \,,\qquad 
    V_1 V_2 V_2^t V_1^t +
    V_2 V_2 V_2^t V_2^t = Q^{-2} I_n \,, & \\
\label{veq3} 
{}&   V_1 V_1 V_2^t V_1^t +
    V_2 V_1 V_2^t V_2^t = 0 \,, &
\end{eqnarray}
where $n=2S+1$ and $Q>0$.

Denote $A \equiv V_2 V_1$, $B \equiv V_1 V_1^t$,
$C \equiv V_2 V_2^t$. 
Recall that, by the symmetry (\ref{cgcsym}),
we have $V_1^t=(-1)^{2S-J_1} V_1$ and 
$V_1^t=(-1)^{2S-J_2} V_2$. Therefore,
equations (\ref{veq1})--(\ref{veq3}) can be 
rewritten in the following form:
\begin{eqnarray}
\label{aabb1} 
{}&   B^2= Q^{-2} I_n - A \, A^t \,,  \qquad 
   C^2 = Q^{-2} I_n - A^t  A \,, & \\
\label{baac1} 
{}&   B \, A + A \, C = 0 \,. &
\end{eqnarray}
Let the singular value decomposition of $A$ be
given by $A=O_1\Sigma\, O_2$, where $O_1$ and 
$O_2$ are orthogonal and $\Sigma \geq 0$ is diagonal
(note that $\Sigma \leq Q^{-1} I_n$).
Then, we infer from (\ref{aabb1}) that
\begin{align}
\label{aabb2} 
{}&   B^2 =  O_1 
 \bigl( Q^{-2}  I_n - \Sigma^2 \bigr)  O_1^t \,, \qquad 
   C^2 = O_2^t \bigl( Q^{-2}  I_n - \Sigma^2 \bigr)  O_2 \,.
\end{align}
These equations determine the singular values of 
$B^2$ and $C^2$. Taking into account that $B$ and $C$ are 
diagonal and positive semi--definite, we conclude that
\begin{align}
\label{aabb3} 
{}&   B =  \tilde{O}_1 
 \bigl( Q^{-2}  I_n - \Sigma^2 \bigr)^{\frac 12} \tilde{O}_1^t \,, 
 \qquad    C = \tilde{O}_2^t 
 \bigl( Q^{-2}  I_n - \Sigma^2 \bigr)^{\frac 12} \tilde{O}_2 \,,
\end{align}
where $\bigl(\ldots\bigr)^{\frac 12} \geq 0$ and $\tilde{O}_1$,
$\tilde{O}_2$ are orthogonal. 
Consistency of equations (\ref{aabb2}) and (\ref{aabb3}) implies that
$(Q^{-2}  I_n - \Sigma^2)$ commutes with
$\tilde{O}_1^t O_1$ and $O_2 \tilde{O}_2^t$.
Therefore, we get
\begin{align} 
\nonumber
{}& B\,A=  \tilde{O}_1 
 \bigl( Q^{-2}  I_n - \Sigma^2 \bigr)^{\frac 12} \tilde{O}_1^t
 O_1\Sigma\, O_2 =
 \tilde{O}_1 \tilde{O}_1^t O_1 \Sigma\,
 \bigl( Q^{-2}  I_n - \Sigma^2 \bigr)^{\frac 12}  O_2  \\
\label{baac} 
{}& =  O_1  \Sigma\,
 \bigl( Q^{-2}  I_n - \Sigma^2 \bigr)^{\frac 12}  O_2 =
  O_1  \Sigma\,
 \bigl( Q^{-2}  I_n - \Sigma^2 \bigr)^{\frac 12}  O_2 
 \tilde{O}_2^t \tilde{O}_2 \\
\nonumber 
{}& =  O_1  \Sigma\, O_2  \tilde{O}_2^t
 \bigl( Q^{-2}  I_n - \Sigma^2 \bigr)^{\frac 12} \tilde{O}_2 
 = A\, C \,.
\end{align}
But then (\ref{baac1}) holds only if
\begin{equation}\label{abc} 
B \, A = A \, C = 0 \,. 
\end{equation}
Equations (\ref{baac}) and (\ref{abc}) imply that
$\Sigma^2  \bigl( Q^{-2} I_n - \Sigma^2 \bigr) =0$. 
Hence the only non--zero eigenvalue of $\Sigma$ 
is~$Q^{-1}$. Recall that $B$ and $C$ are diagonal and 
positive semi--definite. Therefore, in view of 
(\ref{aabb3}), we have established the following.

\begin{lem}\label{BCQ}
Diagonal matrices $B \equiv V_1 V_1^t$ and 
$C \equiv V_2 V_2^t$ are singular, have equal ranks,
and all of their non--zero entries are equal to~$Q^{-1}$. 
\end{lem}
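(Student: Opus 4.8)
The plan is to read off the three assertions directly from the singular value analysis that has just been carried out, since $B=V_1V_1^t$ and $C=V_2V_2^t$ are real, diagonal, and positive semi--definite (for a real matrix $M$ one has $MM^t\ge 0$), so that their diagonal entries coincide with their eigenvalues. The starting point is the pair of identities (\ref{aabb1}) together with the singular value decomposition $A=O_1\Sigma\,O_2$ and the resulting representation (\ref{aabb3}) of $B$ and $C$. The crucial input is the relation $\Sigma^2\bigl(Q^{-2}I_n-\Sigma^2\bigr)=0$, obtained above from (\ref{baac}) and (\ref{abc}), which forces every diagonal entry of $\Sigma$, i.e. every singular value $\sigma_i$ of $A$, to lie in $\{0,Q^{-1}\}$.

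First I would determine the nonzero entries and the ranks. By (\ref{aabb3}) the eigenvalues of $B$ are the numbers $\bigl(Q^{-2}-\sigma_i^2\bigr)^{1/2}$, which equal $Q^{-1}$ when $\sigma_i=0$ and $0$ when $\sigma_i=Q^{-1}$; hence each diagonal entry of the diagonal matrix $B$ is either $0$ or $Q^{-1}$, and the same reasoning applies to $C$. This yields the assertion that all nonzero entries are equal to $Q^{-1}$. Counting eigenvalues then gives $\mathrm{rank}(B)=\#\{i:\sigma_i=0\}=n-\mathrm{rank}(AA^t)=n-\mathrm{rank}(A)$ and likewise $\mathrm{rank}(C)=n-\mathrm{rank}(A^tA)=n-\mathrm{rank}(A)$; since $AA^t$ and $A^tA$ have the same rank, $B$ and $C$ have equal rank.

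The main obstacle is the singularity statement, which amounts to $\mathrm{rank}(A)\ge 1$, i.e. $A\neq 0$, a fact not contained in the eigenvalue bookkeeping above. I would establish it by contradiction: if $A=V_2V_1=0$, then (\ref{aabb1}) gives $B^2=C^2=Q^{-2}I_n$, so $B=C=Q^{-1}I_n$ (both being diagonal and positive semi--definite), whence $V_1V_1^t$ and $V_2V_2^t$ are nonsingular and therefore $V_1$ and $V_2$ are themselves nonsingular. But then the product $A=V_2V_1$ is nonsingular, contradicting $A=0$ (note $V_1,V_2\neq 0$ already by the normalization $\tr(V_kV_k^*)=1$ from (\ref{vv})). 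Hence $A\neq 0$, so $\mathrm{rank}(A)\ge 1$ and both $B$ and $C$ are singular, which completes all three assertions.
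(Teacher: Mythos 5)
Your proposal is correct and follows essentially the same route as the paper: it reads off the entries and ranks of $B$ and $C$ from (\ref{aabb3}) together with the fact that every singular value of $A$ lies in $\{0,Q^{-1}\}$, and your contradiction argument for singularity (if $A=0$ then $B,C$ and hence $V_1,V_2$ and hence $A$ would be non--singular) is just the contrapositive of the Remark the paper places right after the lemma. The only added value is that you spell out the rank count $\mathrm{rank}(B)=\mathrm{rank}(C)=n-\mathrm{rank}(A)$ explicitly, which the paper leaves implicit.
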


\begin{rem} If $B$ and $C$ were non--singular then so 
would be $V_1$, $V_2$ and hence also~$A$. But then 
equalities (\ref{abc}) could not hold.
\end{rem}

Recall that, in the rank one case (see the proof of
Theorem~\ref{ListUq}), all the entries of the 
diagonal matrix $VV^t$ are to be equal to~$Q^{-1}$.
Lemma~\ref{BCQ} shows that, in the rank two case, the 
situation is similar but somewhat more complicated 
because some of the diagonal entries of $B$ and $C$
can be equal to zero.

Note that if $|J_1,m_1 \rangle_{q=1}$, 
$|J_2,m_2 \rangle_{q=1}$ is a TL pair,
we must have $m_1 m_2 \leq 0$. Indeed,
suppose this is not so, for instances,
$m_1 <0$ and $m_2 < 0$. Then the first row
of $V_1$ and $V_2$ contains only zeroes and
hence $(B^2)_{11} = (AA^t)_{11} =0$ 
which contradicts relation~(\ref{aabb1}).
Thus, without a loss of generality, we can
assume that $m_1 \geq 0$ and $m_2 \leq 0$.
In this case, 
\begin{equation}\label{Bii} 
 B_{ii} =0  \qquad \text{ for all}\quad  i>2S+1-m_1 \,. 
\end{equation}
By Lemma~\ref{BCQ}, the remaining diagonal
entries of $B$ are equl to either zero or~$Q^{-1}$. 
In particular, the symmetry~(\ref{cgcsym}) implies that
$B_{2S+1-m_1,2S+1-m_1} = B_{11} = Q^{-1}$.

Let us show that, for $S \geq 9/2$, there exist no
$B$ and $C$ compatible with the requirements
imposed by Lemma~\ref{BCQ}. 
First, we note that, by (\ref{Bii}), $B$ has at most 
$2S+1-m_1$ non--zero entries and
hence, by Lemma~\ref{BCQ} and the normalization 
condition $\tr B =1$, the corresponding $Q$ must be 
an integer in the interval
\begin{equation}\label{nQn} 
  S + \frac{1}{2} \leq Q \leq 2S+1 -m_1 \,. 
\end{equation}
(The lower bound is imposed by Theorem~\ref{Qrs}).
Inequalities (\ref{nQn}) imply that $m_1 \leq S +1 /2$
and hence $2S+1 -m_1 \geq S +1 /2$.
Thus, for $S \geq 9/2$, matrix $B$ has at least five
entries which are not a priori zero. Moreover,
the first three of them, $B_{11}$, $B_{22}$, and $B_{33}$, 
are not related to each other by the symmetry~(\ref{cgcsym}).

By Lemma~\ref{FpCG}, $B_{11}$ is always non--zero.
Therefore, by Lemma~\ref{BCQ}, we have one of the
following cases: i) $B_{11}=Q^{-1}$, $B_{22}=B_{33}=0$;
ii) $B_{11}=B_{33}=Q^{-1}$, $B_{22}=0$; 
iii) $B_{11}=B_{22}=Q^{-1}$, $B_{33}=0$; 
iv) $B_{11}=B_{22}=B_{33}=Q^{-1}$.

Let now $f_1$ and $f_2$ stand for $f_1(J_1,m_1)$ and  
$f_2(J_1,m_1)$ defined in (\ref{fmp1}) and (\ref{fmp2}),
respectively. (Note that, for $S\geq 9/2$ and 
$m_1 \leq S+1/2$, we have $2S-m_1-p>0$ for $p=0,1,2$
so that the r.h.s. of (\ref{cgcmp}) is well defined.)
The case i) requires that $f_1=f_2=0$, which, by 
 (\ref{fmp2}), requires that $m_1 = 2S$. This
 is impossible since we have $m_1 \leq S +1 /2$.

 The case ii) requires that $B_{11}=B_{33}$
 which, by (\ref{cgcmp}), holds iff
\begin{equation}\label{jm00} 
  f_2^2= 4S(2S-1)(m_1 +1)(m_1 +2)(2S-m_1-1)(2S-m_1) .
\end{equation}
 On the other hand, by  (\ref{fmp1}), equality $B_{22}=0$, i.e.
 $f_1=0$, holds iff $J_1(J_1+1 )= 2S(m_1 +1)$.
In this case, (\ref{fmp2}) acquires the form 
$f_2= 2S(m_1+1)(m_1-2S)$.
Substituting the latter in (\ref{jm00}), we infer that  the case ii)
holds only if $J_1=2S-1$ and $m_1=2S-2$.
But the latter condition contradicts, for $S \geq 9/2$, the restriction
$m_1 \leq S +1 /2$.

In the cases iii) and iv), equality $B_{11}=B_{22}$ holds, 
as seen from (\ref{cgcmp}), only if 
\begin{equation}\label{jm22} 
f_1^2 = 2S(m_1+1)(2S-m_1) \,.
\end{equation}
Therefore, $f_2$ given by  (\ref{fmp2}) acquires
the following form:
\begin{equation}\label{jm33} 
f_2 = 2(m_1+1-2S) \, f_1\,.
\end{equation}
The case iii) requires that $f_2=0$ that is $m_1=2S-1$. But
this again  contradicts the restriction
$m_1 \leq S +1 /2$.

Finally, substituting  (\ref{jm22}) and (\ref{jm33}) into (\ref{jm00})
and taking into account that $m_1 \neq 2S-1, 2S$, we infer 
that, for $S \geq 9/2$, the case iv) holds only if $m_1=0$. 
In this case, relations (\ref{fmp1}) and (\ref{jm22}) imply
that either $J_1=0$ or $J_1(J_1+1)=4S$. 
Since $m_1=0$, we can assume that $m_2 \geq 0$ and,
repeating the same analysis for matrix $C$, we draw  
the conclusion that $m_2=0$ and either $J_2=0$ or $J_2(J_2+1)=4S$.
But the vectors determined by the matrices $V_1$ and $V_2$ 
must be orthogonal. Therefore, either $V_1$ or $V_2$
corresponds to $|0,0 \rangle_{q=1}$ and so 
it is a non--singular matrix, cf.~equation~(\ref{vabj0}). 
However, this is in contradiction with Lemma~\ref{BCQ}
which asserts that both $B$ and $C$ are singular.

Thus, we have proved that there exist no TL pairs 
$|J_1,m_1 \rangle_{q=1}$, $|J_2,m_2 \rangle_{q=1}$
for all \hbox{$S \geq 9/2$}.
A direct inspection shows that the only TL pairs 
for $S \leq 4$ are those 
listed in (\ref{jj1a})--(\ref{jj1c}). 
This completes the proof of the part~a).
 
Using explicit formulae (\ref{cgc1}) for the 
Clebsch--Gordan coefficients of $U_q(su_2)$, it is 
straightforward to check which of the pairs of vectors found
in the part~a) for $q=1$ remain TL pairs for other values
of~$q$.
\end{proof}

\vspace*{1mm}
\small{
{\bf Acknowledgements.} 
This work was supported by the grant MODFLAT of the European 
Research Council (ERC) and by the NCCR SwissMAP of the 
Swiss National Science Foundation,
and in part by the Russian Fund for Basic Research 
grants  14-01-00341 and 13-01-12405-ofi-m. 
}


\end{document}